\titlespacing*{\paragraph}{0pt}{2.3ex}{1em}
\newlength{\punctuationfootlength}
\newcommand{\punctuationfootnote}[2]{#2\settowidth{\punctuationfootlength}%
{#2}\hspace{-0.5\punctuationfootlength}\footnote{#1}}
\DeclareMathOperator{\vol}{vol}
\DeclareMathOperator{\polylog}{polylog}
\newtheorem{theorem}{Theorem}[section]
\newtheorem{corollary}[theorem]{Corollary}
\newtheorem{lemma}[theorem]{Lemma}
\theoremstyle{definition}
\newcommand{\EX}[1]{\mathbb E\left[ #1 \right]}
\newcommand{\eps}{\varepsilon}
\newcommand{\E}{\mathcal{E}}
\newcommand{\Pro}[1]{\mathrm{Pr}\left[ #1 \right]}
\newcommand{\eq}[1]{Equation~\eqref{eq:#1}}
\newcommand{\Oh}{O}
\newcommand{\pnt}[1]{\boldsymbol{#1}}
\newcommand{\p}{{\mathfrak{p}}}
\newcommand{\ie}{i.\,e.\xspace}
\newcommand{\whp}{w.\thinspace h.\thinspace p.\xspace}
\newcommand{\aas}{a.\thinspace a.\thinspace s.\xspace}
\newcommand{\dist}[3][]{%
  \lVert #2 - #3 \rVert\ifthenelse{\isempty{#1}}{}{_{#1}}}
\g@addto@macro\bfseries{\boldmath}
\def\footnoterule{\kern3\p@  \hrule width 3em \vspace{0.5ex}} 
\title{The Impact of Heterogeneity and Geometry\\on the Proof
  Complexity of Random Satisfiability\thanks{This work is partially
    funded by the project \emph{Scale-Free Satisfiability} (project
    no. 416061626) of the German Research Foundation (DFG).}}
\author[1]{Thomas Bläsius}
\author[2]{Tobias Friedrich}
\author[2]{Andreas Göbel}
\author[3]{\\Jordi Levy}
\author[2]{Ralf Rothenberger}
\affil[1]{\small  Karlsruhe Institute of Technology, Germany
  \texttt{<firstname.lastname@kit.edu>}}
\affil[2]{\small Hasso Plattner Institute, University of Potsdam, Germany
  \texttt{<firstname.lastname@hpi.de>}}
\affil[3]{\small IIIA, CSIC, Campus UAB, 08193 Bellaterra, Spain
  \texttt{<lastname@iiia.csic.es>}}
\date{}
\newif\iflong
\newcommand{\shortOrLong}[2]{%
  \iflong{#2}\else{#1}\fi%
}
\begin{document}

\maketitle

\begin{abstract}
  Satisfiability is considered the canonical NP-complete problem and
  is used as a starting point for hardness reductions in theory, while
  in practice heuristic SAT solving algorithms can solve large-scale
  industrial SAT instances very efficiently.  This disparity between
  theory and practice is believed to be a result of inherent
  properties of industrial SAT instances that make them tractable.
  Two characteristic properties seem to be prevalent in the majority
  of real-world SAT instances, heterogeneous degree distribution and
  locality.  To understand the impact of these two properties on SAT,
  we study the proof complexity of random $k$-SAT models that allow to
  control heterogeneity and locality.  Our findings show that
  heterogeneity alone does not make SAT easy as heterogeneous random
  $k$-SAT instances have superpolynomial resolution size.  This
  implies intractability of these instances for modern SAT-solvers.
  On the other hand, modeling locality with an underlying geometry
  leads to small unsatisfiable subformulas, which can be found within
  polynomial time.

  A key ingredient for the result on geometric random $k$-SAT can be
  found in the complexity of higher-order Voronoi diagrams.  As an
  additional technical contribution, we show an upper bound on the
  number of non-empty Voronoi regions, that holds for points with
  random positions in a very general setting.  In particular, it
  covers arbitrary $\p$-norms, higher dimensions, and weights
  affecting the area of influence of each point multiplicatively.  Our
  bound is linear in the total weight.  This is in stark contrast to
  quadratic lower bounds for the worst case.
\end{abstract}

\section{Introduction}
\label{sec:introduction}

Propositional satisfiability (SAT) is arguably among the most-studied
problems for both theoretical and practical research.  Nonetheless,
the gap between theory and practice is huge.  In theory, SAT is the
prototypical hard problem and hardness of other problems is shown via
reductions from SAT.  Achieving even a running time of $O(2^{cn})$ for
any $c < 1$ and $n$ variables would be a major breakthrough and a
somewhat surprising one at that.  On the contrary, reductions
to SAT are used to solve various problems appearing in
practice, as state-of-the-art SAT solvers can easily handle industrial
instances with millions of variables.

This theory--practice gap does not come from the lack of a sufficiently
precise theoretical analysis of modern SAT solvers.  They are actually
provably slow on most instances, i.e., drawing an instance uniformly
at random yields a hard instance with probability tending to~$1$ for
$n \to \infty$, if the clause-variable ratio is not too low or way too
high~\cite{shortProofs, manyExamples}.  Instead, the discrepancy comes
from the fact that industrial instances have properties that make them
easier than worst-case instances.  In 2014, \citet{v-bs-14} wrote that
\textquote{we have no understanding of why the specific sets of
  heuristics employed by modern SAT solvers are so effective in
  practice} and that we need this understanding to successfully
advance SAT solving further.

In recent years, scientists have been studying properties of
industrial SAT instances to gain this understanding.  By modeling SAT
instances as graphs, e.g., with edges indicating inclusion of
variables in clauses, one can benefit from the extensive research
conducted in the field of network science.  Two properties commonly
observed in real-world networks are heterogeneity and locality.
\emph{Heterogeneity} refers to the degree distribution, meaning that
vertices have strongly varying degrees.  In fact, one usually observes
a heavy-tailed distribution with many vertices of low degree and few
vertices of high degree.  A common assumption is a power-law
distribution~\cite{vhhk-snwd-18}, where the number of vertices of
degree $k$ is roughly proportional to $k^{-\beta}$.  The constant
$\beta$ is called the \emph{power-law exponent}.  \emph{Locality}
refers to the fact that edges tend to connect vertices that are close
in the sense that they remain well connected even when ignoring their
direct connection.  This can also be seen as having strong community
structures, with high connectivity within communities and loose ties
between communities.

With respect to these two properties, industrial SAT instances are
similar to real-world networks.  In many cases, the variable
frequencies are heterogeneous~\cite{abl-sisi-09} and there is a high
level of locality~\cite{agl-cssf-12}.  The latter is often measured in
terms of modularity.  Inspired by network science, researchers have
studied models that resemble industrial instances with respect to
these properties.  Particularly, Ansótegui et al.~\cite{abl-tilrsi-09}
introduced a power-law SAT model for heterogeneous instances, which
has been theoretically studied in terms of satisfiability
thresholds~\cite{fkrss-bstpldrs-17,fkrs-ptssf-17,fr-stnr2-19}.
A different model with heterogeneous degree distributions has been 
studied by \citet{cooper2SAT}, \citet{levy17}, and \citet{omelchenko2019}.
Moreover, \citet{gl-lrsi-17} introduced a model in which variable
weights lead to heterogeneity while an underlying geometry facilitates
locality. Comparing this to network models, the former
model~\cite{abl-tilrsi-09} is the SAT-variant of Chung-Lu
graphs~\cite{cl-ccrgg-02,cl-adrgged-02}.  The latter~\cite{gl-lrsi-17}
is based on the popularity-similarity model~\cite{pks-pvsgn-12}, which
is closely related to hyperbolic random graphs~\cite{kpk-hgcn-10} and
geometric inhomogeneous random graphs~\cite{bkl-sgirglt-17}.

Besides serving as somewhat realistic benchmarks for SAT
competitions~\cite{gl-dpssi-17}, these SAT models can be used to study
solver behavior depending on heterogeneity and locality.  One can
experimentally observe that a high level of heterogeneity improves the
performance of SAT solvers that also perform well on industrial
instances~\cite{abl-tilrsi-09,bfs-etcsf-19}.  Moreover, locality seems
very beneficial as solvers appear to implicitly use the locality of a
given instance~\cite{gl-lrsi-17}.  This coincides with the findings of
experiments on actual industrial instances that show that the locality
(measured using modularity) of an instance is a good predictor for
solver performance~\cite{ngf-icsssp-14,zmw-rcpssp-17,zmw-esmmssp-18}.

Up to date, there are no theoretical results supporting these
experimental observations.  On the contrary, it has been shown that
instances generated by the community attachment
model~\cite{gl-mbrsig-15}, which enforces a community structure, are
hard for modern SAT solvers~\cite{mds-hscs-16}.  With this paper, we
provide a theoretical foundation that matches the observations in
practice by studying the proof complexity of $k$-SAT instances (for
constant $k$) drawn from the power-law SAT model, and from a very
general model with underlying geometry.
The former was introduced by Ansótegui et al.~\cite{abl-tilrsi-09},
the latter is a generalization of the geometric model by
\citet{gl-lrsi-17} in the same way as geometric inhomogeneous random
graphs~\cite{bkl-sgirglt-17} are a generalization of hyperbolic random
graphs~\cite{kpk-hgcn-10}.
Our findings are that heterogeneous instances are hard asymptotically
almost surely\footnote{\emph{Asymptotically almost surely (\aas)}
  refers to a probability that tends to $1$ for $n\to \infty$.
  \emph{With high probability (\whp)} refers to the stronger
  requirement that the probability is in $1 - O(1/n)$. Additionally, we say that an event holds \emph{with overwhelming probability}, if for every $c > 0$ it holds with
probability at least $1 - O(n^{-c})$.} in that modern
SAT solvers require superpolynomial or even exponential running time
to refute unsatisfiable instances.  On the contrary, instances with a
high level of locality facilitated by an underlying geometry are
\aas{}\ easy to solve.  Our results focus on unsatisfiable instances,
i.e., on the case where a solver has to prove that no satisfying
assignment exists.  This is typically much harder than finding a
satisfying assignment, making the unsatisfiable regime arguably more
relevant.  Besides these results on SAT, we provide insights on the
complexity of weighted higher-order Voronoi diagrams in higher
dimensions, which is of independent interest.

The power-law and geometric models both mimic specific properties
observed in industrial instances while trying to make as little
additional assumptions as possible.  Though this makes the resulting
instances arguably more realistic than, e.g., instances drawn
uniformly at random, we want to stress that even the geometric model
is far from a perfect representation of industrial instances.  Thus,
our results do not claim to completely explain the efficiency of
modern SAT solvers on industrial instances.  However, to the best of
our knowledge, we provide the first theoretical result that links a
high level of locality to provably more tractable instances, which we
believe to be a first step towards closing the theory--practice gap.

\subsubsection*{Outline}

We state and discuss our main results and technical contributions in
Section~\ref{sec:results-techn-contr}.  Formal definitions are in
Section~\ref{sec:preliminaries}.  A short outline of our core
arguments is in Section~\ref{sec:core-argum-outl}\shortOrLong{.  The
  detailed proofs can be found in the full
  version~\cite{bfg-ihgpcrs-20}.}{, followed by the formal proofs:
  lower bounds for the power-law model in
  Section~\ref{sec:bipartite-expansion}, upper bounds on the
  complexity of Voronoi diagrams in
  Section~\ref{sec:compl-voron-diagr}, and upper bounds for the
  geometric SAT model in
  Section~\ref{sec:geometric-model-with-temperature}.  To not distract
  from the core arguments, results we use that were either known
  before or are straight-forward to prove are outsourced to
  Appendix~\ref{sec:basic-techn-tools}.}

\section{Results, Technical Contribution, Discussion}
\label{sec:results-techn-contr}

In this section, we state our results and discuss the contribution,
also in context to previous results.  To make the results
understandable, we briefly discuss, e.g., the probability
distributions over SAT formulas we study.  These are short and not
meant to be formal definitions.  For complete definitions, see
Section~\ref{sec:preliminaries}.

\subsection{Power-Law SAT}
\label{sec:results-power-law-sat}

The power-law SAT model has four parameters: the number of variables
$n$, the number of clauses $m$, the number $k$ of variables appearing
in each clause, and a power-law exponent~$\beta$.  To draw a formula,
power-law weights with exponent $\beta$ are assigned to the variables
and then each clause is generated independently by drawing $k$
variables without repetition using probabilities proportional to the
weights.  Each literal is negated with probability~$1/2$.

To discuss our first main contribution, let $\Phi$ be a formula drawn
from the power-law model with density at or above the satisfiability
threshold, i.e., $\Phi$ is unsatisfiable at least with constant probability. We show that,
although it is likely that $\Phi$ is unsatisfiable, it is
highly unlikely that modern SAT solvers can figure that out in
polynomial time. 
We prove this using resolution proof complexity.

Resolution is a refutation technique for propositional and first-order logic introduced by \cite{dp-cpqt-60}.
If an application of resolution steps leads to a contradiction, the formula is unsatisfiable.
The sequence of resolved clauses then serves as a proof for unsatisfiability, also called a refutation of the formula.
The resolution proof system exhibits a strong connection to modern Davis--Putnam--Logemann--Loveland (DPLL) and conflict-driven clause learning (CDCL) SAT solvers:
DPLL is polynomially equivalent to tree-like resolution~\cite{b-bsa-06} and CDCL with unlimited restarts is polynomially equivalent to resolution~\cite{pd-pcssre-11, bs-nssspe-14}.
Thus, the minimum number of steps necessary to derive a contradiction also yields a lower bound on the running time of solvers simulating the same process.
This number of steps is also called the \emph{resolution size} of a formula, \ie the minimum number of resolution steps necessary to arrive at a contradiction.
Equivalently, the width of a resolution proof is the size of the largest clause appearing in the proof and the \emph{resolution width} of a formula is the smallest width of any proof refuting that formula.
Interestingly, a lower bound $w$ on the resolution width of a formula also
implies a lower bound on its resolution size~\cite{shortProofs}: every resolution proof of a formula in $k$-CNF has size
$\exp(\Omega((w-k)^2/n))$ and every tree-like resolution proof has size
$2^{w-k}$.  

We will show a lower bound for the resolution width of unsatisfiable formulas drawn from the power-law model. 
Our results translate to lower bounds on the resolution size and thus to matching lower bounds on the running time of
conflict-driven clause learning (CDCL) solvers.  For
DPLL solvers, which use tree-like
resolution, the bounds are even stronger. 
We only consider the resolution width of unsatisfiable instances.
Thus, the probability bound we get is actually a conditional probability conditioned on instances being unsatisfiable.
Note that our bound does
not only hold above the satisfiability threshold, where a random formula $\Phi$ is
\aas{}\ unsatisfiable, but also \emph{at} the threshold, where it is
unsatisfiable with constant probability.

\begin{restatable*}{theorem}{mainone}
  \label{thm:mainone}
  Let $\Phi$ be an unsatisfiable random power-law $k$-SAT formula with
  $n$ variables, $m \in\Omega(n)$ clauses, $k \ge
  3$, and power-law exponent $\beta > \frac{2k-1}{k-1}$.  Let $\Delta = m /
  n$ be large enough so
  that~$\Phi$ is unsatisfiable at least with constant probability.  
	Let $\eps$, $\eps_1,\dots, \eps_3$ be constants with $\eps > 0$, $\eps_1 =
  \frac{k - \eps}{2} - 1 > 0$, $\eps_2 = (k - \eps)\cdot \frac{\beta -
    2}{\beta - 1} - 1 > 0$, and $0 < \eps_3 < (\frac{k}{2} - 1)\cdot
  \frac{\beta - 2}{\beta - 1} - 1$.  For the resolution width
  $w$ of $\Phi$, it holds \aas that:
  %
  \begin{enumerate}[(i)]
  \item\label{item:low-beta} If
    $\beta \in \left(\frac{2k - 1}{k - 1}, 3\right)$ and
    $\Delta \in o\left(n^{\eps_2}\right)$, then
    $w \in \Omega\left(n^{\eps_2 / \eps_1} \Delta^{-1/\eps_1}\right)$.
  \item\label{item:beta-3} If $\beta = 3$ and
    $\Delta \in o\left(n^{\eps_1} / \log^{1 + \eps_1} n\right)$, then
    $w \in \Omega\left(n \cdot \Delta^{-1/\eps_1} / \log^{1 +
        1/\eps_1} n\right)$.
  \item\label{item:high-beta-3} If $\beta > 3$ and
    $\Delta \in o\left(n^{\eps_1}\right)$, then
    $w\in \Omega\left(n\cdot \Delta^{-1/\eps_1}\right)$.
  \item\label{item:high-beta-other} If $\beta > \frac{2k - 2}{k - 2}$
    and $\Delta \in o\left(n^{\eps_3} / \log^{\eps_3} n\right)$, then
    $w \in \Omega\left(n \cdot \Delta^{-1/\eps_3}\right)$.
  \end{enumerate}
\end{restatable*}

\begin{wrapfigure}{r}{0.382\textwidth}
  \centering
  \scalebox{.87}{
    \input{power-law-plot.tex}}
  \vspace{-0.2cm}
  \caption{Exponent of the bound (\ref{item:low-beta}) in
    \thmref{thm:mainone}.  Dashed vertical lines show where the bound
    (\ref{item:high-beta-other}) takes over.}
  \label{fig:power-law-plot}
\end{wrapfigure}
%
The above lower bounds allow the density $\Delta$ to be super-constant
(even polynomial), which is asymptotically above the satisfiability
threshold.  For the sake of simplicity, assume $\Delta$ to be constant
in the following.  Starting at the bottom~(\ref{item:high-beta-3},
\ref{item:high-beta-other}), we get a linear bound for $w$ if $\beta$
is sufficiently large, i.e., greater than $3$ or $(2k - 2) / (k - 2)$.
For $\beta = 3$ (\ref{item:beta-3}), the bound is still almost linear.
Note that these results in particular imply exponential lower bounds
on the resolution size and thus on the running time of CDCL and DPLL.
For smaller $\beta$~(\ref{item:low-beta}), we get a polynomial bound
for the width with exponent $\eps_2/\eps_1$; see
Figure~\ref{fig:power-law-plot} for a plot with $\eps$ close to $0$.

Interestingly enough, our bounds only hold for power law exponents
$\beta>\frac{2k-1}{k-1}$.  This is complemented by a previous
result~\cite{fkrss-bstpldrs-17}, which shows that the satisfiability threshold of power-law random $k$-SAT is at density $\Delta=\Theta(1)$ for power law
exponents $\beta>\frac{2k-1}{k-1}$ and that asymptotically
almost surely instances with constant
constraint densities are trivially unsatisfiable for power law
exponents $\beta<\frac{2k-1}{k-1}$.  Thus, the resolution width is constant in the latter case.

Part~\ref{item:high-beta-other} of Theorem~\ref{thm:mainone} is
derived via lower bounds on the bipartite expansion of the
clause-variable incidence graph of these instances.  These results can
be of independent interest for hypergraphs with edge size $k$ and for
random $(0,1)$-matrices. Additionally, these expansion properties
yield lower bounds for the clause space complexity, which in turn
gives lower bounds on the tree-like resolution size of such
formulas\shortOrLong{.}{ (Section~\ref{sec:BipExpansion}).}  More
precisely, this results in an exponential lower bound on the tree-like
resolution size for $\beta>\frac{2k-3}{k-2}$. This is an improvement of the
bound obtained via resolution width.

It is interesting to note that this result on the non-geometric
model supports the claim that locality is a crucial factor for easy SAT 
instances. The lower bounds for the power-law model are solely based 
on the fact that every set of clauses covers a comparatively large 
set of variables.  In other words, we only use that there are no 
clusters of clauses with similar variables, i.e., we explicitly 
use the lack of locality.

\subsection{Geometric SAT}
\label{sec:results-geometric-sat}

The geometric model has the following parameters: $n$, $m$, and $k$
have the same meaning as for the power-law model.  Moreover, $w$ is a
weight function assigning each variable $v$ a weight $w_v$ and $T$ is
the so-called temperature that controls the strength of locality by
varying the impact of the geometry.  As underlying geometric space, we
use the $d$-dimensional torus
$\mathbb T^d = \mathbb R^d / \mathbb Z^d$ (see
Section~\ref{sec:preliminaries}) equipped with a $\p$-norm with
$\p \in \mathbb N^+ \cup \infty$.  
To draw a formula, the variables
and clauses are assigned random positions in $\mathbb T^d$.  Then, for
each clause, $k$ variables are drawn without repetition with
probabilities depending on the variable weight and on the geometric
distance between clause and variable.  In the extreme case of $T = 0$,
each clause deterministically includes the $k$ closest variables
(where closeness is a combination of geometric distance and weight),
while increasing the temperature $T$ increases the probability for the
inclusion of more distant variables.  For $T \to \infty$, the model
converges to uniform random SAT.  Note that the weights are a
parameter of the model and not drawn randomly.  We have the following
theorem, where $W$ denotes the sum of all variable weights.  The
condition on the weights is in particular satisfied by power-law
distributed weights.

\begin{restatable*}{theorem}{GeometricSat}
  \label{thm:geometric-sat}
  Let $\Phi$ be a formula with $n$ variables and $m \in \Theta(n)$
  clauses drawn from the weighted geometric model with ground space
  $\mathbb T^d$ equipped with a $\p$-norm, temperature $T < 1$,
  $W \in O(n)$, and $w_v \in O(n^{1-\eps})$ for every $v \in V$ and
  any constant $\eps > 0$.  Then, $\Phi$ contains \aas{}\ an
  unsatisfiable subformula of constant size, which can be found in
  $O(n \log n)$ time.
\end{restatable*}

To briefly explain how we prove this, consider a simplified version
where variables and clauses are points in the Euclidean plane and each
clause contains the $k$ variables geometrically closest to it
(temperature~$T = 0$).  Now consider the equivalence relation obtained
by defining two points of the plane equivalent if and only if they
have the same set of $k$ closest variables.  The equivalence classes
of this relation are the regions of the order-$k$ Voronoi diagram of
the variable positions.  With this connection, we can use upper bounds
on the complexity of order-$k$ Voronoi diagrams~\cite{l-nnvdp-82} to
prove the existence of small and easy to find unsatisfiable
subformulas.  We note that this result is of asymptotic nature.  In
particular for small densities, the number of variables $n$ has to be
very large before the instances actually get as easy as stated in
Theorem~\ref{thm:geometric-sat}.  Nevertheless, this results strongly
suggests that an underlying geometry makes SAT instances more
tractable.

To extend the above argument to the general statement in
Theorem~\ref{thm:geometric-sat}, we extend the complexity bounds for
order-$k$ Voronoi diagrams in various ways; see next section for more
details.  Moreover, for non-zero temperatures, clauses no longer
include exactly the $k$ closest variables but can, in principle,
consist of any set of $k$ variables.  However, we can show that, with
high probability, a linear fraction of clauses behaves as in the
$T = 0$ case.  We note that analyses of similar structures, such as
hyperbolic random graphs, are often restricted to the simpler but less
realistic $T = 0$ case, e.g.,
\cite{bffk-svcpt-20,bff-espsf-18,bfm-gcrhg-13,ms-dkrg-19}.  We believe
that our analysis provides insights on the non-zero temperature case
that can be helpful for such related questions.

We note that our results seem to contradict the
results of \citet{mds-hscs-16}, stating that
\begin{inparaenum}[(i)]
\item a strong community structure is not sufficient to have tractable
  SAT instances and that
\item the community attachment model~\cite{gl-mbrsig-15}, which
  enforces a community structure, generates hard instances.
\end{inparaenum}
However, at a closer look, this is not a contradiction at all.  Though
measuring the community structure, e.g., via modularity, is a good
indicator for locality, the concept of locality goes deeper.  If the
instance can be partitioned such that there are strong ties within
each partition and loose ties between partitions, then the instance
has a strong community structure.  However, to have a high level of
locality, this concept has to hierarchically repeat on different
levels of magnitude, i.e., there needs to be community structure
within each partition and between the partitions.  To state this
slightly differently, consider locality based on a notion of
similarity between objects (here: variables or clauses).  In this
paper, we use distances between random points in a geometric space as
a measure for similarity, which gives us a continuous range of more or
less similar objects.  In contrast to that, in the above mentioned
papers focusing on a flat community
structure~\cite{gl-mbrsig-15,mds-hscs-16}, similarity is a binary
equivalence relation: two objects are either similar or they are not.

\subsection{Voronoi Diagrams}
\label{sec:results-voronoi-diagrams}

Consider a finite set of points, called sites, in a geometric space.
The most commonly studied type of Voronoi diagram assumes the
2-dimensional Euclidean plane as ground space and has one Voronoi
region for each site, containing all points closer to this site than
to any other site.  We deviate from this default setting in four ways:
\begin{inparaenum}[(i)]
\item We allow an arbitrary constant dimension $d$, where the ground
  space is the torus or a hypercube in $\mathbb R^d$.
\item We consider the order-$k$ Voronoi diagram, which has for every
  subset $A$ of sites with $|A| = k$ a (possibly empty) Voronoi region
  containing all points for which $A$ are the $k$ nearest sites.  The
  number of non-empty order-$k$ Voronoi regions is called the
  \emph{complexity} of the diagram.
\item The sites have multiplicative weights that scale the influence
  of the different sites.  Without loss of generality, we assume the
  weights to be scaled such that the minimum is~$1$.
\item We allow the $\p$-norm for arbitrary
  $\p \in \mathbb N^+ \cup \infty$.
\end{inparaenum}

\begin{restatable*}{theorem}{ComplexityRandomVoronoiDiagram}
  \label{thm:complexity-random-voronoi-diagram}
  Let $S$ be a set of $n$ sites with minimum weight $1$, total weight
  $W$, and random positions on the $d$-dimensional torus equipped with
  a $\p$-norm, for constant~$d$.  For every fixed $k$, the expected
  number of regions of the weighted order-$k$ Voronoi diagram of $S$
  is in $O(W)$.  The same holds for random sites in a hypercube.
\end{restatable*}

To set this result into context, we briefly discuss previous work on
the complexity of Voronoi diagrams in different settings.  See the
book by \citet{akl-vddt-13} for a general overview on Voronoi
diagrams.  To this end, we use the following theorem that relates the
complexity in terms of Voronoi regions (which is what we are concerned
with in this paper) with the complexity in terms of
vertices\punctuationfootnote{Although the Voronoi regions are not
  necessarily polytopes in the weighted setting, we adopt the notion
  for polytopes and call the corners of Voronoi regions
  \emph{vertices}.  I.e., vertices are the $0$-dimensional elements
  (a.k.a.\ points) of the boundary, where higher-dimensional elements
  (a.k.a.\ edges, faces, etc.) intersect.  They are represented as
  small black dots in Figure~\ref{fig:weighted-voronoi-complexity}.}.

\begin{restatable*}{theorem}{VoronoiComplVerticesRegions}
  \label{thm:complexity-vertices-vs-regions}
  Let $S$ be a set of $n$ weighted sites in general position in
  $\mathbb R^d$ equipped with a $\p$-norm.  If the order-$k$ Voronoi
  diagram has $\ell$ vertices, then the order-$(k + d)$ Voronoi
  diagram has $\Omega(\ell)$ non-empty regions.
\end{restatable*}

We note that, using insights from previous work, this theorem is not
hard to prove.  One basically has to generalize the result by
\citet{l-vdlrd-96} bounding the number of $d$-spheres going through
$d + 1$ points in $d$-dimensional space to weighted sites, and then
observe how the Voronoi diagram changes in the construction by
\citet{l-nnvdp-82} for $d = 2$, when going from order-$k$ to
order-$(k + 1)$.  However, we are not aware of previous work stating
this connection between vertices and non-empty regions in higher
orders explicitly.

The four above-mentioned generalizations of the basic Voronoi diagram
(higher dimension, higher order, multiplicative weights, and different
$\p$-norms) have all been considered before.  However, to the best of
our knowledge, not all of them together.  

\begin{figure}
  \centering
  \includegraphics{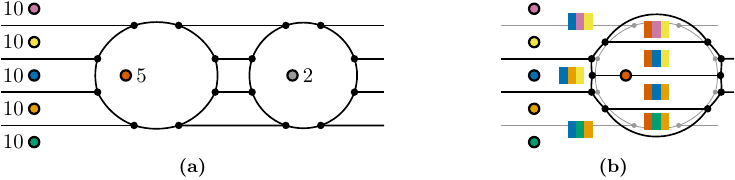}
  \caption{\textbf{(a)}~Weighted Voronoi diagram (order-$1$) of the
    colored sites.  Continuing the construction with $n/2$ high-weight
    sites on the left and $n/2$ low-weight sites towards the right
    yields $\Theta(n^2)$ vertices (small black dots).  Note that each
    vertex lies on the boundary of three regions and has thus equal
    weighted distance to its three closest sites.  \textbf{(b)}~The
    order-$3$ Voronoi diagram for the same sites (excluding one).  The
    colored boxes indicate the three closet sites.  The
    \mbox{order-$1$} diagram is shown in the background.  Each
    order-$1$ vertex lies in the interior of an order-$3$ region as it
    has equal weighted distance to its three closest sites.  As at
    most two order-$1$ vertices share an order-$3$ region, we get
    $\Omega(n^2)$ order-$3$ regions.
    Theorem~\ref{thm:complexity-vertices-vs-regions} generalizes this
    observation.}
  \label{fig:weighted-voronoi-complexity}
\end{figure}

Higher-order Voronoi diagrams have been introduced by Shamos and
Hoey~\cite{sh-cpp-75}.  \citet{l-nnvdp-82} showed that the order-$k$
Voronoi diagram in the plane (unweighted with Euclidean metric) has
complexity $O(k\,(n - k))$ (in terms of number of regions), which is
linear for constant $k$.  For the $1$- and $\infty$-norm,
\citet{lpl-osall-11} improved this bound to
$O(\min\{k\,(n - k), (n - k)^2\})$.  Closely related to the $1$-norm,
\citet{gllw-hocvd-12} showed similar complexity bounds for
higher-order Voronoi diagrams on transportation networks of
axis-parallel line segments.  \citet{bck-choavd-15} show an upper
bound of $2k\,(n - k)$ for the much more general setting of abstract
Voronoi diagrams.  There, the metric is replaced by curves separating
pairs of sites such that certain natural (but rather technical)
conditions are satisfied.  One obtains normal Voronoi diagrams when
using perpendicular bisectors for these curves.  This in particular
shows that the $2k\,(n - k)$ bound on the number of regions in the
order-$k$ Voronoi diagram holds for arbitrary $\p$-norms in
$2$-dimensional space and for the hyperbolic plane.  As the hyperbolic
plane is closely related to $1$-dimensional space with sites having
multiplicative power-law weights~\cite{bkl-sgirglt-17}, we suspect
that the bound by \citet{bck-choavd-15} also covers this case.

In general one can say that higher-order Voronoi diagrams of
unweighted sites in $2$-dimensional space are well-behaved in that
they have linear complexity.  This still holds true for arbitrary
$\p$-norms.  However, this picture changes for weighted sites or
higher dimensions.

Voronoi diagrams with multiplicative weights were first considered by
\citet{b-wtp-80}\footnote{In this paper, Voronoi regions are called
  \emph{Thiessen polygons}.} due to applications in economics.  Beyond
that, multiplicatively weighted Voronoi diagrams have applications in
sensor networks~\cite{cgvc-rcdsn-06}, logistics~\cite{gncs-mwvdald-06}
and the growth of crystals~\cite{dd-cvdd-06}.  However, even in the
most basic setting of 2-dimensional Euclidean space and order $1$,
weighted Voronoi diagrams can have quadratic
complexity~\cite{ae-oacwvdp-84} (in terms of number of vertices).
This comes from the fact that Voronoi cells are not necessarily
connected; see Figure~\ref{fig:weighted-voronoi-complexity}a for the
construction of Aurenhammer and Edelsbrunner~\cite{ae-oacwvdp-84} that
proves the lower bound.  With
Theorem~\ref{thm:complexity-vertices-vs-regions}, and as illustrated
in Figure~\ref{fig:weighted-voronoi-complexity}, this implies that
even the order-$3$ Voronoi diagram of weighted sites in 2-dimensional
Euclidean space has a quadratic number of non-empty regions.  As a
special case, Theorem~\ref{thm:complexity-random-voronoi-diagram}
shows that this complexity is only linear in the total weight for
sites positioned randomly in the unit square.  Moreover, this also
implies that the number of vertices of the corresponding order-$1$
Voronoi diagram is linear.  This nicely complements the result by
\citet{hr-crwmvd-15}, who show that the expected complexity of
order-$1$ Voronoi diagrams of sites in $2$-dimensional Euclidean space
with random weights is $O(n \polylog n)$.  Only recently,
\citet{fr-lecdmvd-20} showed that sites with weights chosen randomly
form a constant-sized set of possible weights yield Voronoi diagrams
with linear complexity.  Moreover, more closely related, they show
that the Voronoi diagram of sites with arbitrary weights and with
random positions chosen in the unit square has linear complexity in
expectation.  We are not aware of any results concerning the
complexity of Voronoi diagrams when combining multiplicative weights
with higher dimension, higher order or other norms.

For higher dimensions, even normal (first order, unweighted) Voronoi
diagrams in $3$-dimensional Euclidean space can have
$\Theta(n^2)$~\cite{k-cdvd-80,s-nfhdvd-87} vertices.
Theorem~\ref{thm:complexity-vertices-vs-regions} thus implies that the
order-$4$ Voronoi diagram has a quadratic number of non-empty regions.
Moreover, the complexity of higher-order Voronoi diagrams in higher
dimensions has been considered before by \citet{m-lavd-91}, who obtains
polynomial bounds with the degree of the polynomial depending on the
dimension.  Our Theorem~\ref{thm:complexity-random-voronoi-diagram} in
particular shows that this complexity is much lower, namely linear,
for the hypercube with randomly positioned sites.  Moreover, via
Theorem~\ref{thm:complexity-vertices-vs-regions} this gives a linear
bound on number of vertices in the normal order-$1$ Voronoi diagram in
higher dimensions.  We note that this special case of our result
coincides with a previous result by \citet{bdhs-accvd-05}.  Similarly,
\citet{d-hdvdlet-91} showed that sites drawn uniformly from a higher
dimensional unit sphere (instead of a hypercube) yield Voronoi
diagrams of linear complexity in expectation.  Moreover, due to
\citet{gn-acvdr-03} and \citet{dhr-ecvdt-16}, the same is true for
random sites on $3$-dimensional polytopes and random sites on
polyhedral terrains, respectively.  Thus, though higher dimensional
Voronoi diagrams can be rather complex in the worst case, these
results indicate that one can expect most instances to be rather well
behaved.  An alternative explanation of why the complexity of
practical instances is lower than the worst-case indicates is given by
\citet{e-npsch-01,e-dpshsdt-02}, who studies the complexity of
$3$-dimensional Voronoi diagrams depending on the so-called spread of
the sites.

The above results for higher dimensional Voronoi diagrams consider the
Euclidean norm.  For general $\p$-norms, \citet{l-vdlrd-96} showed
that the complexity of the Voronoi diagram is bounded by $O(n^c)$,
where $c$ is a constant independent of $\p$ but dependent on the
dimension $d$.  With the same argument as above,
Theorem~\ref{thm:complexity-random-voronoi-diagram} together with
Theorem~\ref{thm:complexity-vertices-vs-regions} implies a linear
bound for this complexity that holds in expectation.  Moreover,
\citet{bsty-vdhdc-98} show more precise bounds of
$\Theta(n^{\lceil d/2 \rceil})$ and $\Theta(n^2)$ for the $\infty$-
and the $1$-norm, respectively.  Again, our result implies linear
bounds for random sites in this setting.

\section{Formal Definitions}
\label{sec:preliminaries}

Here we provide formal definitions for all concepts we use throughout
the paper, including the power-law and geometric random SAT models, Resolution,
and Voronoi diagrams.

\subsubsection*{$k$-SAT}

We let $x_1,x_2,\ldots,x_n$ denote Boolean variables that can be either true or false.
A clause is a disjunction of literals $\ell_1 \lor \ldots \lor \ell_k$, where each literal assumes a (possibly negated) variable. 
For a literal $\ell_i$ let $|\ell_i|$ denote the variable of the literal.
A formula $\Phi$ in conjunctive normal form (CNF) is a conjunction of clauses $c_1 \land \ldots \land c_m$ and a formula in $k$-CNF is a conjunction of clauses, where each clause contains exactly three distinct literals.
We conveniently interpret a Boolean formula in CNF as a set of clauses and a clause $c$ both as a Boolean formula and as a set of literals.
We say that $\Phi$ is satisfiable if there exists an assignment of variables $x_1, \ldots, x_n$ such that the formula evaluates to true.

\subsubsection*{Power-Law Random $k$-SAT}

The power-law model can be defined via the more general
\emph{non-uniform model}.  To draw a $k$-SAT formula from the
non-uniform model, let $n$ and $m$ be the number of variables and
clauses, respectively, and let $w_1, \dots, w_n$ be variable weights.
We sample $m$ clauses independently at random.  Each clause is sampled
by drawing $k$ variables without repetition with probabilities
proportional to their weights.  Then each of the $k$ variables is
negated independently at random with probability $1/2$.

The \emph{power-law model} for a power-law exponent $\beta > 2$ is an
instantiation of the non-uniform model with discrete power-law weights
\begin{equation*}
  w_i = i^{-\frac{1}{\beta - 1}}.
\end{equation*}

\subsubsection*{Resolution}

The resolution proof system uses two rules, the resolution rule and the weakening rule.
Given two clauses $a\vee x$ and $b\vee \overline{x}$, where $a$ and $b$ are clauses and $x$ is a Boolean variable, the resolution rule states
\[\frac{a\vee x\hspace{5ex}b\vee \overline{x}}{a\vee b},\]
\ie the clause $a\vee b$ is a logical consequence of the two given clauses.
The weakening rule states that for any two clauses $a$ and $b$ it holds that
\[\frac{a}{a\vee b},\]
\ie if $a$ holds, then $a\vee b$ holds as well.
For a formula $\Phi=\left\{c_1,c_2,\ldots,c_m\right\}$ in CNF a resolution \emph{derivation} of a clause $c$ from $\Phi$ is a sequence of clauses $\left(d_1,d_2,\ldots,c\right)$ such that each clause $d_i$ is either one of the initial clauses $c_1,\ldots,c_m$ or derived from previous clauses with either the resolution rule or the weakening rule.
A resolution \emph{refutation} is a resolution derivation of the empty clause.
The \emph{size} of a derivation is the number of clauses it contains.
The size of a formula in CNF is the size of a smallest refutation for it. 
The \emph{width} of a derivation is the size of the largest clause in it.
The width of a formula in CNF is the smallest width of any refutation for it.

\subsubsection*{Graph Representation and Expansion}

Let $\Phi$ be a SAT-formula with variable set $V$ and clause set $C$.
The \emph{clause-variable incidence graph} $G(\Phi)$ of $\Phi$ has
vertex set $C \cup V$, with an edge between a clause and a variable if
and only if the clause contains the variable.  Clearly, $G(\Phi)$ is
bipartite. It is an \emph{$(r, c)$-bipartite expander} if for all
$C' \subset C$ with $|C'| \le r$ it holds that
$|N(C')| \ge (1 + c)\cdot |C'|$, where $N(C')$ is the neighborhood of $C'$.

\subsubsection*{Geometric Ground Space}

We regularly deal with points with random positions in some geometric
space.  With \emph{random point}, we refer to the uniform distribution
in the sense that the probability for a point to lie in a region $A$
is proportional to its volume $\vol(A)$.  For this to work, the volume
of the ground space has to be bounded.  Canonical options are, e.g., a
unit-hypercube or a unit-ball.  These, however, lead to the necessity
of special treatment for points close to the boundary, which makes the
analysis more tedious without giving additional insights.  To
circumvent this, we use a torus as ground space, which is completely
symmetric.

The \emph{$d$-dimensional torus} $\mathbb T^d$ is defined as the
$d$-dimensional hypercube $[0, 1]^d$ in which opposite borders are
identified, i.e., a coordinate of $0$ is identical to a coordinate of
$1$\punctuationfootnote{For convenience reasons, we sometimes work
  with $[-0.5, 0.5]^d$ instead of $[0, 1]^d$.}.
It is equipped with the $\p$-norm as metric, for arbitrary but fixed
$\p \in \mathbb N^+ \cup \{\infty\}$.  To define it formally for the
torus, let $\pnt{p} = (p_1, \dots, p_d)$ and
$\pnt{q} = (q_1, \dots, q_d)$ be two points in $\mathbb T^d$.  The
circular difference between the $i$th coordinates is
$|p_i - q_i|_{\circ} = \min\{|p_i - q_i|, 1 - |p_i - q_i|\}$.  With
this, the \emph{distance} between $\pnt{p}$ and $\pnt{q}$ is
\begin{align*}
  \dist{\pnt{p}}{\pnt{q}} = 
  \begin{cases}
    \sqrt[\p]{\sum_{i \in [d]} \left|p_i - q_i\right|_{\circ}^\p}&\text{
      for } \p \not= \infty,\\
    \max_{i \in [d]} \{\left|p_i - q_i\right|_{\circ}\}&\text{
      for } \p = \infty.
  \end{cases}
\end{align*}

\subsubsection*{Random Points}

We obtain the uniform distribution for a point
$\pnt{p} = (p_1, \dots, p_d)$ by drawing each coordinate $p_i$
uniformly at random from $[0, 1]$.  For two random points $\pnt{p}$
and $\pnt{q}$, their distance $\dist{\pnt{p}}{\pnt{q}}$ is a random
variable.  Let $F_{\mathrm{dist}}(x)$ be its \emph{cumulative
  distribution function (CDF)}, i.e.,
$F_{\mathrm{dist}}(x) = \Pro{\dist{\pnt{p}}{\pnt{q}} \le x}$.  To
determine $F_{\mathrm{dist}}(x)$, fix the position of $\pnt{p}$.
Then, for $x \le 0.5$, the set of points of distance at most $x$ to
$\pnt{p}$ is simply the ball $B_{\pnt{p}}(x)$ of radius $x$ around
$\pnt{p}$, yielding
\begin{align}
  \label{eq:cdf-distance} 
  F_{\mathrm{dist}}(x)
  &= \vol(B_{\pnt{p}}(x))\\
  &= \Pi_{d, \p}\cdot x^d \quad \text{for } 0 \le x \le 0.5, \notag\\
  \text{with } \Pi_{d, \p}
  &= \frac{\left(2\Gamma\left(1/\p +
    1\right)\right)^d}{\Gamma\left({d}/{\p} + 1\right)},\notag
%
\end{align}
where $\Gamma$ is the gamma function.  Note that $\Pi_{d, \p}$ only
depends on $d$ and $\p$ but is constant in $x$.  Moreover
$\Pi_{2, 2} = \pi$ (thus the name $\Pi$), and
$\Pi_{d, \infty} = \lim_{\p \to \infty} \Pi_{d, \p} = 2^d$.  For
distances $x > 0.5$, the formula for $F_{\mathrm{dist}}(x)$ is more
complicated (we basically have to subtract the parts reaching out of
the hypercube).  However, for our purposes, it suffices to know
$F_{\mathrm{dist}}(x)$ for $x \le 0.5$ and use the obvious bound
$F_{\mathrm{dist}}(x) \le 1$ for $x > 0.5$.

\subsubsection*{Weighted Points and Distances}

We regularly consider a fixed set of $n$ points equipped with weights,
which we call \emph{sites}.  For a site $\pnt{s}_i$ with weight $w_i$,
the \emph{weighted distance} of a point $\pnt{p}$ to $\pnt{s}_i$ is
$\dist{\pnt{s}_i}{\pnt{p}}/w_i^{1/d}$.  For a fixed value $x$, the set
of points with weighted distance at most $x$ are the points with
$\dist{\pnt{s}_i}{\pnt{p}} \le x w_i^{1/d}$.  Note that the volume of
this set is proportional to $w_i$.  Intuitively, the region of
influence of a site is thus proportional to its weight.  To simplify
notation in some places, we define \emph{normalized weights}
$\omega_i = w_i^{1/d}$\punctuationfootnote{We note, in the context of
  weighted Voronoi diagrams, it is common to only use the normalized
  weights (just calling them ``weights'').  In the context of random
  networks, however, the non-normalized weights are more common.  As
  both notions have their advantages in different situations, we use
  both.}.

\subsubsection*{Geometric Random k-SAT}

In the \emph{geometric model}, we sample positions for the variables
and clauses uniformly at random in the $d$-dimensional torus
$\mathbb T^d$.  For $v \in V$ and $c \in C$, we use $\pnt{v}$ and
$\pnt{c}$ to denote their positions, respectively.  Let
$w_1, \dots, w_n$ be \emph{variable weights} that are normalized such
that the smallest weight is $1$.  Moreover, let
$W = \sum_{v = 1}^n w_v$.  For a clause $c$ and a variable $v$, define
the \emph{connection weight}
\begin{equation*}
  X(c, v) = \left(\frac{w_v}{\dist{\pnt{c}}{\pnt{v}}^d}\right)^{1/T}.
\end{equation*}
This is the reciprocal of the weighted distance between $\pnt{v}$ and
$\pnt{c}$ raised to the power $d/T$.  The $k$ variables for the clause
$c$ are drawn without repetition with probabilities proportional to
$X(c, v)$.  Among all possible combinations, we choose which of the
$k$ variables to negate uniformly at random, without repetition if
possible, i.e., we only get the same clause twice if we have more than
$2^k$ clauses with the same variable set.  For $T \to 0$ the model
converges to the threshold case where $c$ contains the $k$ variables
with smallest weighted distance.

The connection weight $X({c}, {v})$ is a random variable.  We denote
the CDF of $X({c}, {v})$ with $F_X(x)$.  With the CDF for the distance
between two random points in Equation~\eqref{eq:cdf-distance}, we
obtain the following\shortOrLong{~\cite{bfg-ihgpcrs-20}:}{; see
  Lemma~\ref{lem:cdf-connection-weights} for a proof:}
\begin{equation}
  \label{eq:cdf-probability-weight}
  F_X(x) = 1 - \Pi_{d, \p} w_v x^{-T} \quad \text{for } x \ge
  \left(2^d w_v\right)^{1/T}.
\end{equation}

\subsubsection*{Voronoi Diagrams}

Let $S = \{\pnt{s}_1, \dots, \pnt{s}_n\}$ be a set of sites with
weights $w_1, \dots, w_n$.  A point $\pnt{p}$ belongs to the
\emph{(open) Voronoi region} of a site $\pnt{s}_i$ if its weighted
distance to $\pnt{s}_i$ is smaller than its weighted distance to any
other site.  The collection of all Voronoi regions is the
\emph{Voronoi diagram} of $S$.  \emph{Order-$k$ Voronoi regions} are
defined analogously for subsets $A \subseteq S$ with $|A| = k$, i.e.,
the region of $A$ contains a point $\pnt{p}$ if and only if the
weighted distances of $\pnt{p}$ to all sites in $A$ is smaller than
the weighted distance to any site not in $A$.  More formally,
$\pnt{p}$ belongs to the order-$k$ Voronoi region of $A$ if there
exists a radius $r$ such that
$\dist{\pnt{s}_i}{\pnt{p}} \le \omega_i r$ for $\pnt{s}_i \in A$ and
$\dist{\pnt{s}_i}{\pnt{p}} > \omega_i r$ for $\pnt{s}_i \notin A$.
Note that the order-$k$ Voronoi region of $A$ is potentially empty.
The \emph{order-$k$ Voronoi diagram} is the collection of all
non-empty order-$k$ Voronoi regions.  Its \emph{complexity} is the
number of such non-empty regions.

\section{Core Arguments}
\label{sec:core-argum-outl}

\shortOrLong{%
  Here we only briefly discuss the core arguments.  See the full version
  for detailed proofs~\cite{bfg-ihgpcrs-20}.%
}{%
  Before delving into the technical details of our proofs in the
  subsequent sections, we briefly discuss the core arguments.%
}

\subsection{Power-Law SAT}\label{arg:pl}

We use a framework that \citet{shortProofs} introduced for the uniform
SAT model.  We prove lower bounds for the resolution width, which
imply lower bounds for the resolution size and the tree-like
resolution size, which then imply lower bounds for the running times
of CDCL and DPLL solvers, respectively.

To bound the resolution width, we essentially have to show that
different clauses do not overlap too heavily.  Specifically, a formula
has resolution width $\Omega(w)$ if
\begin{inparaenum}[(1)]
\item every set $S$ of at most $w$ clauses contains at least $|S|$
  different variables and
\item every set $S$ of $\frac{1}{3}w\le|S|\le\frac{2}{3}w$ clauses contains at 
least a constant fraction of unique variables.
\end{inparaenum}

We achieve the bounds in
Theorem~\ref{thm:mainone}~(\ref{item:low-beta}--\ref{item:high-beta-3})
by showing the above two properties directly.
For the bound in
Theorem~\ref{thm:mainone}~(\ref{item:high-beta-other}), we first
observe that both properties are fulfilled if the clause-variable
incidence graph of a $k$-CNF formula $\Phi$ has high enough bipartite
expansion.  Recall the definition of bipartite expansion from
Section~\ref{sec:preliminaries} and note how the requirement that the
neighborhood of clause vertices is large resembles the requirement
that clauses do not overlap too heavily.  We show that $G(\Phi)$ is a
bipartite expander asymptotically almost surely if $\Phi$ is drawn
from the power-law model, which yields the lower bound of
Theorem~\ref{thm:mainone}~(\ref{item:high-beta-other}).

Compared to the uniform case, the weights make the properties required
for the lower bounds less likely.  Variables with high weight appear
in many clauses, making the clauses less diverse.  Thus, it is less
likely that every clause set covers a large variety of variables.

\subsection{Geometric SAT}
\label{sec:core-arguments-geometric-sat}

To explain the core idea of our proof, consider the following
simplified geometric model.  Map $n$ variables and $m$ clauses to
distinct points in the 2-dimensional Euclidean plane (randomly or
deterministically).  Build the SAT instance by including in each
clause $c$ the $k$ variables with the smallest geometric distance to
$c$.
Now consider the order-$k$ Voronoi diagram defined by the positions of
the $n$ variables.  As a clause $c$ contains the $k$ closest
variables, the $k$ variables contained in $c$ are exactly the $k$
variables defining the Voronoi region of $c$'s position.  Independent
of the positions of the $n$ variables, there are only at most
$2k\,(n - k)$ regions in the order-$k$ Voronoi
diagram~\cite{bck-choavd-15}.  Thus, if we have at least
$2^k 2k\,(n - k)$ clauses, then, by the pigeonhole principle, at least
one Voronoi region contains $2^k$ clauses.  As $k$ is considered to be
a constant, this number of clauses is linear in $n$, i.e., we still
have constant density.  Moreover, as repeating the same clause (with
the same variable negations) is avoided whenever possible, there is a
set of $k$ variables that has a clause for every combination of
literals.  Thus, we have an unsatisfiable subformula of constant size
$2^k$, which implies low proof complexity.

This result can be varied and strengthened in multiple ways, e.g., by
allowing weighted variables, a higher dimensional ground space, or by
softening the requirement that every clause contains the $k$ closest
variables (model with higher temperature).
In the following, we briefly discuss how these generalizations can
be achieved.

\subsubsection*{Abstract Geometric Spaces}

The result by~\citet{bck-choavd-15} on the complexity of order-$k$
Voronoi diagrams is very general in the sense that it holds for
abstract Voronoi diagrams.  Roughly speaking, abstract Voronoi
diagrams are based on separating curves between pairs of points that
take the role of perpendicular bisectors.  In this way, one can
abstract from the specific geometric ground space.  Whether a point
$\pnt{p}$ is closer to site $\pnt{s}_1$ or to site $\pnt{s}_2$ is no
longer determined by comparing distances $\dist{\pnt{s}_1}{\pnt{p}}$
and $\dist{\pnt{s}_2}{\pnt{p}}$ but by the curve separating
$\pnt{s}_1$ from $\pnt{s}_2$.  For this to work, the separating curves
have to satisfy a handful of basic axioms.  These are for example
satisfied by perpendicular bisectors in the Euclidean or the
hyperbolic plane.  Thus, the above argumentation for low proof
complexity directly carries over to the hyperbolic plane, or more
generally, to any abstract geometric space satisfying the axioms.

\subsubsection*{Lower Density Via Random Clause Positions}

Assume the variable positions are fixed.  Now choose random positions
for the clauses and observe in which regions of the order-$k$ Voronoi
diagram they end up.  We want to know whether there is a region that
contains at least $2^k$ clauses.  This comes down to a balls into bins
experiment.  Each Voronoi region is a bin and each clause is a ball.
Thus, there are $O(n)$ bins and $m$ balls.  Moreover, we are
interested in the maximum load, i.e., the maximum number of balls that
land in a single bin.  Due to a result by \citet{rs-bb-98}, the
maximum load is \aas{}\ in $\Omega(\frac{\log n}{\log\log n})$ if we
throw $\Omega(\frac{n}{\polylog n})$ balls.  Thus, even for a slightly
sublinear number of balls, the maximum load is superconstant.  We note
that this result holds for uniform bins.  In our case, we have
non-uniform bins, as the probability for a clause to end up in a
particular Voronoi region is proportional to the area of the region.
However, it is not hard to see that the result by \citet{rs-bb-98}
remains true for non-uniform bins; see \shortOrLong{the full
  version~\cite{bfg-ihgpcrs-20}.}{Section~\ref{sec:balls-into-heter}.}
Thus, even if the number of clauses $m$ is slightly sublinear in the
number of variables $n$, we get a small unsatisfiable subformula
asymptotically almost surely if the Voronoi diagram has low
complexity.

\subsubsection*{Positive or Negative Literals with Repetition}

Above we assumed that we get the exact same clause with coinciding
negations twice only if we already have more than $2^k$ clauses with
the same set of $k$ variables.  Although this is arguably a reasonable
assumption for the model, we can make a similar argument without it.
Assume instead that for each variable, we choose the positive and
negative literal uniformly at random, independently of all other
choices.  Moreover, assume for an increasing function $f$, that there
are $f(n)$ clauses that have the same set of $k$ variables.  With the
above balls into bins argument, we, e.g., have
$f(n) \in \Omega(\frac{\log n}{\log\log n})$.  Then the probability
that there is a combination of positive and negative literals that we
did not see at least once is at most $2^k (1 - 2^{-k})^{f(n)}$.  This
probability goes to $0$ for $n\to \infty$, i.e., \aas, there is an
unsatisfiable subformula of constant size~$2^k$.

\subsubsection*{Higher Dimension and Weighted Variables}

At the core of our argument lies the fact that order-$k$ Voronoi
diagrams have linear complexity in the plane.  As already mentioned in
Section~\ref{sec:results-voronoi-diagrams}, this is no longer true for
order-$k$ Voronoi diagrams in higher dimensions or if the variables
have multiplicative weights.  \shortOrLong{}{A formal argument for why
  this property breaks is in Section~\ref{sec:lower-bound-voronoi}.}
However, for sites distributed uniformly at random, we
\shortOrLong{can show}{show in Section~\ref{sec:expect-numb-regions}}
that the complexity can be expected to be linear in the total weight,
even in the more general setting.  Thus, using that the variables have
random positions (a requirement we did not need before), we can apply
the above argument to obtain low proof complexity.

\subsubsection*{Non-Zero Temperature}

Non-zero temperatures make it so that clauses do not necessarily
contain the $k$ closest variables.  Instead, variables are included
with probabilities depending on the distance.  Thus, we cannot simply
look at the order-$k$ Voronoi diagram to determine which variables are
contained in a given clause.  \shortOrLong{To resolve this,}{We
  resolve this issue in
  Section~\ref{sec:geometric-model-with-temperature}.  For this,} we
call a clause \emph{nice}, if it behaves as it would in the $T = 0$
case, i.e., if it includes the $k$ closest variables.  \shortOrLong{We
  can show that,}{In Section~\ref{sec:expected-number-nice-clause} we
  show that,} in expectation, a constant fraction of clauses is
actually nice.  Moreover, \shortOrLong{we can show}{in
  Section~\ref{sec:conc-nice-claus}, we show} that the number of nice
clauses is concentrated around its expectation.  With this, we can
apply the same arguments as before to only the nice clauses, of which
we have linearly many, to obtain a low proof complexity.

\subsection{Voronoi Diagrams}
\label{sec:voronoi-diagrams}

The worst-case lower bounds for the complexity of order-$k$ Voronoi
diagrams follow from existing lower bounds on the number of vertices
together with Theorem~\ref{thm:complexity-vertices-vs-regions}, which
connects the complexity in terms of regions with the complexity in
terms of vertices.  This connection is obtained by observing how the
order-$k$ Voronoi diagram changes when increasing $k$.

For the average-case linear upper bound on the number of regions, the
argument works roughly as follows, assuming the unweighted case for
the sake of simplicity.  For each size-$k$ subset $A$ of the sites, we
devise an upper bound on the probability that $A$ has a non-empty
order-$k$ Voronoi region.  This region is non-empty if and only if
there are points that have $A$ as the $k$ closest sites, i.e., if
there is a ball that contains the sites of $A$ and no other sites.
With this observation, we can use a win-win-style argument.  Either
the radius of this ball is small, which makes it unlikely that all
sites of $A$ lie in the ball, or the ball is large, which makes it
unlikely that it contains no other sites.

\shortOrLong{
}{

\section{Resolution Size of Power-Law Random k-SAT}
\label{sec:bipartite-expansion}

\subsection{The Direct Approach}

As stated in Section~\ref{arg:pl}, a formula
has resolution width $\Omega(w)$ if
\begin{inparaenum}[(1)]
\item every set $S$ of at most $w$ clauses contains at least $|S|$
  different variables and
\item every set $S$ of $\frac{1}{3}w\le|S|\le\frac{2}{3}w$ clauses contains at 
least a constant fraction of unique variables.
\end{inparaenum}
In this section we are going to show that both conditions are satisfied for power-law exponents $\beta>\frac{2k-1}{k-1}$ and clause-variable ratios $\Delta\in\Omega(1)$.
The first condition can also be interpreted in terms of bipartite expansion.
It states that the clause-variable incidence graph $G(\Phi)$ is a $(w,0)$-bipartite expander.
The following lemma states bounds on $w$ for which $G(\Phi)$ is a $(w,0)$-bipartite expander asymptotically almost surely.
These bounds depend on the power-law exponent $\beta$ as well as on the clause-variable ratio $\Delta$.
Note that our choices of $k$ and $\beta$ in the lemma ensure $\eps_1, \eps_2>0$.


\begin{lemma}\label{lem:property1}
Let $\Phi$ be a random power-law $k$-SAT formula with
  $n$ variables, $\Delta\cdot n = m \in\Omega(n)$ clauses, $k \ge
  3$, and power-law exponent $\beta > \frac{2k-1}{k-1}$.
Let $\eps_1=k\cdot\frac{\beta-2}{\beta-1}-1>0$ and $\eps_2=(k-2)\frac{\beta-2}{\beta-1}>0$.
Then $G\left(\Phi\right)$ has $(w,0)$-bipartite expansion \aas if
\begin{enumerate}[(i)]
\item $\beta\in\left(\frac{2k-1}{k-1},3\right)$, $\Delta\in o\left(n^{\eps_1}/\log^{\eps_2}(n)\right)$, and $w\in\Oh\left(n^{\eps_1/\eps_2}\cdot\Delta^{-1/\eps_2}\right)$
\item $\beta=3$, $\Delta\in o\left(n^{(k-2)/2}/\log^{1+(k-2)/2}(n)\right)$, and $w\in\Oh\left(n\cdot\left(\Delta\cdot\ln{n}\right)^{-2/(k-2)}\right)$.
\item $\beta>3$, $\Delta\in o\left(n^{\eps_2}/\log^{\eps_2}{n}\right)$, and $w\in\Oh\left(n\cdot\Delta^{-1/\eps_2}\right)$.
\end{enumerate}
\end{lemma}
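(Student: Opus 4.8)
The plan is to prove Lemma~\ref{lem:property1} by a first-moment (union bound) argument over all potentially ``bad'' clause sets, following the standard template for such expansion statements but carrying the power-law weights through the probability estimates. Fix a target size $s$ with $1 \le s \le w$. A set $S$ of $s$ clauses violates $(w,0)$-expansion witnessing at size $s$ if the clauses of $S$ together touch fewer than $s$ variables; equivalently, there is a variable set $U$ with $|U| = s-1$ (or even $|U| \le s-1$, but the extremal case is $|U|=s-1$) such that every clause of $S$ draws all its $k$ variables from $U$. So I would bound
\[
\Pro{G(\Phi) \text{ is not a } (w,0)\text{-expander}} \le \sum_{s=1}^{w} \binom{m}{s} \binom{n}{s-1} \left(\max_{|U|=s-1} q_U\right)^{s},
\]
where $q_U = \Pro{\text{a single random clause has all $k$ variables in } U}$. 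The heart of the matter is estimating $q_U$. A clause draws $k$ variables without repetition with probabilities proportional to the weights $w_i = i^{-1/(\beta-1)}$; the dominant contribution to $q_U$ comes from choosing $U$ to be a set of high-weight variables, i.e.\ $U = \{1,\dots,s-1\}$. For that choice, $\sum_{i \in U} w_i = \Theta(s^{(\beta-2)/(\beta-1)})$ when $\beta<3$ (and $\Theta(\log s)$ when $\beta=3$, $\Theta(1)$ when $\beta>3$), while the total weight $W = \sum_{i=1}^n w_i = \Theta(n^{(\beta-2)/(\beta-1)})$ when $\beta<3$ (resp.\ $\Theta(\log n)$, $\Theta(1)$). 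Ignoring the ``without repetition'' correction (which only helps, up to constants, since $k$ is fixed), $q_U \le \left(\sum_{i\in U} w_i / (W - \text{lower order})\right)^{k} = O\!\left((s/n)^{k(\beta-2)/(\beta-1)}\right)$ in the regime $\beta<3$, and analogously with the logarithmic corrections in the other two regimes. This is exactly where the exponent $\eps_2 = (k-2)\frac{\beta-2}{\beta-1}$ will materialize after the binomial factors eat $\binom{m}{s}\binom{n}{s-1} \le (em/s)^s (en/s)^{s-1} = n^{O(s)}\Delta^{O(s)}$ worth of ``$(s/n)^2$'' from $q_U^s$.

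\textbf{The main computation} is then to show that for each regime of $\beta$, under the stated upper bound on $w$ (and the stated restriction on $\Delta$), the per-$s$ summand is at most, say, $n^{-\Omega(1)}$ summed over $s$, or at least that the whole sum is $o(1)$. Concretely, after substituting the estimates one gets a summand of roughly the shape
\[
\left( C \cdot \Delta \cdot n \cdot \left(\frac{s}{n}\right)^{k \cdot \frac{\beta-2}{\beta-1}} \right)^{s}
= \left( C \cdot \Delta \cdot s^{\,k\frac{\beta-2}{\beta-1}} \cdot n^{\,1 - k\frac{\beta-2}{\beta-1}} \right)^{s}
= \left( C \cdot \Delta \cdot s^{\,1+\eps_2} \cdot n^{-\eps_1} \right)^{s},
\]
using $k\frac{\beta-2}{\beta-1} = 1 + \eps_1 = 2 + \eps_2 - \big(\tfrac{\beta-2}{\beta-1}\big)$... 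I should double-check the bookkeeping here, but the point is that the bracket is $<1$ precisely when $s^{1+\eps_2} \ll n^{\eps_1}/\Delta$, i.e.\ when $s \le w = O\big(n^{\eps_1/\eps_2}\Delta^{-1/\eps_2}\big)$ --- wait, that would be $s \le (n^{\eps_1}/\Delta)^{1/(1+\eps_2)}$; reconciling this with the stated $n^{\eps_1/\eps_2}\Delta^{-1/\eps_2}$ exponent is a place I'd need to be careful, and likely the clean statement uses $\eps_2 = (k-2)\frac{\beta-2}{\beta-1}$ rather than $1+\eps_2$ in the exponent because two units of $(s/n)$-weight are consumed by the two binomial coefficients $\binom{m}{s}\binom{n}{s-1}$, leaving exponent $k\frac{\beta-2}{\beta-1} - 2 = \eps_2 - \ldots$. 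Getting the exact exponents $\eps_1/\eps_2$, and the $\log$ powers in cases (ii) and (iii), to come out matching the lemma statement is exactly the bookkeeping I'd do carefully in the full proof; the geometric-series tail bound for $\sum_s (\text{bracket})^s$ then just needs the bracket bounded away from $1$, say by $1/2$, for all $s \le w$, which is where the slack in ``$\Delta \in o(\cdot)$'' is spent.

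\textbf{The main obstacle} I anticipate is twofold. First, nailing down the asymptotics of partial weight sums $\sum_{i=1}^{s} i^{-1/(\beta-1)}$ and the total $W$ uniformly across the three regimes of $\beta$ --- in particular the boundary case $\beta = 3$, where $\frac{\beta-2}{\beta-1} = \frac12$ but the relevant sums acquire logarithmic factors, and where the exponent $(k-2)/2$ and the $\log^{(k-2)/2+1}$ corrections in the statement come from. I would handle this with a single clean lemma (likely already in the appendix's ``basic technical tools'') giving $\sum_{i\le s} w_i = \Theta(s \cdot \max\{s^{-1/(\beta-1)}, \text{something}\})$ and treating $\beta \lessgtr 3$ and $\beta=3$ as cases. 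Second, justifying that replacing the true without-replacement sampling probability by the naive $(\sum_{i\in U} w_i / W)^k$ upper bound costs only a constant factor (depending on $k$): here I'd note that after the first draw the remaining normalizing constant only decreases by the weight of already-drawn variables, which for a bad set $U$ of high-weight variables could in principle be a non-negligible fraction of $W$; but since we want an \emph{upper} bound on $q_U$ and the conditional probability of the next draw landing in $U$ is $(\sum_{i\in U}w_i - \text{drawn})/(W - \text{drawn}) \le \sum_{i\in U}w_i / W$ when $\sum_{i\in U}w_i \le W$ (which holds as $|U| = s-1 < n$), the naive bound is actually valid without any constant loss. That observation removes the apparent difficulty. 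The remaining work is then purely the regime-by-regime arithmetic to match the three displayed bounds on $w$.
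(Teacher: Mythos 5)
Your approach---union-bounding over clause sets $S$ of size $s$ and over variable sets $U$ of size $s-1$ containing $N(S)$---cannot give the exponents in the lemma, and the gap is not a bookkeeping detail that careful arithmetic will repair. The union over $U$ costs $\binom{n}{s-1}\approx(n/s)^{s-1}$, while the per-variable payoff from $q_U\le F(s-1)^k$ is only $(s/n)^{(\beta-2)/(\beta-1)}$ with $(\beta-2)/(\beta-1)<1$; each of the $s-1$ ``slots'' of the variable-set enumeration pays back strictly less than a full power of $(s/n)$. Carrying through your own estimate, the $s$th summand has the form $\bigl(\Delta\,(s/n)^{\alpha}\bigr)^s$ with $\alpha=k\tfrac{\beta-2}{\beta-1}-2=\eps_2-\tfrac{2}{\beta-1}$, not $\eps_2$ (the ``$\ldots$'' you left dangling equals $\tfrac{2}{\beta-1}$). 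That residue is fatal: $\alpha>0$ requires $\beta>\tfrac{2k-2}{k-2}$, which is strictly stronger than the lemma's hypothesis $\beta>\tfrac{2k-1}{k-1}$. For $k=3$ your approach needs $\beta>4$ and covers none of cases~(i) or~(ii), and for every $k$ it misses the range $\bigl(\tfrac{2k-1}{k-1},\tfrac{2k-2}{k-2}\bigr)$.

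The idea you are missing is the paper's minimality argument, which avoids enumerating $U$ at all. One passes to a \emph{smallest} set $\hat C$ of size $i$ with $|N(\hat C)|<|\hat C|$; minimality forces $\hat C$ to use exactly $i-1$ distinct variables, each appearing at least twice among the $ki$ slots. This structure lets one replace the union over $\binom{n}{i-1}$ variable subsets by a union over which pairs of positions host the first two occurrences of each repeated variable---a factor $\binom{\binom{ki}{2}}{i-1}\approx(ci)^{i-1}$ with \emph{no $n$-dependence}. Each pair then matches with probability $\sum_j p_j^2$, which by Lemma~\ref{lem:powerlaw} is $\Theta\bigl(n^{-2(\beta-2)/(\beta-1)}\bigr)$ for $\beta<3$, and the remaining $ki-2(i-1)$ slots are bounded by $F(i-1)$. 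The resulting summand $\kappa^i\Delta^i n^{-i\eps_1}i^{i\eps_2}$ is smaller than yours by roughly $\bigl(i^{2/(\beta-1)}/n\bigr)^i\ll1$, which is exactly the improvement needed. Your remark that the without-replacement correction costs nothing is correct, but it is secondary; the load-bearing step is the minimality-driven reformulation of the bad event, and without it the first-moment method does not close.
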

\begin{proof}
We are interested in showing $|N(C')|\ge |C'|$ for all $C'\subseteq C$ with $|C'|\le w$.
We consider a smallest $C'$ such that $|N(C')|\le |C'|-1$ and denote it by $\hat{C}$.
Let $\mathcal{E}_i$ be the event that $|\hat{C}|=i$.
Thus, $\mathcal{E}_i$ implies that for all $C'\subseteq C$ with $|C'|< i$ it holds that $|N(C')|\ge |C'|$.
This implies that every variable in $N(\hat{C})$ has to appear at least twice.
Otherwise, one could delete a clause with a unique variable from $\hat{C}$ to get a set $\hat{C}'$ with $|\hat{C}'|=i-1$ and $|N(\hat{C}')|\le i-2$.
This would violate the minimality of $\hat{C}$.
Also, $\hat{C}$ must contain exactly $i-1$ different variables. 
Otherwise, we could remove any clause from $\hat{C}$ and violate minimality.

Now we bound 
\[\sum_{i=1}^w \Pr\left(\mathcal{E}_i\right) \le \sum_{i=1}^w \binom{m}{i} P_{i},\]
where $P_{i}$ is the probability to draw $i$ clauses which contain at most $i-1$ different variables and all of them at least twice.
We can now imagine the $k\cdot i$ variables of the $i$ clauses to be drawn independently with replacement.
This would only increase the probability that the $i$ clauses contain at most $i-1$ different variables and all of them at least twice.
Thus, the probability we consider is an upper bound.
Now we consider the $i-1$ different variables drawn.
Then, we choose the $i-1$ pairs of positions where each variable appears for the first and second time.
As a rough upper bound we have at most \[\binom{\binom{k\cdot i}{2}}{i-1}\le\left(\frac{(k\cdot i)^2\cdot e}{2\cdot (i-1)}\right)^{i-1}\] many possibilities by simply choosing $i-1$ from all $\binom{k\cdot i}{2}$ possible pairs.
Now we bound the probability that at these pairs of positions the same variables do appear.
This is at most $\sum_{j=1}^n p_j^2$ per pair of positions.
At the remaining $k\cdot i-2\cdot (i-1)$ positions we can only choose from at most those $i-1$ variables.
Thus, the probabilities at all other positions are the sum of the $i-1$ variable probabilities, which is at most the sum of the $i-1$ highest variable probabilities.
Let $F(i)$ be the sum of the $i$ highest variable probabilities. 
Then it holds that
\begin{align*}
P_{i} 
	& \le \left(\frac{(k\cdot i)^2\cdot e}{2\cdot (i-1)}\right)^{i-1}\cdot\left(\sum_{j=1}^n p_j^2\right)^{i-1}\cdot F(i-1)^{k\cdot i- 2\cdot (i-1)}\\
	& \le \kappa^{i-1}\cdot\left(\frac{i^2}{i-1}\right)^{i-1}\cdot \left(\sum_{j=1}^n p_j^2\right)^{i-1}\cdot \left(\frac{i-1}{n}\right)^{(k\cdot i- 2\cdot (i-1))\frac{\beta-2}{\beta-1}}
\end{align*}
for a constant $\kappa=\kappa(k,\beta)>0$ that might depend on other parameters, which are fixed to constants as well.
We will use $\kappa$ to collect all constant factors.
According to \lemref{lem:powerlaw}
\begin{equation*}\sum_{j=1}^n p_j^2=
\begin{cases}
\Theta\left(n^{-2\frac{\beta-2}{\beta-1}}\right),& \beta<3;\\
\Theta\left(\ln n/n\right),& \beta=3;\\
\Theta\left(n^{-1}\right),& \beta>3.
\end{cases}
\end{equation*}
Thus, our result depends on the power law exponent $\beta$.
For $\beta<3$ we get
\begin{align*}
P_{i} 
	& \le \kappa^i \cdot\left(\frac{i^2}{i-1}\right)^{i-1}\cdot n^{-(i-1)\cdot2\frac{\beta-2}{\beta-1}}\cdot \left(\frac{i-1}{n}\right)^{(k\cdot i- 2\cdot (i-1))\frac{\beta-2}{\beta-1}}\\
	& \le \kappa^i \cdot n^{-k\cdot i\cdot \frac{\beta-2}{\beta-1}}\cdot i^{i-1} \cdot (i-1)^{(k\cdot i- 2\cdot (i-1))\frac{\beta-2}{\beta-1}}\\
	& \le \kappa^i \cdot  n^{-k\cdot i\cdot \frac{\beta-2}{\beta-1}}\cdot i^{(k\cdot i- 2\cdot (i-1))\frac{\beta-2}{\beta-1}+(i-1)}\\
	& = \kappa^i \cdot  n^{-k\cdot i\cdot \frac{\beta-2}{\beta-1}}\cdot i^{(k\cdot i- 2\cdot i)\frac{\beta-2}{\beta-1}+i+2\frac{\beta-2}{\beta-1}-1}\\
	& \le \kappa^i \cdot  n^{-k\cdot i\cdot \frac{\beta-2}{\beta-1}}\cdot i^{(k\cdot i- 2\cdot i)\frac{\beta-2}{\beta-1}+i}=\kappa^i \cdot  n^{-i\cdot (\eps_1+1)}\cdot i^{i\cdot(\eps_2+1)},
\end{align*}
where we used $\left(\frac{i^2}{i-1}\right)^{i-1}\le {e\cdot i^{i-1}}$ in the second line and upper-bounded $i-1$ in the base by $i$, which we can do since $(k\cdot i- 2\cdot (i-1))\frac{\beta-2}{\beta-1}>0$ due to $k\ge3$ and $\beta>2$.
In the third line, we used $2\frac{\beta-2}{\beta-1}-1<0$, which holds since $\beta<3$.

We can now see that 
\begin{align}
\sum_{i=1}^w \binom{m}{i} P_{i}
	& \le \sum_{i=1}^w \binom{m}{i}\cdot\kappa^i \cdot  n^{-i\cdot (\eps_1+1)}\cdot i^{i\cdot(\eps_2+1)}\notag\\
	& \le \sum_{i=1}^w \kappa^i \cdot \Delta^i\cdot n^{-i\cdot\eps_1}\cdot i^{i\cdot\eps_2},\label{eq:sum}
\end{align}
which holds since we assume $m=\Delta\cdot n$ and $\binom{m}{i}\le\left(\frac{e\cdot m}{i}\right)^i$.
In order to have a sum which is $o(1)$ we want to ensure that 
\[\kappa\cdot\Delta\cdot n^{-\eps_1}\cdot i^{\eps_2}\]
is at most a constant smaller than 1.
It is easy to check that this holds for 
\[i\in\Oh\left(n^{\eps_1/\eps_2}\cdot\Delta^{-1/\eps_2}\right).\]
Thus, we can set $w$ to this value.
If we split the sum in Equation~\eqref{eq:sum} at $i_0=\left\lfloor \eps_1\log n\right\rfloor$, the part with $i\le i_0$ is upper-bounded by $\Oh\left(\Delta\cdot n^{-\eps_1}\cdot {i_0}^{\eps_2}\right)\in\Oh\left(\Delta\left(\log^{\eps_2}(n)/n^{\eps_1}\right)\right)$ via a geometric series. 
The part with $i > i_0$ is upper-bounded by the first term.
If we chose $w\in\Theta(n^{\eps_1/\eps_2}\cdot\Delta^{-1/\eps_2})$ so that $\Delta\cdot n^{-\eps_1}\cdot i^{\eps_2}\le c$ for a constant $c\in(0,1)$, the second term yields at most $c^{\Theta(\log n)}=o(1)$.
Thus, we get $\left(\Theta\left(n^{\eps_1/\eps_2}\cdot\Delta^{-1/\eps_2}\right),0\right)$-expansion with probability at least $1-\Theta\left(\Delta\left(\log^{\eps_2}(n)/n^{\eps_1}\right)\right)$ or \aas if $\Delta\in o(n^{\eps_1}/(\log (n))^{\eps2})$.

For $\beta>3$ we get
\begin{align*}
P_{i} 
	& \le \kappa^i \cdot\left(\frac{i^2}{i-1}\right)^{i-1}\cdot n^{-(i-1)}\cdot \left(\frac{i-1}{n}\right)^{(k\cdot i- 2\cdot (i-1))\frac{\beta-2}{\beta-1}}\\
	& \le \kappa^i \cdot \left(\frac{i}{n}\right)^{i-1+(k\cdot i- 2\cdot (i-1))\frac{\beta-2}{\beta-1}}.
\end{align*}
Thus,
\begin{align*}
\sum_{i=1}^r \binom{m}{i} P_{i}
	& \le \sum_{i=1}^w \left(\frac{n}{i}\right)^i\cdot \kappa^i \cdot \Delta^i\cdot\left(\frac{i}{n}\right)^{i-1+(k\cdot i- 2\cdot (i-1))\frac{\beta-2}{\beta-1}}\\
	& = \sum_{i=1}^w \kappa^i \cdot \Delta^i\cdot \left(\frac{i}{n}\right)^{(k\cdot i- 2\cdot i)\frac{\beta-2}{\beta-1}+2\frac{\beta-2}{\beta-1}-1}\\
	& \le \sum_{i=1}^w \kappa^i \cdot \Delta^i\cdot \left(\frac{i}{n}\right)^{i\cdot(k-2)\frac{\beta-2}{\beta-1}}\\
	& \le \sum_{i=1}^w \kappa^i \cdot \Delta^i\cdot \left(\frac{i}{n}\right)^{i\cdot\eps_2},
\end{align*}
which holds since $\frac{i}{n}\le1$ and $2\cdot\frac{\beta-2}{\beta-1}-1\ge 0$ for $\beta\ge3$.
It is now easy to show that $\kappa\cdot\Delta\cdot \left(i/n\right)^{\eps_2}$ is at most a small constant for $w\in\Theta(n\cdot\Delta^{-1/\eps_2})$ sufficiently small.
By splitting the sum as before, we can show $((n\cdot\Delta^{-1/\eps_2}),0)$-expansion with probability at least $1-\Theta(\Delta\cdot\log^{\eps_2}{n}/n^{\eps_2})$ or \aas for $\Delta\in o\left(n^{\eps_2}/\log^{\eps_2}n\right)$.

For $\beta=3$ we get the same result as for $\beta>3$, except for an additional factor of $\left(\ln{n}\right)^{i-1}$. 
Thus,
\begin{align*}
\sum_{i=1}^w \binom{m}{i} P_{i}
	& \le \sum_{i=1}^w \kappa^i \cdot\Delta^i\cdot \left(\frac{i}{n}\right)^{i\cdot(k-2)\frac{\beta-2}{\beta-1}} \ln^{i-1}{n}\\
	& \le \sum_{i=1}^w \kappa^i \cdot \Delta^i\cdot\left(i\cdot\frac{\ln^{2/(k-2)}{n}}{n}\right)^{i\cdot\frac{k-2}{2}}.
\end{align*}
By assuming 
\[w\in\Theta\left(n\cdot\left(\Delta\cdot\log{n}\right)^{-2/(k-2)}\right)\]
small enough, we can ensure that this sum is at most $\Oh(\Delta\cdot\log{n}\cdot\left(\log(n)/n\right)^{(k-2)/2})$ by splitting the expression at $\left\lfloor i_0=\ln n\right\rfloor$ again.
Hence, we get $\big(\Theta(n\cdot(\Delta\cdot\log{n})^{-2/(k-2)}),0\big)$-expansion with probability at least $1-O\big(\Delta\cdot\ln{n}\cdot(\log(n)/n)^{(k-2)/2}\big)$ or \aas for $\Delta\in o\big(n^{(k-2)/2}/\log^{(k-2)/2+1}(n)\big)$.
\end{proof}

Now we want to show the second requirement of \thmref{thm:resolution}, that every set $S$ of $\frac{1}{3}w\le|S|\le\frac{2}{3}w$ clauses contains at 
least a constant fraction of unique variables.
Again, our choices of $k$ and $\beta$ in the lemma ensure that we can always choose an $\eps>0$ with $\eps_1, \eps_2>0$.

\begin{lemma}\label{lem:property2}
Let $\Phi$ be a random power-law $k$-SAT formula with
  $n$ variables, $\Delta\cdot n = m \in\Omega(n)$ clauses, $k \ge
  3$, and power-law exponent $\beta > \frac{2k-1}{k-1}$.
Let $\eps, \eps_1, \eps_2$ be constant such that $\eps>0$, $\eps_1=\frac{k-\eps}{2}-1>0$, and $\eps_2=(k-\eps)\cdot\frac{\beta-2}{\beta-1}-1>0$.
There is a $W$ such that for all $w\in \omega(1)$ with $w\le W$ \aas all sets $C'$ of clauses from $\Phi$ with $\frac13 w\le|C'|\le\frac23 w$ contain at least $\eps\cdot|C'|$ unique variables. 
It holds that:
\begin{enumerate}[(i)]
\item If $\beta\in\left(\frac{2k-1}{k-1},3\right)$ and $\Delta\in o\left(n^{\eps_2}\right)$, then $W\in\Theta\left(n^{\eps_2/\eps_1}\cdot\Delta^{-1/\eps_1}\right)$.
\item If $\beta=3$ and $\Delta\in o\left(n^{\eps_1}/\ln^{\eps_1+1}n\right)$, then $W\in\Theta\left(n\cdot\Delta^{-1/\eps_1}/\ln^{1+\frac{1}{\eps_1}}n\right)$.
\item If $\beta>3$ and $\Delta\in o\left(n^{\eps_1}\right)$, then $W\in\Theta\left(n\cdot\Delta^{-1/\eps_1}\right)$.
\end{enumerate}
\end{lemma}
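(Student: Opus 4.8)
The plan is to prove this by a first-moment argument that closely parallels the proof of Lemma~\ref{lem:property1}. For a candidate value of $w$ and each $i$ in the relevant range $\tfrac13 w\le i\le\tfrac23 w$, I would bound the probability that \emph{some} set of $i$ clauses contains fewer than $\eps i$ unique variables by $\binom{m}{i}\,Q_i$, where $Q_i$ is the probability that a \emph{fixed} set of $i$ clauses has this defect; exactly as in Lemma~\ref{lem:property1}, I would treat the $ki$ variable slots of these $i$ clauses as drawn independently with replacement, which overestimates $Q_i$ by at most a constant factor per clause. The goal is to show $\sum_{i}\binom{m}{i}Q_i=o(1)$ for $w$ as claimed in each regime.

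To bound $Q_i$: if fewer than $\eps i$ variables are unique, then more than $(k-\eps)i$ of the $ki$ slots carry a variable that occurs at least twice. Grouping these ``repeated'' slots by variable yields blocks of sizes $c_1,\dots,c_d\ge 2$ with $\sum_j c_j>(k-\eps)i$, and a fixed such collision pattern has probability at most $\prod_j\bigl(\sum_{v}p_v^{c_j}\bigr)$. The key estimate is $\sum_v p_v^{c}\le\bigl(\sum_v p_v^{2}\bigr)^{c/2}$ for every $c\ge 2$ (since $\sum_v p_v^{c}\le p_{\max}^{\,c-2}\sum_v p_v^{2}$ and $p_{\max}^{2}\le\sum_v p_v^{2}$); hence every collision pattern has probability at most $\bigl(\sum_v p_v^{2}\bigr)^{(k-\eps)i/2}$, i.e.\ the worst case is the one in which every repeated variable occurs exactly twice. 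Encoding such a configuration as in Lemma~\ref{lem:property1} --- choosing the $\le\eps i$ singleton slots ($\binom{ki}{\lfloor\eps i\rfloor}$ ways), choosing $d$ disjoint first/second-occurrence pairs ($\le\binom{\binom{ki}{2}}{d}$ ways), paying $\bigl(\sum_v p_v^2\bigr)^{d}$ for the pairs, and a factor $\le 1$ for each remaining slot --- and summing over $d$, one checks that the sum is dominated (up to a $\mathrm{poly}(i)$ factor) by $d=(k-\eps)i/2$, giving a count of at most $(O(i))^{(k-\eps)i/2}$ with constant depending only on $k$. Together with $\binom{m}{i}\le(em/i)^i$ and $m=\Delta n$ this yields
\[
  \binom{m}{i}\,Q_i \;\le\; \left( C\,\Delta\,n \cdot i^{(k-\eps)/2-1}\left(\sum_{v} p_v^{2}\right)^{(k-\eps)/2} \right)^{i}
\]
for a constant $C=C(k,\eps)$.

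Plugging in $\sum_v p_v^{2}$ from Lemma~\ref{lem:powerlaw} --- which is $\Theta\bigl(n^{-2(\beta-2)/(\beta-1)}\bigr)$ for $\beta<3$, $\Theta(\ln n/n)$ for $\beta=3$, and $\Theta(1/n)$ for $\beta>3$ --- and recalling $\eps_1=\tfrac{k-\eps}{2}-1$ and $\eps_2=(k-\eps)\tfrac{\beta-2}{\beta-1}-1$, the bracket collapses to $C\Delta\,i^{\eps_1}n^{-\eps_2}$ for $\beta<3$, to $C\Delta\,i^{\eps_1}(\ln n)^{\eps_1+1}n^{-\eps_1}$ for $\beta=3$, and to $C\Delta\,i^{\eps_1}n^{-\eps_1}$ for $\beta>3$. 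Note that $\eps_1$ (exponent of $i$) and $\eps_2$ (exponent of $n$) here play precisely the roles that $\eps_2$ and $\eps_1$ played in Lemma~\ref{lem:property1}. Choosing the constant in $w$ small enough that the bracket is at most $\tfrac12$ on the whole range and summing the resulting series, exactly as in the proof of Lemma~\ref{lem:property1}, makes $\sum_i\binom{m}{i}Q_i=o(1)$ under the stated density conditions; solving the inequality ``bracket $<1$'' for $i$ then gives $w\in\Omega(n^{\eps_2/\eps_1}\Delta^{-1/\eps_1})$ for $\beta<3$, $w\in\Omega\bigl(n\Delta^{-1/\eps_1}/\ln^{1+1/\eps_1}n\bigr)$ for $\beta=3$, and $w\in\Omega(n\Delta^{-1/\eps_1})$ for $\beta>3$.

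I expect the main obstacle to be the collision-pattern bookkeeping in the second step. In Lemma~\ref{lem:property1} one could invoke minimality to pin the number of distinct variables to exactly $i-1$, which fixed the pattern up to the choice of $i-1$ first/second-occurrence pairs; there is no such shortcut here, so one must argue uniformly over all admissible multiplicity vectors $(c_1,\dots,c_d)$ and verify both that patterns with higher-multiplicity variables are dominated in probability (this is what $\sum_v p_v^{c}\le(\sum_v p_v^{2})^{c/2}$ buys) and that the corresponding counts lose only a $\mathrm{poly}(i)\cdot C^i$ factor, so that the exponent of $i$ comes out as exactly $\eps_1=\tfrac{k-\eps}{2}-1$ rather than something larger. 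Combined with Lemma~\ref{lem:property1}, this lemma supplies the second of the two properties required to invoke the width-based lower bound of Theorem~\ref{thm:resolution}, which then yields Theorem~\ref{thm:mainone}\,(\ref{item:low-beta})--(\ref{item:beta-3}).
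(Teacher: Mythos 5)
Your final formula for $\binom{m}{i}Q_i$ matches the paper's bound, but the derivation has a genuine gap. The problem is the step where you pay $\bigl(\sum_v p_v^2\bigr)^d$ for the $d$ first/second-occurrence pairs and then ``a factor $\le 1$ for each remaining slot.'' The paper pays $F(d)^{ki-\eps i-2d}$ for those remaining slots, where $F(d)$ is the sum of the $d$ largest variable probabilities: each remaining slot must hit one of the $d$ distinguished repeated variables, which for power-law weights costs roughly $(d/n)^{(\beta-2)/(\beta-1)}$ per slot. Dropping this factor is fatal in the regimes that matter. With only the factor $\bigl(\sum p^2\bigr)^d$ and no decay from the remaining slots, the inner sum $\sum_d\binom{\binom{ki}{2}}{d}\bigl(\sum p^2\bigr)^d$ is a function whose maximizing index is $d^*\approx (ki)^2\sum p^2/e$. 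For $\beta<3$ one has $\sum p^2=\Theta\bigl(n^{-2(\beta-2)/(\beta-1)}\bigr)$ and $i$ in the relevant range satisfies $i\in o\bigl(n^{(\beta-2)/(\beta-1)}\bigr)$, so $i^2\sum p^2\to 0$, hence $d^*<1$ and the sum is dominated by $d=1$, not by $d=(k-\eps)i/2$ as you assert. At $d=1$ the bound reads $\binom{m}{i}\binom{ki}{\lfloor\eps i\rfloor}\cdot O\bigl(i^2\sum p^2\bigr)\gtrsim (\Delta n/i)^i\cdot n^{-O(1)}$, which is exponentially large rather than $o(1)$. The same failure occurs at $\beta=3$; only for $\beta>3$, where $i=\Theta(n)$ and hence $i^2\sum p^2=\Omega(i)$, is your ``$d=(k-\eps)i/2$ dominates'' claim correct.

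The first half of your argument (the observation $\sum_v p_v^c\le\bigl(\sum_v p_v^2\bigr)^{c/2}$ so that every \emph{fully specified} collision pattern has probability at most $\bigl(\sum p^2\bigr)^{(k-\eps)i/2}$) is correct but does not rescue the count: applying it requires union-bounding over complete set partitions of the roughly $(k-\eps)i$ non-unique slots into blocks of size $\ge 2$. The number of such partitions is the associated Bell number, which is $\bigl((k-\eps)i\bigr)^{(1-o(1))(k-\eps)i}$; combined with $\bigl(\sum p^2\bigr)^{(k-\eps)i/2}$ this gives an exponent of $i$ on the order of $(k-\eps)-1$ rather than $(k-\eps)/2-1=\eps_1$, which is too large to make $\sum_i\binom{m}{i}Q_i=o(1)$ for the claimed $w$. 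So your two sub-arguments --- pair-only encoding with trivial bound on the rest, and fully-specified-partition encoding with the $q_c\le q_2^{c/2}$ bound --- each lose too much, and the claimed bound $Q_i\le (O(i))^{(k-\eps)i/2}\bigl(\sum p^2\bigr)^{(k-\eps)i/2}$ is not obtained by either. What closes the gap in the paper's proof is precisely the retained factor $F(j)^{ki-\eps i-2j}$: it lets the probability decay in $j$ compensate for the growth of the pair-count $\binom{\binom{(k-\eps)i}{2}}{j}$ at small $j$, and the paper's split of the inner sum at $j_0=\tfrac{3-\beta}{4}(k-\eps)i$ (for $\beta<3$) is then exactly tuned to show the sum is $\kappa^i\,i^{(k-\eps)i/2}$.
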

\begin{proof}
Let $\eps\in(0,\min\{k-\frac{\beta-1}{\beta-2},k-2\})$ be a constant. 
The upper bounds on $\eps$ ensure $\eps_1>0$ and $\eps_2>0$.
We want to bound the probability that there is a set of clauses $C'$ with $\frac13 w \le |C'| \le \frac23 w$ and at most $\eps\cdot |C'|$ many unique variables.
Let $P_i$ be the probability that there is a set $C'$ of size $i$ with that property.
We assume the $k\cdot i$ Boolean variables to be drawn independently at random, \ie, we allow duplicate variables inside clauses.
This only decreases the probability of having unique variables.
Additionally, we split the probability into parts depending on the number $j$ of different variables that appear in $C'$ in addition to the $\eps\cdot i$ unique ones.
It holds that
\begin{align*}
P_i & \le \underbrace{\binom{m}{i}}_{\substack{\text{choices of}\\\text{clauses}}} \cdot \sum_{j=1}^{\frac{k-\eps}{2}\cdot i} \underbrace{\binom{k\cdot i}{\eps\cdot i}}_{\substack{\text{possible posi-}\\\text{tions for the}\\\text{$\eps\cdot i$ unique}\\\text{variables}}} \underbrace{\binom{\binom{(k-\eps)\cdot i}{2}}{j}}_{\substack{\text{possible positions}\\\text{for the first}\\\text{two appearances}\\\text{of the $j$}\\\text{other variables}}} \cdot \underbrace{1^{\eps\cdot i}}_{\substack{\text{probability to}\\\text{draw a new}\\\text{variable}}} \underbrace{\left(\sum_{x=1}^n p_x^2\right)^j}_{\substack{\text{probability}\\\text{that variables}\\\text{are same at}\\\text{positions for}\\\text{first two}\\\text{appearances}}} \cdot\underbrace{F(j)^{k\cdot i -\eps\cdot i -2j}}_{\substack{\text{upper bound on}\\\text{probability to draw}\\\text{$j$ chosen variables again}}}\\
 & \le \kappa^i\cdot\Delta^i\left(\frac{n}{i}\right)^i \cdot \sum_{j=1}^{\frac{k-\eps}{2}\cdot i} \left(\frac{k\cdot i}{\eps\cdot i}\right)^{\eps\cdot i}\left(\frac{(k-\eps)^2 \cdot i^2 }{j}\right)^{j}\cdot \left(\sum_{x=1}^n p_x^2\right)^j \cdot F(j)^{k\cdot i -\eps\cdot i -2j}\\
 & \le \kappa^i\cdot\Delta^i\left(\frac{n}{i}\right)^i \cdot \sum_{j=1}^{\frac{k-\eps}{2}\cdot i} \left(\frac{i^2}{j}\right)^{j}\cdot \left(\sum_{x=1}^n p_x^2\right)^j \cdot \left(\frac{j}{n}\right)^{(k\cdot i -\eps\cdot i -2j)\frac{\beta-2}{\beta-1}},
\end{align*}
where $\kappa=\kappa(k,\eps,\beta)>0$ is a constant that might depend on other parameters, which are fixed to constants. 
Note that we estimated the probability to draw a new (unique) variable with $1$.
Thus, this also accounts for the probability to draw a variable that is not actually new.
Especially, it accounts for the probability to draw one of the $j$ non-unique variables.
This means, the expression we have is an upper bound for the probability to draw \emph{at most} $\eps\cdot i$ unique variables.
As in the proof of \lemref{lem:property1} we have to distinguish three cases depending on the power law exponent $\beta$.
Using \lemref{lem:powerlaw} we see that for $\beta<3$
\begin{align}
P_i &\le \kappa^i\cdot\Delta^i\left(\frac{n}{i}\right)^i \cdot \sum_{j=1}^{\frac{k-\eps}{2}\cdot i} \left(\frac{i^2}{j}\right)^{j}\cdot \left(\sum_{x=1}^n p_x^2\right)^j \cdot \left(\frac{j}{n}\right)^{(k\cdot i -\eps\cdot i -2j)\frac{\beta-2}{\beta-1}}\notag\\
	& \le \kappa^i\cdot\Delta^i\left(\frac{n}{i}\right)^i \cdot \sum_{j=1}^{\frac{k-\eps}{2}\cdot i} \left(\frac{i^2}{j}\right)^{j}\cdot n^{-2j\frac{\beta-2}{\beta-1}} \cdot \left(\frac{j}{n}\right)^{(k\cdot i -\eps\cdot i -2j)\frac{\beta-2}{\beta-1}}\notag\\
		& = \kappa^i \cdot\Delta^i\cdot n^{i\left(1-(k-\eps)\frac{\beta-2}{\beta-1}\right)} \cdot i^{-i}\cdot \sum_{j=1}^{\frac{k-\eps}{2}\cdot i} \left(\frac{i^2}{j}\right)^{j}\cdot j^{(k\cdot i -\eps\cdot i -2j)\frac{\beta-2}{\beta-1}}\label{eq:pi-bound}.
\end{align}
Now it remains to bound the inner sum.
In order to do so, we will split it at $j_0=\frac{3-\beta}{4}(k-\eps)\cdot i$. 
It is easy to see that $0< \frac{3-\beta}{4} < \frac{1}{4}$ for $2< \beta <3$, thus this choice of $j$ is valid.
For the first part of the sum it holds that
\begin{align*}
\sum_{j=1}^{\frac{3-\beta}{4}(k-\eps)\cdot i} \left(\frac{i^2}{j}\right)^{j}\cdot j^{(k\cdot i -\eps\cdot i -2j)\frac{\beta-2}{\beta-1}}	& \le \kappa^i\sum_{j=1}^{\frac{3-\beta}{4}(k-\eps)\cdot i} i^{2\cdot j}\cdot j^{-j}\cdot i^{((k-\eps)\cdot i -2j)\frac{\beta-2}{\beta-1}}\\
	& \le \kappa^i\cdot i^{(k-\eps)\cdot i\cdot\frac{\beta-2}{\beta-1}}\sum_{j=1}^{\frac{3-\beta}{4}(k-\eps)\cdot i} i^{\frac{2\cdot j}{\beta-1}}\\
  & \le \kappa^i\cdot i^{(k-\eps)\cdot i\cdot\frac{\beta-2}{\beta-1}}\cdot i^{2\cdot \frac{3-\beta}{4}\cdot\frac{k-\eps}{\beta-1}\cdot i}\\
	& = \kappa^i\cdot i^{\frac{k-\eps}{2}\cdot i},
\end{align*}
where we used $j\le\frac{3-\beta}{4}(k-\eps)\cdot i$ and $((k-\eps)\cdot i -2j)\ge 0$ in the first line.
The derived sum in the second line is a geometric series with base $i^{\frac{2}{\beta-1}}\ge 1$.
This series is dominated by the term with $j=\frac{3-\beta}{4}(k-\eps)\cdot i$.
Additional factors of at most $c^i$ for positive constants $c$ are hidden in $\kappa^i$.
For the second part of the sum it holds that
\begin{align*}
\sum_{j=\frac{3-\beta}{4}(k-\eps)\cdot i}^{\frac{k-\eps}{2}\cdot i} \left(\frac{i^2}{j}\right)^{j}\cdot j^{(k\cdot i -\eps\cdot i -2j)\frac{\beta-2}{\beta-1}}	& \le \kappa^i\sum_{j=\frac{3-\beta}{4}(k-\eps)\cdot i}^{\frac{k-\eps}{2}\cdot i} i^{2\cdot j}\cdot j^{-j}\cdot i^{((k-\eps)\cdot i -2j)\frac{\beta-2}{\beta-1}}\\
	& \le \kappa^i\cdot i^{(k-\eps)\cdot i\cdot\frac{\beta-2}{\beta-1}}\sum_{j=\frac{3-\beta}{4}(k-\eps)\cdot i}^{\frac{k-\eps}{2}\cdot i} i^{j-2\cdot j\frac{\beta-2}{\beta-1}}\\
	& \le \kappa^i\cdot i^{(k-\eps)\cdot i\cdot\frac{\beta-2}{\beta-1}}\cdot i^{\frac{3-\beta}{\beta-1}\cdot\frac{k-\eps}{2}\cdot i}\\
	& = \kappa^i\cdot i^{\frac{k-\eps}{2}\cdot i},
\end{align*}
where we used $j\ge\frac{3-\beta}{4}(k-\eps)\cdot i$ in the second and a geometric series in the third line.
The base of the series is $i^{\frac{3-\beta}{\beta-1}}\ge 1$.
Thus, the last term with $j=\frac{k-\eps}{2}\cdot i$ dominates and we get the shown estimate with factors $c^i$ for positive constants $c$ hidden in $\kappa^i$ again.

Thus,
\[\sum_{j=1}^{\frac{k-\eps}{2}\cdot i} \left(\frac{i^2}{j}\right)^{j}\cdot j^{(k\cdot i -\eps\cdot i -2j)\frac{\beta-2}{\beta-1}}\le \kappa^i \cdot i^{\frac{k-\eps}{2}\cdot i}\]
and plugging this into \eq{pi-bound} yields
\[P_i\le \kappa^i \cdot\Delta^i\cdot n^{i\left(1-(k-\eps)\frac{\beta-2}{\beta-1}\right)} \cdot i^{i\left(\frac{k-\eps}{2}-1\right)}=\kappa^i \cdot\Delta^i\cdot n^{-\eps_2\cdot i} \cdot i^{\eps_1\cdot i}.\]

Since we want to sum over all $P_i$ with  $\frac13 w \le i \le \frac23 w$ for some $w$, it holds that
\begin{align*}
\sum_{i=\frac13 w}^{\frac23 w} P_i & \le \sum_{i=\frac13 w}^{\frac23 w} \kappa^i \cdot\Delta^i\cdot n^{-\eps_2\cdot i} \cdot i^{\eps_1\cdot i} \\
 & \le \sum_{i=\frac13 w}^{\frac23 w} \left(\kappa \cdot\Delta\cdot n^{-\eps_2}\cdot w^{\eps_1}\right)^i
\end{align*}
This sums up to $o(1)$ as soon as $\kappa\cdot\Delta\cdot n^{-\eps_2}\cdot w^{\eps_1}$ is a suitably small constant and $w$ is super-constant.
In our case, we see that this holds for some 
\[w\in\Oh\left(n^{\eps_2/\eps_1}\Delta^{-1/\eps_1}\right).\]

For $\beta=3$ we get
\begin{align}
P_i &\le \kappa^i\cdot\Delta^i\left(\frac{n}{i}\right)^i \cdot \sum_{j=1}^{\frac{k-\eps}{2}\cdot i} \left(\frac{i^2}{j}\right)^{j}\cdot \left(\sum_{x=1}^n p_x^2\right)^j \cdot \left(\frac{j}{n}\right)^{((k-\eps)\cdot i -2j)\frac{\beta-2}{\beta-1}}\notag\\
	& \le \kappa^i\cdot\Delta^i\left(\frac{n}{i}\right)^i \cdot \sum_{j=1}^{\frac{k-\eps}{2}\cdot i} \left(\frac{i^2}{j}\right)^{j}\cdot \left(\frac{\ln n}{n}\right)^{j} \cdot \left(\frac{j}{n}\right)^{((k-\eps)\cdot i  -2j)\frac12}\notag\\
		& = \kappa^i \cdot\Delta^i\cdot \left(\frac{n}{i}\right)^{i} \cdot n^{-\frac{k-\eps}{2}i}\sum_{j=1}^{\frac{k-\eps}{2}\cdot i} i^{2j}\cdot j^{\frac{k-\eps}{2}i-2j}\cdot \left(\ln n\right)^j\label{eq:pi-bound2}.
\end{align}
We want to show that this inner sum is at most $\kappa^i\cdot\left(i\cdot \ln n\right)^{\frac{k-\eps}{2}i}$.
As before, we can split the sum.
This time we split it at $j_0=\frac{k-\eps}{4}i$.
For the first part we get
\begin{align*}
\sum_{j=1}^{\frac{k-\eps}{4}i} i^{2j}\cdot j^{\frac{k-\eps}{2}i-2j}\cdot \left(\ln n\right)^j
& \le \kappa^i\cdot\sum_{j=1}^{\frac{k-\eps}{4}i} i^{2j}\cdot i^{\frac{k-\eps}{2}i-2j}\cdot \left(\ln n\right)^j\\
& \le \kappa^i\cdot i^{\frac{k-\eps}{2}i}\cdot\sum_{j=1}^{\frac{k-\eps}{4}i} \left(\ln n\right)^j\\
& \le \kappa^i\cdot i^{\frac{k-\eps}{2}i}\cdot\left(\ln n\right)^{\frac{k-\eps}{4}i},
\end{align*}
where we used that $\frac{k-\eps}{2}i-2j\ge0$ in the first line.
The second line contains a geometric series with base $\ln n\ge 1$ again that we estimated by its dominating term $\left(\ln n\right)^{\frac{k-\eps}{4}i}$.
The second part of the sum yields
\begin{align*}
\sum_{j=\frac{k-\eps}{4}i}^{\frac{k-\eps}{2}i} i^{2j}\cdot j^{\frac{k-\eps}{2}i-2j}\cdot \left(\ln n\right)^j
& \le \kappa^i\cdot\sum_{j=\frac{k-\eps}{4}i}^{\frac{k-\eps}{2}i} i^{2j}\cdot i^{\frac{k-\eps}{2}i-2j}\cdot \left(\ln n\right)^j\\
& \le \kappa^i\cdot i^{\frac{k-\eps}{2}i}\sum_{j=\frac{k-\eps}{4}i}^{\frac{k-\eps}{2}i} \left(\ln n\right)^j \le \kappa^i\cdot i^{\frac{k-\eps}{2}i}\left(\ln n\right)^{\frac{k-\eps}{2}i},
\end{align*}
since $j\in\Theta(i)$.
Plugging this into \eq{pi-bound2} gives us
\[P_i \le \kappa^i \cdot\Delta^i\cdot \left(\frac{n}{i}\right)^{i\left(1-\frac{k-\eps}{2}\right)}\cdot\left(\ln n\right)^{\frac{k-\eps}{2}i}=\kappa^i \cdot\Delta^i\cdot \left(\frac{n}{i}\right)^{-\eps_1\cdot i}\cdot\left(\ln n\right)^{(\eps_1+1)\cdot i}.\]
As before, we can see that this is at most $\kappa^i$ for some constant $\kappa\in(0,1)$ if 
\[w\in\Oh\left(n/\ln^{1+\frac{1}{\eps_1}} n\cdot\Delta^{-1/\eps_1}\right)\] is small enough.

For $\beta>3$ we get 
\begin{align}
P_i &\le \kappa^i\cdot \Delta^i\left(\frac{n}{i}\right)^i \cdot  \sum_{j=1}^{\frac{k-\eps}{2}\cdot i} \left(\frac{i^2}{j}\right)^{j}\cdot \left(\sum_{x=1}^n p_x^2\right)^j \cdot \left(\frac{j}{n}\right)^{((k-\eps)\cdot i -2j)\frac{\beta-2}{\beta-1}}\notag\\
	& \le \kappa^i\cdot\Delta^i\left(\frac{n}{i}\right)^i \cdot \sum_{j=1}^{\frac{k-\eps}{2}\cdot i} \left(\frac{i^2}{j}\right)^{j}\cdot n^{-j} \cdot \left(\frac{j}{n}\right)^{((k-\eps)\cdot i-2j)\frac{\beta-2}{\beta-1}}\notag\\
		& = \kappa^i \cdot\Delta^i\cdot \left(\frac{n}{i}\right)^i\cdot n^{-(k-\eps)\frac{\beta-2}{\beta-1}i} \sum_{j=1}^{\frac{k-\eps}{2}\cdot i} i^{2j}\cdot n^{j\left(2\frac{\beta-2}{\beta-1}-1\right)}\cdot j^{((k-\eps)\cdot i-2j)\frac{\beta-2}{\beta-1}-j}.\label{eq:pi-bound3}
\end{align}
This time we are going to show that the inner sum is bounded by $i^{(k-\eps)\frac{\beta-2}{\beta-1}i}\cdot\left(\frac{n}{i}\right)^{\frac{k-\eps}{2}i\left(2\frac{\beta-2}{\beta-1}-1\right)}$.
Again, we split the sum.
This time at 
\[j_0=\frac{(k-\eps)\frac{\beta-2}{\beta-1}}{1+2\frac{\beta-2}{\beta-1}}i.\]
Our choice ensures $((k-\eps)\cdot i-2j)\frac{\beta-2}{\beta-1}-j\ge 0$ for $j\le j_0$.
Thus, in the first part of the sum all exponents are positive. 
It now holds that 
\[j^{((k-\eps)\cdot i-2j)\frac{\beta-2}{\beta-1}-j}\le j_0^{((k-\eps)\cdot i-2j)\frac{\beta-2}{\beta-1}-j}\le \kappa^i\cdot i^{((k-\eps)\cdot i-2j)\frac{\beta-2}{\beta-1}-j}\]
for some constant $\kappa$ that we can incorporate in the $\kappa$ we already have.
In the second part of the sum the exponent $((k-\eps)\cdot i-2j)\frac{\beta-2}{\beta-1}-j$ is negative.
However, we know that the base is $j\ge j_0=\frac{(k-\eps)\frac{\beta-2}{\beta-1}}{1+2\frac{\beta-2}{\beta-1}}i$.
Thus, 
\[j^{((k-\eps)\cdot i-2j)\frac{\beta-2}{\beta-1}-j}\le j_0^{((k-\eps)\cdot i-2j)\frac{\beta-2}{\beta-1}-j}\le \kappa^i\cdot i^{((k-\eps)\cdot i-2j)\frac{\beta-2}{\beta-1}-j}\]
as well.
This yields
\begin{align*}
& \sum_{j=1}^{\frac{k-\eps}{2}\cdot i} i^{2j}\cdot n^{j\left(2\frac{\beta-2}{\beta-1}-1\right)}\cdot j^{((k-\eps)\cdot i-2j)\frac{\beta-2}{\beta-1}-j}\\
& \le \kappa^i \cdot  \sum_{j=1}^{\frac{k-\eps}{2}\cdot i} i^{2j}\cdot n^{j\left(2\frac{\beta-2}{\beta-1}-1\right)}\cdot i^{((k-\eps)\cdot i-2j)\frac{\beta-2}{\beta-1}-j}\\
& = \kappa^i \cdot  i^{(k-\eps)\frac{\beta-2}{\beta-1}i}\cdot\sum_{j=1}^{\frac{k-\eps}{2}\cdot i} \left(\frac{n}{i}\right)^{j\left(2\frac{\beta-2}{\beta-1}-1\right)}\\
& \le \kappa^i \cdot  i^{(k-\eps)\frac{\beta-2}{\beta-1}i}\cdot\left(\frac{n}{i}\right)^{\frac{k-\eps}{2}i\left(2\frac{\beta-2}{\beta-1}-1\right)},
\end{align*}
where the last line holds, since $2\frac{\beta-2}{\beta-1}-1>0$, which implies that we have a geometric series with base at least one again, that we estimate by its dominating term, \ie the term with $j=\frac{k-\eps}{2}\cdot i$.
If we plug our estimate into \eq{pi-bound3} this gives us
\[P_i \le \kappa^i \cdot \Delta^i\left(\frac{n}{i}\right)^{\left(1-\frac{k-\eps}{2}\right)i}=\kappa^i \cdot \Delta^i\left(\frac{n}{i}\right)^{-\eps_1\cdot i}.\]
We can now find a $w\in\Theta(n\cdot \Delta^{-1/\eps_1})$ small enough such that the property holds as desired.

In all three cases we can choose $w$ in such a way that the probability for the property not to hold is at most $\kappa^{\frac{w}{3}}$ for some constant $\kappa\in\left(0,1\right)$.
This means, the property holds \aas for $w\in\omega(1)$.
\end{proof}

The two properties we showed in \lemref{lem:property1} and \lemref{lem:property2} can be used to derive lower bounds on resolution width via the following theorem by \citet{shortProofs}.
\begin{theorem}[\cite{shortProofs}] \label{thm:resolution}
Let $\Phi$ be an unsatisfiable $k$-CNF formula with $k\ge3$.
If there is a $w\in \mathbb{N}$ such that
\begin{enumerate}[(i)]
\item for all sets of clauses $C'$ with $|C'|\le w$ it holds that $C'$ contains at least $|C'|$ different Boolean variables and
\item for all sets of clauses $C'$ with $\frac13 w\le|C'|\le\frac23 w$ it holds that $C'$ contains at least $\eps\cdot|C'|$ unique variables for some constant $\eps>0$.
\end{enumerate}
then the resolution width of $\Phi$ is $\Omega(w)$.
\end{theorem}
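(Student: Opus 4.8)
The plan is to recover this via the complexity-measure argument of~\citet{shortProofs}. For a clause $D$, let $\mu(D)$ be the least number of clauses of $\Phi$ whose conjunction semantically implies $D$, that is, $\mu(D)=\min\{\,|C'| : C'\subseteq\Phi,\ C'\models D\,\}$. First I would establish the two elementary properties of $\mu$ on which everything rests. It is subadditive along resolution: if $D$ is the resolvent of $D_1$ and $D_2$, the union of a minimum witness for $D_1$ with one for $D_2$ is a witness for $D$, so $\mu(D)\le\mu(D_1)+\mu(D_2)$; and every clause of $\Phi$ has $\mu$-value $1$. At the other extreme, $\mu$ of the empty clause exceeds $\tfrac23 w$: a witness for the empty clause is an unsatisfiable subset of $\Phi$, which contains a minimally unsatisfiable subformula $C''$, and since a minimally unsatisfiable CNF has strictly fewer variables than clauses --- a classical fact I would cite rather than reprove --- such a $C''$ would contradict assumption~(i) whenever $|C''|\le\tfrac23 w\le w$; hence the smallest unsatisfiable subformula of $\Phi$, and therefore $\mu$ of the empty clause, has more than $\tfrac23 w$ clauses.

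Next I would extract a clause of intermediate $\mu$-value from an arbitrary resolution refutation $\pi$ of $\Phi$. Order the clauses of $\pi$ so that the two premises of every resolution step precede its conclusion, and let $D$ be the first clause in this order with $\mu(D)>\tfrac13 w$; such a clause exists because the empty clause appears at the end and has $\mu>\tfrac23 w$. Since clauses of $\Phi$ have $\mu$-value $1\le\tfrac13 w$ (we may assume $w\ge 3$; for bounded $w$ the statement holds vacuously with a small enough implicit constant), $D$ is a genuine resolvent, so both of its premises appear earlier in the order and hence have $\mu$-value at most $\tfrac13 w$; subadditivity then gives $\mu(D)\le\tfrac23 w$. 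Thus $\tfrac13 w<\mu(D)\le\tfrac23 w$.

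The heart of the proof is then to show that this clause $D$ is wide, which is where assumption~(ii) comes in. Fix a minimum-size witness $C'$ for $D$, so $|C'|=\mu(D)\in[\tfrac13 w,\tfrac23 w]$, and --- being of minimum size --- no proper subset of $C'$ implies $D$. By assumption~(ii), $C'$ has at least $\eps|C'|$ unique variables, i.e.\ variables appearing in exactly one clause of $C'$. I claim each such unique variable $x$ occurs in $D$. Let $c\in C'$ be the unique clause of $C'$ containing $x$; since $C'\setminus\{c\}\not\models D$, some assignment satisfies $C'\setminus\{c\}$ and falsifies $D$, and because $C'\models D$ this assignment must also falsify $c$. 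Flipping the value of $x$ now satisfies $c$; and if $x$ did not occur in $D$, the flipped assignment would still satisfy $C'\setminus\{c\}$, in which $x$ does not appear, and still falsify $D$, so it would satisfy all of $C'$ while falsifying $D$, contradicting $C'\models D$. Hence all at least $\eps|C'|=\eps\mu(D)>\tfrac{\eps}{3} w$ unique variables of $C'$ occur in $D$, so $D$ has width $\Omega(w)$. As $\pi$ was arbitrary, every refutation of $\Phi$ contains a clause of width $\Omega(w)$, which is precisely the claimed lower bound on the resolution width of $\Phi$.

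I expect the genuinely delicate step to be the unique-variable argument of the last paragraph: it relies on the witness being subset-minimal --- automatic for a minimum-size witness --- and on the fact that a variable occurring in only one clause of $C'$ can be flipped without disturbing the remaining clauses, which is exactly why the hypothesis insists on \emph{unique} variables rather than merely many distinct ones. The bound on $\mu$ of the empty clause also leans on the external structural fact that minimally unsatisfiable formulas have more clauses than variables; apart from that the argument is self-contained.
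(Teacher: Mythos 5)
The paper does not give its own proof of this theorem---it is cited directly from Ben-Sasson and Wigderson~\cite{shortProofs}---so there is no in-paper argument to compare against. Your proof correctly reproduces the argument from that source: you introduce the complexity measure $\mu(D)=\min\{|C'| : C'\subseteq\Phi,\ C'\models D\}$, establish subadditivity over resolution steps and $\mu$-value $1$ on axioms, lower-bound $\mu(\bot)$ via assumption~(i), locate a clause $D$ with $\mu(D)\in(\tfrac13 w,\tfrac23 w]$ by taking the first clause in a topological ordering of the refutation that crosses the $\tfrac13 w$ threshold, and then force $\Omega(\eps w)$ literals into $D$ by showing that every boundary (unique) variable of a minimum-size witness for $D$ must occur in $D$, invoking assumption~(ii). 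The flipping argument in the last step is exactly the delicate point and you handle it correctly, including the reliance on subset-minimality of the witness. The only cosmetic deviation from the standard presentation is the way you derive $\mu(\bot)>\tfrac23 w$: you invoke the Aharoni--Linial/Tarsi fact that minimally unsatisfiable CNFs have strictly more clauses than variables, whereas the more self-contained route is to note that assumption~(i) plus Hall's theorem yields a system of distinct representatives for any set of at most $w$ clauses, so any such set is satisfiable and in fact $\mu(\bot)>w$. Both routes are valid, and the difference is immaterial to the conclusion.
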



\lemref{lem:property1} and \lemref{lem:property2} together with \thmref{thm:resolution} imply Corollary~\ref{cor:direct}.
However, \thmref{thm:resolution} only works for unsatisfiable instances. 
Since the two lemmas do not condition on instances being unsatisfiable, we also need to make sure that the probability for having unsatisfiable instances is large enough.
In particular, we have to guarantee that this probability is larger than the error probabilities of \lemref{lem:property1} and \lemref{lem:property2}.
If the probability of generating unsatisfiable instances is asymptotically larger than those error probabilities, the conditional probability of our width lower bounds to hold conditioned on instances being unsatisfiable will be approaching one.
Since the error probabilities of the two lemmas are $o(1)$, we want the clause-variable ratio $\Delta$ to be high enough for instances to be unsatisfiable with at least constant probability.
The resulting corollary is stated below.
It only holds for unsatisfiable instances as well, \ie the probability bound on resolution width is actually a conditional probability conditioned on instances being unsatisfiable.
\begin{corollary}\label{cor:direct}
Let $\Phi$ be an unsatisfiable random power-law $k$-SAT formula with $n$ variables, $m\in\Omega(n)$ clauses, $k\ge3$, and power-law exponent $\beta>\frac{2k-1}{k-1}$ constant.
Let $\Delta=m/n$ be large enough so that $\Phi$ is unsatisfiable at least with constant probability.
Let $\eps,\eps_1,\eps_2$ be constants with $\eps>0$, $\eps_1=\frac{k-\eps}{2}-1>0$, and $\eps_2=(k-\eps)\cdot\frac{\beta-2}{\beta-1}-1>0$.
For the resolution width $w$ of $\Phi$, it holds \aas that:
%
%
\begin{enumerate}[(i)]
\item If $\beta\in\left(\frac{2k-1}{k-1},3\right)$ and $\Delta\in o\left(n^{\eps_2}\right)$, then $w\in\Omega\left(n^{\eps_2/\eps_1}\cdot \Delta^{-1/\eps_1}\right)$.
\item If $\beta=3$ and $\Delta\in o\left(n^{\eps_1}/\log^{\eps_1+1}n\right)$, then $w\in\Omega\left(n\cdot\Delta^{-1/\eps_1}/\log^{1+\frac{1}{\eps_1}}n\right)$.
\item If $\beta>3$ and $\Delta\in o\left(n^{\eps_1+1}\right)$, then $w\in\Omega\left(n\cdot\Delta^{-1/\eps_1}\right)$.
\end{enumerate}
\end{corollary}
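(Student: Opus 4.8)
The plan is to obtain the Corollary directly from Theorem~\ref{thm:resolution}, whose two hypotheses are supplied by Lemma~\ref{lem:property1} and Lemma~\ref{lem:property2}, and then to bridge the gap that Theorem~\ref{thm:resolution} presupposes an unsatisfiable formula whereas the two lemmas hold unconditionally. Concretely, I would fix $\eps,\eps_1,\eps_2$ exactly as in the statement (these are the constants of Lemma~\ref{lem:property2}) and set $w$ to the value claimed in the Corollary in each of the three cases, chosen small enough that Lemma~\ref{lem:property2} applies. Lemma~\ref{lem:property2} then gives, \aas, that every $C'\subseteq C$ with $\frac13 w\le|C'|\le\frac23 w$ has at least $\eps|C'|$ unique variables, which is exactly hypothesis~(ii) of Theorem~\ref{thm:resolution}. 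Hypothesis~(i) — that every $C'$ with $|C'|\le w$ has at least $|C'|$ different variables — is precisely $(w,0)$-bipartite expansion of $G(\Phi)$, so it is provided by Lemma~\ref{lem:property1}, provided the chosen $w$ and the value of $\Delta$ fall inside the range for which Lemma~\ref{lem:property1} guarantees expansion.

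Establishing this last inclusion is the technical heart of the argument, because Lemma~\ref{lem:property1} is phrased with a different pair of exponents, $k\frac{\beta-2}{\beta-1}-1$ and $(k-2)\frac{\beta-2}{\beta-1}$. I would therefore compare the two width ranges: in case~(i) Lemma~\ref{lem:property1} gives expansion for $w$ up to $n^{(k\rho-1)/((k-2)\rho)}\Delta^{-1/((k-2)\rho)}$ with $\rho=\frac{\beta-2}{\beta-1}$, whereas the Corollary's $w$ is $\Theta(n^{\eps_2/\eps_1}\Delta^{-1/\eps_1})$, and one has to check the former dominates the latter over the whole admissible (possibly polynomially growing) range of $\Delta$. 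For constant $\Delta$ this comes down to $\frac{k\rho-1}{(k-2)\rho}\ge\frac{\eps_2}{\eps_1}$, which for small $\eps$ is equivalent to $\rho\le\frac12$, true since $\beta<3$; the cases $\beta=3$ and $\beta>3$ reduce to the analogous easy comparisons with $\rho=\frac12$ and $\rho>\frac12$, and the same exponent inequalities also show that the Corollary's admissible window for $\Delta$ is contained in the windows of both lemmas. I would additionally note up front that $\beta>\frac{2k-1}{k-1}$ forces $\rho>\frac1k$, which is what keeps all the exponents appearing here positive.

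With both hypotheses of Theorem~\ref{thm:resolution} available \aas, the final step handles unsatisfiability. Let $U$ be the event that $\Phi$ is unsatisfiable; by the choice of $\Delta$ we have $\Pro{U}\ge q$ for a constant $q>0$, and if $B$ is the event that one of the two properties fails then $\Pro{B}=o(1)$, so $\Pro{B\mid U}\le\Pro{B}/\Pro{U}=o(1)$. Hence, conditioned on $\Phi$ being unsatisfiable, both properties still hold \aas, Theorem~\ref{thm:resolution} applies, and the resolution width of $\Phi$ is $\Omega(w)$ \aas. The one part I expect to require genuine care is the exponent bookkeeping in the middle paragraph — reconciling the two $(\eps_1,\eps_2)$-conventions and checking the range inclusions for super-constant $\Delta$ — while the conditioning step and the rest are immediate.
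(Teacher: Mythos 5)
Your proposal matches the paper's own proof: both invoke Lemma~\ref{lem:property1} and Lemma~\ref{lem:property2} to furnish the two hypotheses of Theorem~\ref{thm:resolution}, and both reduce to showing that the width range from Lemma~\ref{lem:property2} (with the Corollary's $\eps_1,\eps_2$) is dominated by the expansion range from Lemma~\ref{lem:property1} (with exponents $k\frac{\beta-2}{\beta-1}-1$ and $(k-2)\frac{\beta-2}{\beta-1}$), together with a containment of the admissible $\Delta$-windows. Your explicit conditioning step $\Pro{B\mid U}\le\Pro{B}/\Pro{U}=o(1)$ cleanly addresses a point the paper leaves implicit; otherwise the exponent bookkeeping you outline (in particular reducing the case $\beta<3$ to $\frac{\beta-2}{\beta-1}<\tfrac12$) is exactly the comparison the paper carries out, though the paper does it for the actual $\eps>0$ via a monotonicity argument rather than in the small-$\eps$ limit, so be sure to make that step precise.
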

\begin{proof}
If both \lemref{lem:property1} and \lemref{lem:property2} hold, we can use \thmref{thm:resolution} to get the desired bound on resolution width.
As stated before, \thmref{thm:resolution} only holds for unsatisfiable instances.
Thus, if a random formula $\Phi$ is unsatisfiable at least with constant probability, it holds that the conditional probability for the bounds stated in the corollary to hold is at least
\[\frac{\Pr\left(\Phi\text{ unsat}\right)-o(1)}{\Pr\left(\Phi\text{ unsat}\right)}=1-o(1),\]
conditioned on $\Phi$ being unsatisfiable, where the $o(1)$ term is the error probability from \lemref{lem:property1} and \lemref{lem:property2}.
We are going to show that the values of $w$ from \lemref{lem:property2} are smaller than those from \lemref{lem:property1}.
The expansion bound from \lemref{lem:property1} also holds for those smaller values of $w$ due to the definition of bipartite expansion.
Thus, the bound from \lemref{lem:property2} gives us the maximum $w$ we can achieve.

First, consider the case $\beta\in(\frac{2k-1}{k-1},3)$.
Let $\eps_3=k\frac{\beta-2}{\beta-1}-1$ and $\eps_4=(k-2)\frac{\beta-2}{\beta-1}$.
We want to show that
\begin{equation}\label{eq:unify}
n^{\eps_2/\eps_1}\cdot \Delta^{-1/\eps_1}\le n^{\eps_3/\eps_4}\cdot\Delta^{-1/\eps_4}.
\end{equation}
Both bounds only hold for 
\[\Delta\in o\left(n^{\eps_2}\right)\subseteq o\left(n^{\eps_3}/\log^{\eps_4}(n)\right),\]
since $\eps_2=(k-\eps)\frac{\beta-2}{\beta-1}-1<k\frac{\beta-2}{\beta-1}-1=\eps_3$.
It holds that 
\[\eps_2/\eps_1=\frac{(k-\eps)\cdot\frac{\beta-2}{\beta-1}-1}{\frac{k-\eps}{2}-1}<\frac{k\cdot\frac{\beta-2}{\beta-1}-1}{\frac{k}{2}-1}<\frac{k\frac{\beta-2}{\beta-1}-1}{(k-2)\cdot\frac{\beta-2}{\beta-1}}=\eps_3/\eps_4.\]

We can now distinguish four cases.
First, assume $\Delta\ge1$.
If $\eps_1\le\eps_4$, then $\Delta^{-1/\eps_1}\le \Delta^{-1/\eps_4}$, which implies Inequality~\eqref{eq:unify}.
If $\eps_1>\eps_4$, we need to ensure \[\Delta\le n^{\left(\frac{\eps_3}{\eps_4}-\frac{\eps_2}{\eps_1}\right)/\left(\frac{1}{\eps_4}-\frac{1}{\eps_1}\right)}.\]
This is already the case, since we assume $\Delta\in o(n^{\eps_2})$ and $\eps_2\le(\frac{\eps_3}{\eps_4}-\frac{\eps_2}{\eps_1})/(\frac{1}{\eps_4}-\frac{1}{\eps_1})$ due to $\eps_1>\eps_4$ and $\eps_3\ge\eps_2$. 
Thus, Inequality~\eqref{eq:unify} holds.

Now assume $\Delta<1$.
If $\eps_1\le\eps_4$, we need to ensure that \[\Delta\ge n^{(\frac{\eps_2}{\eps_1}-\frac{\eps_3}{\eps_4})/(\frac{1}{\eps_1}-\frac{1}{\eps_4})}.\]
This already holds, since we assume $\Delta\in\Omega(1)$ and $(\frac{\eps_2}{\eps_1}-\frac{\eps_3}{\eps_4})/(\frac{1}{\eps_1}-\frac{1}{\eps_4})\le 0$ due to $\eps_1\le\eps_4$ and $\eps_2/\eps_1\le\eps_3/\eps_4$.
Thus, Inequality~\eqref{eq:unify} holds.
If $\eps_4\le\eps_1$, then $\Delta^{-1/\eps_1}\le\Delta^{-1/\eps_4}$ and Inequality~\eqref{eq:unify} holds as well.

Now consider $\beta=3$.
We need to show that
\[n/\left(\ln n\right)^{\frac{\eps_1+1}{\eps_1}}\Delta^{-1/\eps_1}=n/\left(\ln n\right)^{\frac{k-\eps}{k-\eps-2}}\Delta^{-2/(k-\eps-2)}\in\Oh(n\cdot\left(\Delta\cdot\ln{n}\right)^{-2/(k-2)}).\]
Again, the left-hand side is from \lemref{lem:property2} and the right-hand side is from \lemref{lem:property1}.
This holds, due to our assumption $\Delta\in\Omega(1)$ and since $\eps_1=\frac{k-\eps}{2}-1>0$ implies $0<\eps<k-2$ and thus $\frac{k-\eps}{k-\eps-2}>\frac{2}{k-\eps-2}>\frac{2}{k-2}$.
Additionally, the bound only holds up to $\Delta\in o \big(n^{(k-\eps-2)/2}/\ln^{(k-\eps)/2}(n)\big)\subseteq o\big(n^{(k-2)/2}/\log^{(k-2)/2+1}(n)\big)$.

For $\beta>3$ we have to show
\[n\cdot\Delta^{-1/\eps_1}\in\Oh(n\cdot\Delta^{-1/\eps_4})\]
as well as $\Delta\in o(n^{\eps_1})\subseteq o\big((n/\log{n})^{\eps_4}\big)$.
This holds since $\eps_1=\frac{k-\eps}{2}-1\le(k-2)\frac{\beta-2}{\beta-1}=\eps_4$ due to $\beta>3$.
This shows that in all three cases the bounds from \lemref{lem:property2} are smaller, thus giving us the lower bounds on resolution width as stated in the corollary.
\end{proof}

This is nearly the statement of \thmref{thm:mainone}.
However, via bipartite expansion we can already show linear resolution width at constant clause-variable ratios for $\beta>\frac{2k-2}{k-2}$ instead of $\beta>3$.
This gives a better bound for $k\ge5$.
The bounds on bipartite expansion and the resulting bounds on resolution width will be derived in the next section.

\subsection{A Lower Bound on Bipartite Expansion}\label{sec:BipExpansion}

In this section we show an improved bound on the bipartite expansion.
We will use it to obtain a linear lower bound on resolution width for $\beta > \frac{2k - 2}{k - 2}$, which is
potentially smaller than $3$, and therefore improves the previous bound.  
Recall that linear resolution width implies exponential resolution size, 
and thus also exponential tree-like resolution size.
Moreover, our bound on the bipartite expansion can also be used to
bound the so-called resolution clause space, which additionally yields
an exponential lower bound on tree-like resolution size for
$\beta > \frac{2k - 3}{k - 2}$ as we will see at the end of this section.
The following lemma shows the bipartite
expansion property.


\begin{lemma}\label{lem:expansion}
  Let $\Phi$ be a random power-law $k$-SAT formula with $n$ variables,
  $m$ clauses, $k \ge 3$, power-law exponent
  $\beta > \frac{2k-3}{k-2}$, and let $\eps\in(0,(k-1)\cdot\frac{\beta-2}{\beta-1}-1)$ constant.  If
  $\Delta = m/n \in o\big(n^{\eps}/\log^{\eps} n\big)$, then there
  exists an $r \in \Theta\big(n\cdot\Delta^{-1/\eps}\big)$ such
  that the clause-variable incidence graph $G(\Phi)$ is an
  $(r, c)$-bipartite expander \aas{}\ for
  $c=(k-1)-(1+\eps)\cdot\frac{\beta-1}{\beta-2}$.
\end{lemma}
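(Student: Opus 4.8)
The plan is a first-moment argument over potential sparse clause-neighbourhoods, following the template of the proofs of \lemref{lem:property1} and \lemref{lem:property2} but with a coarser per-set estimate that already suffices for a linear expansion factor. Fix a constant $\eps$ with $0<\eps<(k-1)\tfrac{\beta-2}{\beta-1}-1$; this interval is nonempty exactly when $\beta>\tfrac{2k-3}{k-2}$, and with $c=(k-1)-(1+\eps)\tfrac{\beta-1}{\beta-2}$ we then have $c>0$ and the identity $(k-1-c)\tfrac{\beta-2}{\beta-1}=1+\eps$. Call a set $C'\subseteq C$ \emph{bad} if $|N(C')|<(1+c)|C'|$; the goal is to show that a.a.s.\ no bad set of size at most $r$ exists, for a suitable $r\in\Theta\big(n\Delta^{-1/\eps}\big)$. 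As in the cited lemmas, I would first relax the model, pretending that the $k|C'|$ literal slots of the clauses in $C'$ are filled independently with the variable probabilities $p_v$ (allowing in-clause repetitions); this only increases the probability that a given clause set uses few distinct variables, hence yields a valid upper bound.

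For a fixed $C'$ with $|C'|=i$ I would bound $\Pr[|N(C')|<(1+c)i]$ by the standard first-occurrence trick: if $|N(C')|=s$ then exactly $s$ of the $ki$ slots are first occurrences, choosing a set of $s$ slots that contains all of them costs $\binom{ki}{s}$, and each of the remaining $ki-s$ slots must then take a value among the at most $s$ distinct variables occurring in the chosen slots, which has probability at most $F(s)$ per slot, where $F(s)$ is the sum of the $s$ largest variable probabilities. The power-law weight estimate used in \lemref{lem:property1} gives $F(s)=O\big((s/n)^{(\beta-2)/(\beta-1)}\big)$ for $1\le s\le n$. Bounding $\binom{ki}{s}\le 2^{ki}$, using that $F(s)^{ki-s}$ is non-decreasing in $s$ (because $F$ is non-decreasing and at most $1$), and summing over the at most $ki$ relevant values $s\le\lceil(1+c)i\rceil-1$, this gives $\Pr[|N(C')|<(1+c)i]\le\kappa^i\,(i/n)^{(\beta-2)(k-1-c)i/(\beta-1)}$ for a constant $\kappa$. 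Multiplying by $\binom{m}{i}\le(em/i)^i=(e\Delta)^i(n/i)^i$ and using $(k-1-c)\tfrac{\beta-2}{\beta-1}=1+\eps$ collapses the union bound over size-$i$ sets to
\[
  \Pr[\,\exists\text{ bad }C'\text{ with }|C'|=i\,]\;\le\;\kappa^i\big(\Delta\,(i/n)^{\eps}\big)^i .
\]

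Finally I would sum this over $1\le i\le r$, exactly as in \lemref{lem:property1}. Taking $r=\rho\,n\,\Delta^{-1/\eps}$ with $\rho$ a small enough constant makes $\kappa\Delta(i/n)^\eps\le\tfrac12$ for all $i\le r$, so splitting the sum at $i_0=\lceil\log_2 n\rceil$ gives, for the tail $i>i_0$, at most $\sum_{i>i_0}2^{-i}=n^{-\Omega(1)}$, while for $i\le i_0$ one has $\kappa\Delta(i/n)^\eps\le\kappa\Delta(i_0/n)^\eps$, so a geometric series bounds that part by $O\big(\Delta\log^{\eps}n/n^{\eps}\big)$, which is $o(1)$ precisely under $\Delta\in o(n^{\eps}/\log^{\eps}n)$. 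Hence a.a.s.\ no bad set of size $\le r$ exists, i.e.\ $G(\Phi)$ is an $(r,c)$-bipartite expander.

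I do not anticipate a deep obstacle here; the work is bookkeeping. The two points requiring care are (i) verifying that the exponent in the per-set estimate collapses exactly to $\eps=(k-1-c)\tfrac{\beta-2}{\beta-1}-1$, which rests on $F(s)=\Theta\big((s/n)^{(\beta-2)/(\beta-1)}\big)$ (equivalently, that the total power-law weight is $\Theta\big(n^{(\beta-2)/(\beta-1)}\big)$), and (ii) splitting $\sum_i$ at a logarithmic scale so that the failure probability is genuinely $o(1)$ and matches the $\log^{\eps}n$ slack in the hypothesis on $\Delta$. The hypothesis $\beta>\tfrac{2k-3}{k-2}$ enters only as the condition that a positive $\eps$ (equivalently $c>0$) exists; replacing the crude estimate by the sharper "pairs-of-occurrences"/$\sum_x p_x^2$ bound used in \lemref{lem:property1} would enlarge the admissible range and improve $c$, but at the price of separating the cases $\beta<3$, $\beta=3$, and $\beta>3$, which is not needed here.
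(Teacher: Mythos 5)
Your proposal is correct and follows essentially the same route as the paper's proof: a union bound over subsets of clauses, relaxation to i.i.d. variable sampling, a first-occurrence counting argument using $F(s)=O\big((s/n)^{(\beta-2)/(\beta-1)}\big)$ to bound the probability that $ki$ draws produce few distinct variables, and a geometric-series sum over set sizes split at a logarithmic threshold. The only deviations are cosmetic: you take an explicit union over the number $s$ of distinct variables and then use monotonicity of $F(s)^{ki-s}$ to collapse it, whereas the paper directly bounds the probability of drawing at most $(1+c)i$ distinct variables in a single estimate; both yield the identical per-$i$ bound $\kappa^i\big(\Delta(i/n)^\eps\big)^i$ and hence the same conclusion.
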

\begin{proof}
First, note that our choice of $\beta > \frac{2k-3}{k-2}$ guarantees that the interval $(0,(k-1)\cdot\frac{\beta-2}{\beta-1}-1)$ from which we choose $\eps$ is not empty.
This interval is chosen in such a way that $c>0$ is guaranteed.
As in the proof of ~\cite[Lemma~5.1]{BenGalesi}, we define a bad event $\E$, that $G\left(\Phi\right)$ is not an $(r,c)$-bipartite expander.
If $\E$ happens, then there is a set $C' \subseteq C$ with $1\le|C'|\le r$ such that $|N(C')|<(1+c)\cdot|C'|$.
Given a set $C'\subseteq C=[m]$ of clause indices with $|C'|=i$ we want to bound the probability $P_{i}$ that the $k\cdot i$ indices of variables appearing in those clauses contain at most $(1+c)\cdot i$ different variables.
Since clauses contain variables without repetition, it holds that $P_i$ is dominated by the probability to draw at most $(1+c)\cdot i$ different variables when drawing $k\cdot i$ Boolean variables independently at random.
Now imagine sampling these $k\cdot i$ variables in some arbitrary, but fixed order.
It holds that the probability to draw a new variable is at most $1$, while the probability to draw an old variable is at most the probability to draw one of the $(1+c)\cdot i$ variables of maximum probability.
As before, the sum of these probabilities is denoted by $F( (1+c)\cdot i)$.
This gives us
\[P_i\le \binom{m}{i}\cdot\binom{k\cdot i}{(1+c)\cdot i}\cdot 1^{(1+c)\cdot i}\cdot F( (1+c)\cdot i)^{k\cdot i-(1+c)\cdot i}.\]
Note that this expression also captures the case that we draw fewer than $(1+c)\cdot i$ different variables, since the probability to draw a new variable is bounded by one and thus also captures the probability that this new variable is in fact an old one.
In the case of a power-law distribution, we have
\[F( (1+c)\cdot i)\sim \left(\frac{(1+c)\cdot i}{n}\right)^{\frac{\beta-2}{\beta-1}}\]
due to \lemref{lem:powerlaw} and thus
\begin{align*}
P_i	&\le \binom{m}{i}\cdot\binom{k\cdot i}{(1+c)\cdot i}\cdot \left(\frac{(1+c)\cdot i}{n}\right)^{(k-(1+c))\cdot\frac{\beta-2}{\beta-1}\cdot i}\\
		& \le \left(\frac{e\cdot m}{i}\right)^i \cdot \left(\frac{e\cdot k}{1+c}\right)^{(1+c)\cdot i}\cdot\left(\frac{(1+c)\cdot i}{n}\right)^{(k-(1+c))\cdot\frac{\beta-2}{\beta-1}\cdot i}\\
		& = \kappa(c,\beta,k)^i \cdot \Delta^i \left(\frac{i}{n}\right)^{i((k-(1+c))\cdot\frac{\beta-2}{\beta-1}-1)}\\
		& = \kappa(c,\beta,k)^i \cdot \Delta^i \left(\frac{i}{n}\right)^{i\cdot \eps}
\end{align*}
for some constant $\kappa(c,\beta,k)>0$, $m=\Delta\cdot n$, and $c=(k-1)-(1+\eps)\cdot\frac{\beta-1}{\beta-2}$.

Summing over all $i\ge 1$ now yields
\[\Pro{\E} \le \sum_{i=1}^r \kappa(c,\beta,k)^i \cdot \Delta^i\cdot \left(\frac{i}{n}\right)^{i\cdot\eps}\]
We split this sum into two parts, the first part from $i=1$ to $\left\lfloor \eps\cdot \log n\right\rfloor$ and the second part from $\left\lceil \eps\cdot \log n\right\rceil$ to $r$.
For the first part we get
\begin{align*}
	\sum_{i=1}^{\left\lfloor \eps\cdot \log n\right\rfloor} \kappa(c,\beta,k)^i \cdot\Delta^i\cdot \left(\frac{i}{n}\right)^{i\cdot\eps}
	& \le \sum_{i=1}^{\left\lfloor \eps\cdot \log n\right\rfloor} \kappa(c,\beta,k)^i \cdot\Delta^i\cdot \left(\frac{\eps\cdot \log n}{n}\right)^{i\cdot\eps}\\
	& \le 2\cdot \kappa(c,\beta,k)\cdot\Delta\cdot \left(\frac{\eps\cdot \log n}{n}\right)^{\eps}\\
	& \in\Oh\left(\Delta\left(\frac{\log n}{n}\right)^{\eps}\right),
\end{align*}
which holds, since $\sum_{i=1}^m \alpha^i\le 2\cdot \alpha$ for all $m\ge1$ and $\alpha<\frac12$. 
This holds for big enough values of $n$ and for $\Delta\in o(n^\eps/\log^\eps n)$.
For the second part we get
\begin{equation*}
	\sum_{i=\left\lceil\eps\cdot \log n\right\rceil}^r \kappa(c,\beta,k)^i \cdot \Delta^i\cdot \left(\frac{i}{n}\right)^{i\cdot\eps}\\
	\le \sum_{i=\left\lceil \eps\cdot \log n\right\rceil}^r 2^{-i}
	\in\Oh\left(\left(\frac{1}{n}\right)^{\eps}\right),
\end{equation*}
which holds if we choose 
\[r\in\Oh\left(n\cdot\Delta^{-1/\eps}\right)\]
small enough so that $\Delta\cdot\left(\frac{r}{n}\right)^{\eps}<\frac{1}{2\cdot\kappa(c,\beta,k)}$.
\end{proof}

This notion of bipartite expansion is connected to the resolution width of a formula.
The following corollary, implicitly stated by \citet{shortProofs}, formalizes this connection.

\begin{corollary}[\cite{shortProofs}]\label{cor:exp-res}
Let $k\ge3$ integer and constant, let $\eps>0$ constant, and let $\Phi$ be an unsatisfiable Boolean formula in $k$-CNF.
If there is a constant $\eps>0$ such that $G(\Phi)$ is a $\left(r,\frac{k+\eps}{2}-1\right)$-bipartite expander, then $\Phi$ has resolution width at least $\Omega(r)$.
\end{corollary}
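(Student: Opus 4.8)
The plan is to deduce Corollary~\ref{cor:exp-res} directly from Theorem~\ref{thm:resolution} by checking that an $\bigl(r,\frac{k+\eps}{2}-1\bigr)$-bipartite expander satisfies both combinatorial hypotheses of that theorem with the parameter there set to $w=r$. Write $c=\frac{k+\eps}{2}-1$; since $k\ge 3$ and $\eps>0$ we have $c>0$, so this is a genuine expansion parameter, and $1+c=\frac{k+\eps}{2}>1$. For condition~(i) of Theorem~\ref{thm:resolution}, take any $C'\subseteq C$ with $|C'|\le r$. Bipartite expansion gives $|N(C')|\ge (1+c)|C'|\ge |C'|$, so $C'$ contains at least $|C'|$ distinct variables, as required.

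The only substantive point is condition~(ii). Fix $C'$ with $\frac13 r\le |C'|\le\frac23 r$ (in particular $|C'|\le r$, so expansion applies), and set $i=|C'|$, $N=|N(C')|$. Consider the bipartite graph induced by $G(\Phi)$ on $C'\cup N(C')$: each clause vertex has degree exactly $k$ (each clause carries $k$ distinct variables), so this graph has $ki$ edges. Let $u$ be the number of \emph{unique} variables of $C'$, i.e.\ variables of degree exactly $1$ in this induced graph; the other $N-u$ variables have degree at least $2$. Counting edges from the variable side, $ki\ge u+2(N-u)=2N-u$, hence $u\ge 2N-ki$. By expansion, $N\ge (1+c)i=\frac{k+\eps}{2}\,i$, so $u\ge (k+\eps)i-ki=\eps i$. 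Thus $C'$ has at least $\eps\,|C'|$ unique variables, which is condition~(ii) with the constant $\eps$.

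Both hypotheses of Theorem~\ref{thm:resolution} now hold with $w=r$, and $\Phi$ is by assumption an unsatisfiable $k$-CNF with $k\ge 3$, so the theorem yields resolution width $\Omega(r)$, completing the proof. I do not expect a real obstacle here: beyond citing Theorem~\ref{thm:resolution} the argument is just the elementary degree count $u\ge 2|N(C')|-k|C'|$. The one thing worth highlighting is that the expansion constant $\frac{k+\eps}{2}-1$ is chosen precisely so that $2N-ki$ evaluates to $\eps i$, exactly matching the unique-variable requirement of Theorem~\ref{thm:resolution}.
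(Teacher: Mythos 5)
Your proof is correct and takes essentially the same route as the paper: both deduce condition~(i) of Theorem~\ref{thm:resolution} immediately from $1+c>1$ and condition~(ii) from the inequality $|\delta C'|\ge 2|N(C')|-k|C'|$ combined with the expansion bound. The only difference is that you derive this inequality directly by an edge count, whereas the paper cites it from Ben-Sasson and Wigderson.
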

\begin{proof}
Due to the definition of bipartite expansion, $\frac{k+\eps}{2}>1$ ensures the first condition of \thmref{thm:resolution}.
We will show that the second condition is fulfilled as well.
Let $G(\Phi)=(C,V,E)$ and let $C'\subseteq C$ with $\frac13 r\le|C'|\le\frac23 r$.
Let $\delta C'$ denote the set of unique variables from $C'$, \ie $\delta C'=\left\{v\in N(C')\mid |N(v)\cap C'|=1\right\}$.
As Ben-Sasson and Widgerson state in~\cite[proof of Theorem~6.5]{shortProofs} it holds that:
\[|N(C')| - |\delta C'| \le (k\cdot |C'|-|\delta C'|)/2,\]
which implies
\[|\delta C'|\ge 2|N(C')|-k\cdot |C'|\ge \eps \cdot |C'|\]
due to the $\left(r,\frac{k+\eps}{2}-1\right)$-bipartite expansion.
These two properties imply a resolution width of $\Omega(r)$.
\end{proof}

This result on the bipartite expansion of power-law random $k$-SAT allows us to derive the following corollary on resolution width.
Again, we require the clause-variable ratio $\Delta$ to be high enough for instances to be unsatisfiable with at least constant probability.

%
%
%
\begin{corollary}\label{cor:expansion-resolution}
Let $\Phi$ be an unsatisfiable random power-law $k$-SAT formula with $n$ variables, $m\in\Omega(n)$ clauses, $k\ge3$, and power-law exponent $\beta>\frac{2k-2}{k-2}$.
Let $\Delta=m/n$ be large enough so that $\Phi$ is unsatisfiable at least with constant probability.
For $0<\eps<\frac{k}{2}\cdot\frac{\beta-2}{\beta-1}-1$ constant and $\Delta\in o(n^{\eps}/\log^{\eps}n)$ it holds \aas that $\Phi$ has resolution width $w\in\Omega(n\cdot\Delta^{-1/\eps})$.
\end{corollary}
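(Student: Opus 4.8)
The plan is to read this statement off from the bipartite-expansion bound of Lemma~\ref{lem:expansion} combined with the width lower bound of Corollary~\ref{cor:exp-res}, with a short parameter check in between and one step to dispose of the restriction to unsatisfiable formulas.

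First I would invoke Lemma~\ref{lem:expansion} with the constant $\eps$ supplied by the hypothesis. Its preconditions hold: $\beta > \frac{2k-2}{k-2}$ implies $\beta > \frac{2k-3}{k-2}$, and $\Delta \in o(n^{\eps}/\log^{\eps} n)$ is assumed. This gives, \aas, that $G(\Phi)$ is an $(r,c)$-bipartite expander for some $r \in \Theta(n\cdot\Delta^{-1/\eps})$ and $c = (k-1) - (1+\eps)\frac{\beta-1}{\beta-2}$.

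Next I would check that this $c$ is large enough to feed into Corollary~\ref{cor:exp-res}, which asks for an $(r, \frac{k+\eps'}{2}-1)$-bipartite expander with $\eps' > 0$ constant. The hypothesis $\eps < (\frac{k}{2}-1)\frac{\beta-2}{\beta-1}-1$ rearranges to $(1+\eps)\frac{\beta-1}{\beta-2} < \frac{k}{2}-1$, so $c > \frac{k}{2}$; and $\beta > \frac{2k-2}{k-2}$ is precisely the condition under which a positive $\eps$ with this property exists at all. Setting $\eps' = 2(c+1)-k > 0$ we have $c = \frac{k+\eps'}{2}-1$, so Corollary~\ref{cor:exp-res} applies and yields, for every unsatisfiable such $\Phi$, resolution width $\Omega(r) = \Omega(n\cdot\Delta^{-1/\eps})$.

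Finally, since neither Lemma~\ref{lem:expansion} nor Corollary~\ref{cor:exp-res} conditions on $\Phi$ being unsatisfiable, I would remove that restriction exactly as in Corollary~\ref{cor:direct}: the expansion event fails with probability $o(1)$ unconditionally, and conditioning on the event that $\Phi$ is unsatisfiable (which has probability at least some constant $p>0$ by the choice of $\Delta$) inflates this failure probability by a factor of at most $1/p$, so it stays $o(1)$. Hence \aas over the unsatisfiable instances the resolution width is $\Omega(n\cdot\Delta^{-1/\eps})$. I do not expect a genuine obstacle here, since all the work sits in Lemma~\ref{lem:expansion}; the only thing to be careful about is the elementary inequality verifying $c > \frac{k}{2}-1$ under the stated bound on $\eps$, together with the observation (via $\beta > \frac{2k-2}{k-2}$) that the admissible range for $\eps$ is non-empty.
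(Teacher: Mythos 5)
Your proposal is correct and follows the paper's own route: invoke Lemma~\ref{lem:expansion} to get an $(r,c)$-bipartite expander with $r \in \Theta(n\,\Delta^{-1/\eps})$ and $c = (k-1)-(1+\eps)\frac{\beta-1}{\beta-2}$, verify via the bound on $\eps$ that $c > \frac{k}{2}-1$, and conclude with Corollary~\ref{cor:exp-res}. The only difference is cosmetic: you spell out the conditioning step to handle the restriction to unsatisfiable instances, which the paper addresses only in the prose preceding Corollary~\ref{cor:direct} rather than inside the proof.
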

\begin{proof}
Due to \lemref{lem:expansion} $G(\Phi)$ is a $(\Omega(n\cdot\Delta^{-1/\eps}),c)$-bipartite expander for $c=(k-1)-(1+\eps)\frac{\beta-1}{\beta-2}$.
With $\beta>\frac{2k-2}{k-2}$, it holds that we can choose an $\eps>0$ so that $c>\frac{k}{2}-1$.
This means, the requirement of Corollary~\ref{cor:exp-res} is fulfilled and implies the statement.
\end{proof}

Together with Corollary~\ref{cor:direct} the former corollary implies \thmref{thm:mainone}.

\mainone

Additionally, \citet{BenGalesi} state a theorem that directly connects bipartite expansion and tree-like resolution size.
An application of this theorem yields a slightly better bound on tree-like resolution size than the ones derived from resolution width.

\begin{theorem}[\cite{BenGalesi}] \label{thm:treelike}
Let $\Phi$ be an unsatisfiable CNF and let $G\left(\Phi\right)=\left(U\cup V, E\right)$ be the clause-variable incidence graph of $\Phi$.
If $G\left(\Phi\right)$ is a $(r,c)$-bipartite expander then $\Phi$ has resolution clause space of at least $\frac{c\cdot r}{2+c}$ and tree-like resolution size of at least $\exp\left(\Omega(\frac{c\cdot r}{2+c})\right)$.
\end{theorem}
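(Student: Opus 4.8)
The plan is to prove the two conclusions separately, using the standard semantic-entailment framework for resolution space; both rest on one combinatorial consequence of the $(r,c)$-expander hypothesis: no set of at most $r$ clauses of $\Phi$ is jointly unsatisfiable. Indeed, for every $C'' \subseteq C$ with $|C''| \le r$ we have $|N(C'')| \ge (1+c)|C''| \ge |C''|$, so Hall's theorem gives a matching of $C''$ into its variable neighbourhood, and setting each matched variable so as to satisfy its clause shows $C''$ is satisfiable.

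For the clause-space bound I would view a resolution refutation of $\Phi$ as a sequence of memory configurations $M_0 = \emptyset, M_1, \dots, M_T$ with $\bot \in M_T$, and attach to each $M$ the potential $\mu(M) = \min\{\,|C'| : C' \subseteq C,\ C' \models M\,\}$, the size of a smallest subset of $\Phi$ semantically entailing all of $M$. Three facts drive the argument: (i) $\mu(M_0) = 0$; (ii) $\mu$ does not increase under erasure or under resolution inference, and increases by at most $1$ on an axiom download, so $|\mu(M_{t+1}) - \mu(M_t)| \le 1$; and (iii) $\mu(M_T) > r$, since an entailing set for a configuration containing $\bot$ is itself unsatisfiable, hence by the matching argument above has more than $r$ clauses. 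Because $\mu$ moves from $0$ to above $r$ in unit steps, there is a first step $t^\star$ with $\mu(M_{t^\star}) \ge \tfrac{2r}{2+c}$, and then $\tfrac{2r}{2+c} \le \mu(M_{t^\star}) < \tfrac{2r}{2+c}+1 \le r$, so the expansion hypothesis applies to a minimal entailing set $C^\star$ of $M_{t^\star}$ and all its subsets. The crux is to show the memory is then large: by minimality, each $A \in C^\star$ has a \emph{witness} clause $D_A \in M_{t^\star}$ together with an assignment satisfying $C^\star \setminus \{A\}$ yet falsifying both $A$ and $D_A$; an expansion-based counting argument bounds the number of axioms of $C^\star$ that any single memory clause can witness by roughly $2/c$, whence $|M_{t^\star}| \ge \tfrac{c}{2}|C^\star| \ge \tfrac{cr}{2+c}$. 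Since $t^\star$ occurs in an arbitrary refutation, $\mathrm{CSpace}(\Phi) \ge \tfrac{cr}{2+c}$.

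For the tree-like size bound I would invoke the classical relation between tree-like resolution size and clause space: any tree-like refutation of size $S$ can be executed in clause space at most $\lceil \log_2 S\rceil + O(1)$ (the standard black-pebbling strategy on the proof tree), and general clause space never exceeds tree-like clause space; hence $\mathrm{CSpace}(\Phi) \le \log_2 S_{\mathrm{tree}}(\Phi) + O(1)$. Combined with the bound just established, this yields $S_{\mathrm{tree}}(\Phi) \ge \exp\!\big(\Omega(\tfrac{cr}{2+c})\big)$, as claimed. (Alternatively one can connect expansion to tree-like size more directly through a Prover--Delayer game, which is essentially the route taken in \cite{BenGalesi}.)

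The step I expect to be the main obstacle is the witness-counting lemma: converting the neighbourhood-expansion bound $|N(C')| \ge (1+c)|C'|$ into the statement that each memory clause witnesses only $O(1/c)$ axioms of the minimal entailing set, sharply enough to produce the clean factor $\tfrac{c}{2+c}$. This is delicate because derived clauses held in memory may be arbitrarily wide, so one cannot argue directly on the boundary variables of $C^\star$; the argument must pass to a suitable sub-collection of $C^\star$ whose private variables are genuinely governed by the expansion inequality and then account for the constraints imposed by the witness assignments. This is precisely the technical core carried out in \cite{BenGalesi} (in the line of Alekhnovich--Ben-Sasson--Razborov--Wigderson and Ben-Sasson--Galesi space lower bounds from expansion); the remaining ingredients — the Hall argument, monotonicity of $\mu$, and the pebbling bound for tree-like refutations — are routine.
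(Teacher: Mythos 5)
Your proposal follows exactly the same two-step route as the paper's (purely citational) proof: expansion implies the clause-space lower bound $\frac{cr}{2+c}$ via the Ben-Sasson--Galesi semantic-progress-measure argument, and clause space then lower-bounds tree-like size through the Esteban--Tor\'an pebbling relation that the paper invokes as Theorem~1.6 of the cited space-bounds reference. You fill in the arguments behind the two citations rather than just invoking them, but the structure and the two key lemmas are identical, so this is essentially the paper's proof.
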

\begin{proof}
\cite[Theorems~4.2 and~3.3]{BenGalesi} state together that any bipartite graph that is an $(r,c)$-bipartite expander has a resolution clause space of at least $\frac{c\cdot r}{2+c}$.
Thus, with~\cite[ Theorem~1.6]{spaceBounds}, it holds that the resolution size for formulas whose clause-variable incidence graph is an $(r,c)$-bipartite expander, is at least $\exp\left(\frac{c\cdot r}{2+c}\right)$.
\end{proof}

This leads to the following corollary, which already asserts exponential tree-like resolution size for constant clause-variable ratios at $\beta>\frac{2k-3}{k-2}$.

\begin{corollary}\label{cor:treelike-size}
Let $\Phi$ be an unsatisfiable random power-law $k$-SAT formula with
  $n$ variables, $m = \Omega(n)$ clauses, $k \ge
  3$, and power-law exponent $\beta>\frac{2k-3}{k-2}$.  Let $\Delta = m /
  n$ be large enough so
  that~$\Phi$ is unsatisfiable at least with constant probability.  
	For $0<\eps<(k-1)\cdot\frac{\beta-2}{\beta-1}-1$ constant and $\Delta\in o((n/\log n)^{\eps})$, it holds that $\Phi$ has tree-like resolution size $\exp(\Omega(n\cdot\Delta^{-1/\eps}))$.
\end{corollary}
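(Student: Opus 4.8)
The plan is to obtain the statement by plugging the bipartite-expansion bound of Lemma~\ref{lem:expansion} into the expander-to-tree-like-size implication of Theorem~\ref{thm:treelike}, and then to account for the unsatisfiability hypothesis that Theorem~\ref{thm:treelike} requires.

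First I would fix a constant $\eps$ in the admissible range $0 < \eps < (k-1)\frac{\beta - 2}{\beta - 1} - 1$. A short rearrangement shows that this range is nonempty exactly when $\beta > \frac{2k-3}{k-2}$, which is our hypothesis, and that for any such $\eps$ the quantity $c := (k - 1) - (1 + \eps)\frac{\beta - 1}{\beta - 2}$ is a strictly positive constant. The remaining hypotheses of Lemma~\ref{lem:expansion} then hold verbatim: its power-law condition is precisely $\beta > \frac{2k-3}{k-2}$, and $\Delta \in o\big((n/\log n)^\eps\big) = o\big(n^\eps/\log^\eps n\big)$. Hence Lemma~\ref{lem:expansion} yields an $r \in \Theta\big(n \cdot \Delta^{-1/\eps}\big)$ such that \aas{} the clause-variable incidence graph $G(\Phi)$ is an $(r, c)$-bipartite expander. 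Since $c$ is a positive constant, $\frac{c\, r}{2 + c} = \Theta(r) = \Theta\big(n \cdot \Delta^{-1/\eps}\big)$, so Theorem~\ref{thm:treelike} gives, for every unsatisfiable $\Phi$ whose incidence graph is such an expander, tree-like resolution size $\exp\big(\Omega(n \cdot \Delta^{-1/\eps})\big)$.

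Finally I would discharge the unsatisfiability requirement. Lemma~\ref{lem:expansion} does not condition on $\Phi$ being unsatisfiable, but by assumption $\Delta$ is chosen large enough that $\Phi$ is unsatisfiable with at least constant probability. As the expander property fails with probability only $o(1)$, the event that $\Phi$ is both unsatisfiable and has an $(r, c)$-expander incidence graph still occurs with probability bounded away from $0$; equivalently, conditioned on unsatisfiability it holds \aas{}. On this event the bound from the previous paragraph applies, which gives the claim.

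I do not expect a genuine obstacle here, since both ingredients are already established in the excerpt. The only points that need care are (i) verifying that the stated range of $\eps$ is exactly the range that makes $c > 0$, and that its nonemptiness is equivalent to the hypothesis $\beta > \frac{2k-3}{k-2}$, and (ii) phrasing the conclusion precisely, given that Lemma~\ref{lem:expansion} is unconditional whereas Theorem~\ref{thm:treelike} presumes an unsatisfiable formula.
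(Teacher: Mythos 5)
Your proof is correct and takes essentially the same route as the paper: apply Lemma~\ref{lem:expansion} to obtain the $(r,c)$-bipartite expander property with $r \in \Theta(n\Delta^{-1/\eps})$ and $c > 0$, then invoke Theorem~\ref{thm:treelike}. The paper's own proof is terser and leaves implicit the two sanity checks you carry out (that the stated range of $\eps$ is exactly the one making $c > 0$ and is nonempty iff $\beta > \frac{2k-3}{k-2}$, and the reconciliation of the \aas{} expansion event with the constant-probability unsatisfiability hypothesis of Theorem~\ref{thm:treelike}), but these are just details the paper elides rather than a different argument.
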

\begin{proof}
Using \lemref{lem:expansion} we see that for $\beta>\frac{2k-3}{k-2}$ the clause-variable incidence graph \aas is a $(\Theta(n\cdot\Delta^{-1/\eps}),c)$-bipartite expander for some constant $c>0$.
Thus, \thmref{thm:treelike} implies the statement.
\end{proof}

\section{The Complexity of Voronoi Diagrams}
\label{sec:compl-voron-diagr}

We first show quadratic lower bounds on the complexity (number of
non-empty regions) of order-$k$ Voronoi diagrams that already hold in
rather basic settings.  Afterwards, we consider random point sets and
prove a linear upper bound.

\subsection{Worst-Case Lower Bounds}
\label{sec:lower-bound-voronoi}

In this section, we show worst-case lower bounds on the number of
non-empty regions of higher-order Voronoi diagrams.  As already
mentioned in Section~\ref{sec:results-voronoi-diagrams}, our lower
bounds are based on previously known lower bounds on the number of
vertices of Voronoi diagrams, in conjunction with a new theorem connecting the
number of vertices with the number of regions in higher orders.  This
theorem relies on the fact that there are not too many different
points with equal distance to a set of $d + 1$ sites in
$d$-dimensional space.  For the unweighted case and for
$\p \not= \infty$, the result in the next lemma was shown by
L{\^e}~\cite{l-vdlrd-96}.  We extend it to weighted sites and
$\p = \infty$, following along the lines of L{\^e}'s
proof~\cite{l-vdlrd-96} (at least for $\p \not= \infty$):
  \begin{inparaenum}[(i)]
  \item Observe that the points with equal distance to the $d + 1$
    sites is the set of solutions to a system of polynomial equations.
  \item Show that the so-called \emph{additive complexity} of these
    polynomial equations is bounded by a constant only depending on
    $d$.
  \item Apply \cite[Proposition~3]{l-vdlrd-96}, giving an upper
    bound on the number of solutions to a system of equations that
    only depends on $d$ and on the additive complexities of the
    equations.
  \end{inparaenum}

\begin{lemma}
  \label{lem:number-equidistant-points}
  Let $A$ be a set of $d + 1$ weighted sites in general
  position\footnote{For a formal definition what \emph{general
      position} means in this context, see \cite{l-vdlrd-96}.  As
    usual, the configurations excluded by the assumption of general
    position have measure~0.} in $\mathbb R^d$ equipped with a
  $\p$-norm.  Then, the number of points with equal weighted distance
  to all sites in $A$ only depends on $d$.
\end{lemma}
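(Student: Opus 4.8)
The plan is to carry out the three-step scheme sketched just above the lemma: realize the set of equidistant points as the solution set of a square polynomial system, control the additive complexity of that system, and conclude with L\^e's solution-counting bound.

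\textbf{Reduction to a polynomial system.} Write $A = \{\pnt{s}_0, \dots, \pnt{s}_d\}$ with normalized weights $\omega_0, \dots, \omega_d > 0$, and let $\pnt{p} = (p_1,\dots,p_d)$. For $\p \neq \infty$, the point $\pnt{p}$ has equal weighted distance to all of $A$ exactly when
\[
  \omega_0^{\p} \sum_{j=1}^d \abs{s_{i,j} - p_j}^{\p} \;=\; \omega_i^{\p} \sum_{j=1}^d \abs{s_{0,j} - p_j}^{\p}, \qquad i = 1, \dots, d,
\]
a system of $d$ equations in the $d$ unknowns $p_1,\dots,p_d$ (for $\p = \infty$ replace each sum by a maximum). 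The only obstruction to this being polynomial is the absolute values for odd $\p$ and the maxima for $\p = \infty$. I would remove them by partitioning $\mathbb R^d$ into the regions obtained by fixing, for each site $\pnt{s}_i$ and coordinate $j$, the sign of $s_{i,j} - p_j$ and, for $\p = \infty$, which coordinate attains the maximum; the number of such regions is at most $2^{d(d+1)}$ (resp.\ $(2d)^{d+1}$), which depends only on $d$. On each region the system becomes polynomial up to the measure-zero boundary locus, and it suffices to bound, by a constant depending only on $d$, the number of its solutions in that region — in fact any bound on the number of solutions of the polynomialized system will do, since solutions outside the region only over-count. For even $\p$ the system is already polynomial and this decomposition is vacuous.

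\textbf{Additive complexity and L\^e's bound.} On a fixed sign region, $\abs{s_{i,j} - p_j}^{\p} = \pm (s_{i,j} - p_j)^{\p}$ is a single power of an affine form, so the $i$-th equation has the shape $\sum_j \gamma_j (s_{0,j} - p_j)^{\p} - \sum_j \delta_j (s_{i,j} - p_j)^{\p} = 0$ with fixed real coefficients $\gamma_j, \delta_j$ that absorb the signs and the weight powers $\omega_0^{\p}, \omega_i^{\p}$. Since scalar multiplication and integer exponentiation are free in the additive-complexity model, assembling such an equation costs only $\Oh(d)$ additions and subtractions, so its additive complexity is bounded by a function of $d$ alone — crucially, uniformly in $\p$ and in the weights, which enter only as constant multipliers. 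I would then invoke \cite[Proposition~3]{l-vdlrd-96}, which bounds the number of isolated solutions of a system of $d$ equations in $d$ variables in terms of $d$ and the additive complexities of the equations only; this gives a $d$-only bound per region, and summing over the $\Oh_d(1)$ regions proves the lemma for $\p \neq \infty$. For $\p = \infty$ the argument needs no such machinery: on a region where the maximizing coordinate and its sign are fixed for every site, the weighted distance to $\pnt{s}_i$ is an affine function of $\pnt{p}$, so the $d$ equations form a linear system contributing at most one point per region.

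\textbf{The main obstacle.} The delicate part is the interaction with \emph{general position}: L\^e's proposition counts non-degenerate (isolated) solutions, so one must show that for $A$ in general position the equidistant set is finite and each of its points is an isolated solution of the polynomialized system on its region — that is, that the positive-dimensional and otherwise degenerate configurations (including the boundary locus where some $s_{i,j} - p_j = 0$, or two coordinates tie for the maximum, or the two sides of an equation coincide as polynomials) all lie in the measure-zero set excluded by the hypothesis. A secondary point to get right is that the additive-complexity estimate really is independent of $\p$, which it is because raising to the $\p$-th power is a single free operation and the weights appear only as coefficients, and that the region count depends on $d$ only. The remaining ingredients — clearing denominators, the region decomposition, and applying the quoted proposition — are routine.
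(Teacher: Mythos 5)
Your proof follows the same approach as the paper's: reduce equidistance to a polynomial system, bound the additive complexity using L\^e's lemmas (observing that the weight factors, being constants, cost nothing), and invoke \cite[Proposition~3]{l-vdlrd-96}, with a case decomposition over which coordinate attains the maximum when $\p = \infty$. You are more explicit about the sign decomposition needed for odd $\p$ (which the paper silently defers to L\^e's treatment) and you notice that for $\p = \infty$ each region actually yields a linear system with at most one solution, but these are refinements of, not departures from, the paper's argument; likewise the subtlety you flag about isolated solutions and general position is handled in both proofs by deferring to L\^e's notion of general position.
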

\begin{proof}
  Assume $\p \not= \infty$, and let $\pnt{s}_0, \dots, \pnt{s}_d$ be
  $d + 1$ sites with normalized weights $\omega_0, \dots, \omega_d$.
  Recall that the weighted distance between $\pnt{s}_i$ and a point
  $\pnt{p}$ is $\dist{\pnt{s}_i}{\pnt{p}}/\omega_i$.  Thus, $\pnt{p}$
  has the same distance to all $d + 1$ sites if, for all $i \in [d]$,
  it satisfies
  \begin{equation}
    \label{eq:polynomial-equations}
    \frac{\dist{\pnt{s}_0}{\pnt{p}}}{\omega_0} -
    \frac{\dist{\pnt{s}_i}{\pnt{p}}}{\omega_i} = 0.
  \end{equation}
  We note that this polynomial has the same form in the unweighted
  case \cite[Equation~10]{l-vdlrd-96}, except we have the additional
  factors $1/\omega_0$ and $1/\omega_i$.

  Concerning (ii), it thus suffices to note that these additional
  factors do not significantly increase the so-called additive
  complexity.  We do not fully define the additive complexity here,
  but rather cite the properties crucial for this proof.  The additive
  complexity $L_+(P)$ of a polynomial $P$ is defined to be $0$ if $P$
  is a monomial.  Moreover, by \cite[Lemma~4]{l-vdlrd-96}, it holds
  that
  \begin{align*}
    L_+(P_1 + \dots + P_n) &\le n - 1 + L_+(P_1) + \dots + L_+(P_n), \\
    L_+(P^m) &\le L_+(P)\text{, for any } m \in \mathbb N, \text{ and}\\
    L_+(PQ) &\le L_+(P) + L_+(Q),
  \end{align*}
  where all $P_i$, $P$, and $Q$ are polynomials.  With this, it is
  easy to see that the additive complexity of the polynomial in
  Equation~\eqref{eq:polynomial-equations} is bounded by a constant
  only depending on $d$.  In fact, the last bound,
  $L_+(PQ) \le L_+(P) + L_+(Q)$ in conjunction with the property that
  constants are monomials with additive complexity~$0$, makes it so
  that the additional constant factors $\omega_0$ and $\omega_i$ do
  not increase the additive complexity at all.  Thus, the additive
  complexity is bounded by $4d - 1$~\cite[Lemma~5]{l-vdlrd-96}.

  Finally, applying~\cite[Proposition~3]{l-vdlrd-96} directly yields
  the claim, which concludes the proof for $\p \not= \infty$.

  For $\p = \infty$, we cannot use the same argument, as
  Equation~\eqref{eq:polynomial-equations} is not polynomial:
  $\dist{\pnt{s_i}}{\pnt{p}}$ involves the maximum over all
  coordinates.  However, for each $s_i$, there are only $d$
  possibilities to which coordinate the maximum is evaluated, leading
  to $d^{d + 1}$ combinations.  For each of these combinations, we
  consider its own system of equations.  Denote the resulting set of
  systems of equations with $\mathcal E$.  Clearly, every solution for
  the system of equations in~\eqref{eq:polynomial-equations} is a
  solution to at least one system in $\mathcal E$.  Thus, the number
  of solutions to~\eqref{eq:polynomial-equations} is bounded by the
  total number of solutions to systems in~$\mathcal E$.  Clearly, with
  the same argument as above, the number of solutions to each system
  of equations in $\mathcal E$ is bounded by a constant only depending
  on $d$.  As $\mathcal E$ contains only $d^{d + 1}$ systems, this
  bounds the number of solutions to~\eqref{eq:polynomial-equations} by
  a constant only depending on $d$.
\end{proof}

With this, we can now prove the theorem establishing the connection
between vertices and non-empty regions.

\VoronoiComplVerticesRegions
\begin{proof}
  We first show that a vertex of the order-$k$ Voronoi diagram is an
  interior point of a non-empty region of the order-$(k + d)$ Voronoi
  diagram.  Afterwards, we show that only a constant number of
  different vertices can end up in the same region.

  Let $\pnt{p} \in R^d$ be a vertex of the order-$k$ Voronoi diagram.
  Then $\pnt{p}$ has equal weighted distance to exactly $d + 1$ sites
  (the sites are in general position).  Let
  $\{\pnt{s}_1, \dots, \pnt{s}_{d + 1}\} = A \subseteq S$ be these
  sites and let $P$ be the \emph{$\eps$-environment} of $\pnt{p}$,
  i.e., a ball with sufficiently small radius $\eps$ centered at
  $\pnt{p}$.  For a point $\pnt{p}' \in P$, sort all sites in $S$ by
  weighted distance from $\pnt{p}'$.  Then all sites in $A$ appear
  consecutive in this order.  Moreover, we obtain almost the same
  order of $S$ for every $\pnt{p}' \in P$.  The only difference is
  that the sites of $A$ might be reordered.  Also, as $\pnt{p}$ is a
  vertex of the order-$k$ Voronoi diagram, at least one site from $A$
  belongs to the $k$ sites with smallest weighted distance to
  $\pnt{p}$.  It follows that the first $k + d$ sites in this order
  completely include all sites from $A$.  Thus, the $k + d$ closest
  sites are the same for all points in the $\eps$-environment $P$
  around $\pnt{p}$; let $B$ be the set of these sites.  It follows
  that $B$ has non-empty Voronoi region in the order-$(k + d)$ Voronoi
  diagram as this region has $\pnt{p}$ in its interior.

  It remains to show that only a constant number of vertices of the
  order-$k$ Voronoi diagram can be contained in the same region of the
  order-$(k + d)$ Voronoi diagram, i.e., the order-$(k + d)$ region
  belonging to $B$ includes only a constant number of order-$k$
  vertices.  As stated above, every order-$k$ vertex belongs to a
  subset $A \subseteq B$ with $|A| = d + 1$.  There are only
  $\binom{|B|}{|A|} \le \binom{k + d}{d + 1}$ such subsets $A$,
  which is constant for constant $k$ and $d$.  Moreover, every fixed
  subset $A$ of $d + 1$ sites is responsible for only a constant
  number of vertices due to Lemma~\ref{lem:number-equidistant-points}.
  Thus, only a constant number of order-$k$ vertices end up in the
  same order-$k + d$ region, which concludes the proof.
\end{proof}

Theorem~\ref{thm:complexity-vertices-vs-regions} transfers some known
lower bounds on the number of vertices of Voronoi diagrams to lower
bounds on the number of non-empty regions of order-$k$ Voronoi
diagrams.  In particular, we get the following corollaries.

\begin{corollary}
  In the worst case, the order-$4$ Voronoi diagram of $n$ (unweighted)
  sites in $3$-dimensional Euclidean space has $\Omega(n^2)$ non-empty
  regions.
\end{corollary}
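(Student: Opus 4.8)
The plan is to read this off directly from Theorem~\ref{thm:complexity-vertices-vs-regions} by instantiating it with $d = 3$ and $k = 1$. Under this instantiation the theorem says: if the order-$1$ Voronoi diagram of a set of $n$ sites in general position in $\mathbb{R}^3$ has $\ell$ vertices, then its order-$4$ Voronoi diagram has $\Omega(\ell)$ non-empty regions. So the whole task reduces to exhibiting an $n$-point configuration in $\mathbb{R}^3$ whose (unweighted, Euclidean) order-$1$ Voronoi diagram has $\Omega(n^2)$ vertices.

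Such configurations are classical: Klee~\cite{k-cdvd-80} and Seidel~\cite{s-nfhdvd-87} show that the first-order Voronoi diagram of $n$ points in $\mathbb{R}^3$ can have $\Theta(n^2)$ faces, and in particular $\Theta(n^2)$ vertices. The standard example places roughly $n/2$ sites along one line (a moment curve segment) and $n/2$ along a skew line, which forces a quadratic number of Delaunay tetrahedra and hence a quadratic number of Voronoi vertices. First I would simply invoke this known lower bound to fix a concrete $S$ with $\ell \in \Omega(n^2)$ vertices in its order-$1$ diagram, and then apply Theorem~\ref{thm:complexity-vertices-vs-regions} to conclude that the order-$4$ diagram of $S$ has $\Omega(\ell) = \Omega(n^2)$ non-empty regions.

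The one technical point to check is the \emph{general position} hypothesis of Theorem~\ref{thm:complexity-vertices-vs-regions}: the extremal construction must be (or can be made) generic in the sense used in Lemma~\ref{lem:number-equidistant-points}. Since the excluded configurations have measure zero, an arbitrarily small perturbation of the sites puts them in general position, and because the number of Voronoi vertices is lower-semicontinuous under such perturbations (a vertex incident to exactly $d+1 = 4$ sites is stable, and degenerate vertices incident to more sites only split into several generic vertices), the $\Omega(n^2)$ vertex bound survives the perturbation. This is the only place where a small argument is needed; everything else is a direct citation plus one application of the already-proved theorem, so I do not expect any real obstacle here.

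\begin{proof}
  By Theorem~\ref{thm:complexity-vertices-vs-regions} with $d = 3$ and
  $k = 1$, it suffices to exhibit $n$ sites in general position in
  $3$-dimensional Euclidean space whose order-$1$ Voronoi diagram has
  $\Omega(n^2)$ vertices.  Such configurations are known: placing
  $n/2$ sites on one line and $n/2$ sites on a skew line yields a
  first-order Voronoi diagram with $\Theta(n^2)$
  vertices~\cite{k-cdvd-80,s-nfhdvd-87}.  Since configurations not in
  general position have measure zero and the number of order-$1$
  Voronoi vertices does not decrease under a sufficiently small
  perturbation of the sites, we may assume the construction is in
  general position while retaining $\ell \in \Omega(n^2)$ vertices.
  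Applying Theorem~\ref{thm:complexity-vertices-vs-regions} then gives
  $\Omega(\ell) = \Omega(n^2)$ non-empty regions in the order-$4$
  Voronoi diagram.
\end{proof}
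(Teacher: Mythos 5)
Your proof is correct and follows essentially the same route as the paper: cite the known $\Omega(n^2)$ lower bound on the number of vertices of order-$1$ Voronoi diagrams in $3$-dimensional Euclidean space~\cite{k-cdvd-80,s-nfhdvd-87} and apply Theorem~\ref{thm:complexity-vertices-vs-regions} with $k=1$, $d=3$. Your extra perturbation argument for the general-position hypothesis is a sensible detail the paper leaves implicit, but it does not change the approach.
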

\begin{proof}
  In the worst case, the ordinary (order-$1$, unweighted) Voronoi
  diagram of $n$ sites in $3$-dimensional Euclidean space has
  $\Omega(n^2)$ vertices~\cite{k-cdvd-80,s-nfhdvd-87}.  Applying
  Theorem~\ref{thm:complexity-vertices-vs-regions} yields the claim.
\end{proof}

\begin{corollary}
  In the worst case, the order-$3$ Voronoi diagram of $n$ weighted
  sites in $2$-dimensional Euclidean space has $\Omega(n^2)$ non-empty
  regions.
\end{corollary}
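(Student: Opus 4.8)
The plan is to reduce the statement to the known quadratic lower bound on the number of \emph{vertices} of weighted (order-$1$) Voronoi diagrams in the plane, and then invoke Theorem~\ref{thm:complexity-vertices-vs-regions}. Concretely, Aurenhammer and Edelsbrunner~\cite{ae-oacwvdp-84} exhibit $n$ multiplicatively weighted sites in the Euclidean plane whose order-$1$ Voronoi diagram has $\Theta(n^2)$ vertices; the construction places roughly $n/2$ high-weight sites on one side and $n/2$ low-weight sites towards the other, so that the (curved) bisectors separating a low-weight site from the many high-weight sites cross one another a linear number of times, producing $\Omega(n^2)$ triple points (cf.\ Figure~\ref{fig:weighted-voronoi-complexity}a). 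First I would invoke this construction to obtain a set $S$ of $n$ weighted sites with $\ell \in \Omega(n^2)$ vertices in its order-$1$ diagram, observing that an arbitrarily small generic perturbation puts $S$ into general position without destroying more than a constant fraction of the vertices.

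Next I would apply Theorem~\ref{thm:complexity-vertices-vs-regions} with $d = 2$ and $k = 1$: since the order-$k$ Voronoi diagram of $S$ has $\ell$ vertices, the order-$(k + d) = $ order-$3$ Voronoi diagram of $S$ has $\Omega(\ell) = \Omega(n^2)$ non-empty regions, which is exactly the claimed bound. Figure~\ref{fig:weighted-voronoi-complexity}b illustrates the mechanism behind the theorem in this special case: every order-$1$ vertex has equal weighted distance to its three nearest sites, hence lies in the interior of some order-$3$ region, and (by Lemma~\ref{lem:number-equidistant-points}) only a bounded number of order-$1$ vertices can share one order-$3$ region.

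I expect no real obstacle here, since the two ingredients that required genuine work — the bound on the number of equidistant points, Lemma~\ref{lem:number-equidistant-points}, and its consequence Theorem~\ref{thm:complexity-vertices-vs-regions} — are already in place, and the worst-case vertex bound is classical. The only mild point of care is to confirm that the Aurenhammer--Edelsbrunner point set (or its perturbation) satisfies the general-position hypothesis of Theorem~\ref{thm:complexity-vertices-vs-regions}, which is immediate, so the corollary follows in one line.
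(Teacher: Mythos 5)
Your proposal is correct and follows the same route as the paper: cite the Aurenhammer--Edelsbrunner $\Omega(n^2)$ lower bound on the number of vertices of the weighted order-$1$ Voronoi diagram in the plane, then apply Theorem~\ref{thm:complexity-vertices-vs-regions} with $d=2$, $k=1$. The extra remark about perturbing to general position is a sensible clarification but adds nothing beyond what the paper implicitly assumes.
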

\begin{proof}
  In the worst case, the order-$1$ Voronoi diagram of $n$ weighted
  sites in $2$-dimensional Euclidean space has $\Omega(n^2)$
  vertices~\cite{ae-oacwvdp-84}; also see
  Figure~\ref{fig:weighted-voronoi-complexity}.  Applying
  Theorem~\ref{thm:complexity-vertices-vs-regions} yields the claim.
\end{proof}

\subsection{Upper Bounds for Sites with Random Positions}
\label{sec:expect-numb-regions}

Let $S = \{\pnt{s}_1, \dots, \pnt{s}_n\} \subseteq \mathbb T^d$ be $n$
randomly positioned sites with weights $w_1, \dots, w_n$.  In the
following, we bound the complexity of the weighted order-$k$ Voronoi
diagram in terms of non-empty regions.  Recall from
Section~\ref{sec:preliminaries} that the torus $\mathbb T^d$ is the hypercube
$[0, 1]^d$ that wraps around in every dimension in the sense that
opposite sides are identified.  However, the following arguments do
not require this property.  Thus, the exact same results hold for
Voronoi diagrams in hypercubes.

For the normalized weights $\omega_1, \dots, \omega_n$, recall from
Section~\ref{sec:preliminaries}, that the point $\pnt{p} \in \mathbb T^d$
belongs to the Voronoi region corresponding to $A \subseteq S$ with
$|A| = k$ if there exists a radius $r$ such that
$\dist{\pnt{p}}{\pnt{s}_i} \le \omega_i r$ if $\pnt{s}_i \in A$ and
$\dist{\pnt{p}}{\pnt{s}_i} > \omega_i r$ if $\pnt{s}_i \notin A$.
Thus, $A$ has non-empty order-$k$ Voronoi region if and only if there
exists such a point $\pnt{p}$.  Our goal in the following is to bound
the probability for its existence.

Our general approach to achieve such a bound is the following.  The
condition $\dist{\pnt{p}}{\pnt{s}_i} \le \omega_i r$ for
$\pnt{s}_i \in A$ basically tells us the sites in $A$ are either close
together or that $r$ has to be large.  In contrast to that, the
condition $\dist{\pnt{p}}{\pnt{s}_i} > \omega_i r$ for
$\pnt{s}_i \notin A$ tells us that many sites (namely all $n - k$
sites in $S \setminus A$) have to lie sufficiently far away from
$\pnt{p}$, which is unlikely if $r$ is large.  How unlikely this is of
course depends on $r$ and thus on how close the sites in $A$ lie
together.  Therefore, to follow this approach, we first condition on
how close the sites in $A$ lie together.

To formalize this, consider a size-$k$ subset $A \subseteq S$ and
assume without loss of generality that
$A = \{\pnt{s}_1, \dots, \pnt{s}_k\}$.  The site in $A$ with the
lowest weight, without loss of generality $\pnt{s}_1$, will play a
special role.  We define the random variable $R_A$ to be
\begin{equation}
  \label{eq:def-R-A}
  R_A = \max_{i \in [k]}\frac{\dist{\pnt{s}_1}{\pnt{s}_i}}{\omega_1 +
    \omega_i}.
\end{equation}
The intuition behind the definition of $R_A$ is the following.  The
weighted center between $\pnt{s}_1$ and $\pnt{s}_i$ is the point
$\pnt{p}$ on the line between them such that
$\dist{\pnt{s}_1}{\pnt{p}} = \omega_1 r$ and
$\dist{\pnt{s}_i}{\pnt{p}} = \omega_i r$ for a radius
$r \in \mathbb R$.  Then $R_A$ is the maximum value for $r$ over
$i \in [k]$.  In the unweighted setting, $R_A$ is just half the
maximum distance between $\pnt{s}_1$ and any other site $\pnt{s}_i$.
In a sense, $R_A$ describes how close the sites in $A$ lie together.
Thus, it provides a lower bound on $r$.

Based on $R_A$, we slightly relax the condition on $A$ having
non-empty Voronoi region.  We call $A$ \emph{relevant} if there exists
a point $\pnt{p} \in \mathbb T^d$ and a radius $r \ge R_A$ such that
$\dist{\pnt{s}_1}{\pnt{p}} \le \omega_1 r$ and
$\dist{\pnt{s}_i}{\pnt{p}} > \omega_i r$ for $i > k$.
The following lemma states that being relevant is in fact a weaker
condition than having non-empty order-$k$ Voronoi region.  Thus,
bounding the probability that a set is relevant from above also bounds
the probability for a non-empty Voronoi region from above.

\begin{lemma}
  \label{lem:non-empty-region-implies-relevance}
  A subset of $k$ sites that has a non-empty order-$k$ Voronoi region
  is relevant.
\end{lemma}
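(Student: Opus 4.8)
The plan is to show that any witness for a non-empty order-$k$ Voronoi region of $A$ already serves, essentially verbatim, as a witness for relevance of $A$; the only thing to check is that the radius appearing in such a witness automatically satisfies $r \ge R_A$. So first I would take a point $\pnt{p} \in \mathbb T^d$ and a radius $r$ that witness the non-emptiness of the order-$k$ Voronoi region of $A = \{\pnt{s}_1, \dots, \pnt{s}_k\}$, i.e.\ $\dist{\pnt{p}}{\pnt{s}_i} \le \omega_i r$ for all $i \in [k]$ and $\dist{\pnt{p}}{\pnt{s}_i} > \omega_i r$ for all $i > k$ (we may assume $\pnt{s}_1$ is the lowest-weight site of $A$, as in the definition of $R_A$).

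Next I would observe that, among the conditions required for relevance, the inequality $\dist{\pnt{s}_1}{\pnt{p}} \le \omega_1 r$ and the inequalities $\dist{\pnt{s}_i}{\pnt{p}} > \omega_i r$ for $i > k$ are immediate special cases of the conditions just listed. Hence it remains only to establish $r \ge R_A$, after which the same pair $(\pnt{p}, r)$ witnesses relevance. For this, fix an arbitrary $i \in [k]$ and apply the triangle inequality of the $\p$-norm (which holds on the torus since the circular per-coordinate distance is a metric): $\dist{\pnt{s}_1}{\pnt{s}_i} \le \dist{\pnt{s}_1}{\pnt{p}} + \dist{\pnt{p}}{\pnt{s}_i} \le \omega_1 r + \omega_i r = (\omega_1 + \omega_i) r$. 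Dividing by $\omega_1 + \omega_i > 0$ and taking the maximum over $i \in [k]$ yields $R_A \le r$, as desired.

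Putting these two observations together completes the argument: $(\pnt{p}, r)$ is a pair with $r \ge R_A$, $\dist{\pnt{s}_1}{\pnt{p}} \le \omega_1 r$, and $\dist{\pnt{s}_i}{\pnt{p}} > \omega_i r$ for all $i > k$, which is exactly the definition of $A$ being relevant. There is no real obstacle here; the proof is a one-line triangle-inequality estimate, and the statement is included mainly to make precise that bounding $\Pro{A \text{ relevant}}$ suffices for bounding the expected complexity of the diagram. The only point worth a sentence of care is that the relaxation to relevance drops the upper-bound constraints on $\pnt{s}_2, \dots, \pnt{s}_k$ while adding the constraint $r \ge R_A$, and the triangle-inequality step is precisely what shows the latter comes for free.
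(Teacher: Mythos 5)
Your proof is correct and matches the paper's argument essentially verbatim: both take a witness $(\pnt{p},r)$ for the non-empty region, note the relevance conditions on $\pnt{s}_1$ and on $\pnt{s}_i$ for $i>k$ are immediate, and derive $r \ge R_A$ from the triangle inequality $\dist{\pnt{s}_1}{\pnt{s}_i} \le \dist{\pnt{s}_1}{\pnt{p}} + \dist{\pnt{p}}{\pnt{s}_i} \le (\omega_1+\omega_i)r$. No gaps.
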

\begin{proof}
  Assume $A = \{\pnt{s}_1, \dots, \pnt{s}_k\}$ has a non-empty
  order-$k$ Voronoi region.  Then there exists a point $\pnt{p}$ and a
  radius $r$ such that $\dist{\pnt{s}_i}{\pnt{p}} \le \omega_i r$ if
  and only if $i \le k$.  Thus,
  $\dist{\pnt{s}_1}{\pnt{p}} \le \omega_1 r$ and
  $\dist{\pnt{s}_i}{\pnt{p}} > \omega_i r$ for $i > k$ clearly holds,
  and it remains to show $r \ge R_A$.  From
  $\dist{\pnt{s}_i}{\pnt{p}} \le \omega_i r$ for $i \in [k]$ it
  follows that
  $\dist{\pnt{s}_1}{\pnt{p}} + \dist{\pnt{s}_i}{\pnt{p}} \le \omega_1
  r + \omega_i r$ holds for any $i \in [k]$.  Thus, by rearranging and
  applying the triangle inequality, we obtain
  $r \ge (\dist{\pnt{s}_1}{\pnt{p}} +
  \dist{\pnt{s}_i}{\pnt{p}})/(\omega_1 + \omega_i) \ge
  \dist{\pnt{s}_1}{\pnt{s}_i} / (\omega_1 + \omega_i)$.  This
  immediately yields $r \ge R_A$.
\end{proof}

Now we proceed to bound the probability that a set $A$ is relevant.
The following lemma bounds this probability conditioned on the random
variable~$R_A$.  At its core, we have to bound the probability of the
event $\dist{\pnt{s}_i}{\pnt{p}} > \omega_ir$ for
$\pnt{s}_i \notin A$.  For a fixed point $\pnt{p}$ and a fixed radius
$r$, this is rather easy.  Thus, most of the proof is concerned with
eliminating the existential quantifiers for $\pnt{p}$ and $r$.

\begin{lemma}
  \label{lem:relevance-probability-cond-R}
  For constants $c_1$ and $c_2$ depending only on $d$ and $\p$, it
  holds that
  \begin{equation*}
    \Pro{A \text{ is relevant} \mid R_A}
    \le  c_1 \min_{\pnt{s}_i \in A}\{w_i\}
    \exp\left(-c_2 R_A^d\sum_{\pnt{s}_i \notin A} w_i\right).
  \end{equation*}
\end{lemma}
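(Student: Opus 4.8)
The plan is to bound the probability that $A$ is relevant by first fixing $R_A$ (so the sites in $A$ are pinned down) and then arguing over the remaining freedom, namely the choice of $\pnt{p}$ and $r$ and the positions of the sites outside $A$. Write $w_{\min} = \min_{\pnt{s}_i \in A} w_i = w_1$. Conditioned on $R_A$, the condition ``$A$ is relevant'' says there is a point $\pnt{p}$ and a radius $r \ge R_A$ with $\dist{\pnt{s}_1}{\pnt{p}} \le \omega_1 r$ and $\dist{\pnt{s}_i}{\pnt{p}} > \omega_i r$ for all $i > k$. The key observation is that, for a \emph{fixed} $\pnt{p}$ and $r$, the events $\dist{\pnt{s}_i}{\pnt{p}} > \omega_i r$ for $i > k$ are independent over $i$ (the sites outside $A$ have independent uniform positions), and each has probability $1 - F_{\mathrm{dist}}(\omega_i r) \le 1 - \Pi_{d,\p}\,\omega_i^d r^d \le \exp(-\Pi_{d,\p} w_i r^d)$ as long as $\omega_i r \le 1/2$; I will need to handle the regime $\omega_i r > 1/2$ separately, but there the site positions are essentially unconstrained and one can absorb the loss into the constants (or note $r$ is bounded since everything lives in the unit torus). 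Multiplying over $i > k$ gives, for fixed $\pnt{p}, r$, a bound of roughly $\exp(-\Pi_{d,\p} r^d \sum_{i > k} w_i)$, and since $r \ge R_A$ this is at most $\exp(-\Pi_{d,\p} R_A^d \sum_{\pnt{s}_i \notin A} w_i)$.

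The technical heart is removing the existential quantifiers over $\pnt{p}$ and $r$. For this I would use a net/union-bound argument. First reduce $r$: increasing $r$ only makes the constraints on the far sites harder, so essentially the relevant case is $r$ close to $R_A$; more carefully, I would discretize $r$ on a geometric grid between $R_A$ and the trivial upper bound (the diameter of the torus), contributing only a logarithmic or constant factor that I fold into $c_1$ and weaken $c_2$ slightly. Second, handle $\pnt{p}$: the constraint $\dist{\pnt{s}_1}{\pnt{p}} \le \omega_1 r$ forces $\pnt{p}$ into a ball of radius $\omega_1 r$ around $\pnt{s}_1$, whose volume is $\Pi_{d,\p} \omega_1^d r^d = \Pi_{d,\p} w_1 r^d$. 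Cover this ball by a net of $O((r/\delta)^d)$ points at scale $\delta$; for $\pnt{p}$ within $\delta$ of a net point $\pnt{q}$, the event $\dist{\pnt{s}_i}{\pnt{p}} > \omega_i r$ is implied by $\dist{\pnt{s}_i}{\pnt{q}} > \omega_i r + \delta$ by the triangle inequality, so union-bounding over the net with radius inflated by $\delta$ gives an extra factor $O((r/\delta)^d)$ against a term $\exp(-c\, (r)^d \sum_{i\notin A} w_i)$ with a mildly worse constant. Choosing $\delta$ proportional to $r$ (a constant number of net points per ball, independent of $n$) keeps the net factor constant, while the volume of the ball around $\pnt{s}_1$ contributes the leading factor $\Pi_{d,\p} w_1 r^d$; combined with the exponential decay in $\sum_{i \notin A} w_i$ this absorbs cleanly, leaving a bound of the stated shape $c_1 w_{\min} \exp(-c_2 R_A^d \sum_{\pnt{s}_i \notin A} w_i)$ after rolling the polynomial-in-$r$ prefactors into the exponential via $x^d e^{-ax} \le C$.

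I expect the main obstacle to be making the net argument over $\pnt{p}$ genuinely uniform in $r$ and in the weights: the ball around $\pnt{s}_1$ has radius $\omega_1 r$ which can be large when $r$ is large, so a fixed-scale net does not have constantly many points, and one must tie the net scale to $r$; simultaneously, inflating the radius by $\delta \sim r$ must not destroy the exponential gain, which requires $\delta$ small relative to $r$ but still giving $O(1)$ net points — this forces a careful choice of constants and is where the dimension-dependence of $c_1, c_2$ enters. A secondary annoyance is the boundary regime $\omega_i r > 1/2$ where the simple volume lower bound on $F_{\mathrm{dist}}$ fails; I would dispatch this by noting that if $\omega_i r > 1/2$ then $r > 1/(2\omega_i) \ge 1/(2\omega_{\max})$, and in that case even a single far site contributes a constant-factor saving, or alternatively by crudely bounding $1 - F_{\mathrm{dist}}(\omega_i r) \le 1$ for such $i$ and observing that the sites with $\omega_i r \le 1/2$ already contribute enough decay since $\sum_{\omega_i r > 1/2} w_i$ is controlled.
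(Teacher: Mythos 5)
Your overall decomposition is the same as the paper's: condition on $R_A$, compute the probability for fixed $\pnt{p}$ and $r$ using independence of the sites outside $A$, then remove the two existential quantifiers by discretizing $r$ into intervals and covering the ball around $\pnt{s}_1$ with a finite grid/net, paying a union-bound factor. The choice of geometric rather than arithmetic intervals for $r$ is a harmless variant (both yield a convergent sum dominated by the first term), and the $1-x\le e^{-x}$ trick and the volume bound for the far sites are as in the paper.

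However, two steps as written do not work and would need to be repaired before the argument is sound.

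First, your triangle-inequality reduction goes the wrong way. For a union bound you need: if the event holds at some $\pnt{p}$, then a \emph{weaker} event holds at the nearest net point $\pnt{q}$. From $\dist{\pnt{s}_i}{\pnt{p}} > \omega_i r$ and $\dist{\pnt{p}}{\pnt{q}} \le \delta$ one gets $\dist{\pnt{s}_i}{\pnt{q}} > \omega_i r - \delta$, i.e., the radius at the net point must be \emph{deflated}. You instead state that $\dist{\pnt{s}_i}{\pnt{q}} > \omega_i r + \delta$ implies $\dist{\pnt{s}_i}{\pnt{p}} > \omega_i r$; that implication is true, but it bounds $\Pr[E'(\pnt{q})]$ by $\Pr[E(\pnt{p})]$, which is the useless direction. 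Consequently, $\delta$ must be taken small enough that the deflation still leaves $\omega_i r - \delta \ge \omega_i r/2$ for \emph{every} $i\notin A$, i.e., $\delta \lesssim \min_{i\notin A}\omega_i\cdot r$; since the weights are normalized with minimum $1$, taking $\delta$ a small constant times $r$ does work, but this is what forces the constraint, not an arbitrary ``$\delta\propto r$''.

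Second, the claim that the net has a constant number of points is wrong, and the proposed bookkeeping for the $w_{\min}$ prefactor is confused. The ball around $\pnt{s}_1$ has radius $\omega_1 r$, so a $\delta$-net of it has $\Theta\bigl((\omega_1 r/\delta)^d\bigr)$ points; with $\delta \asymp r$ this is $\Theta(\omega_1^d) = \Theta(w_1) = \Theta(w_{\min})$, not $O(1)$. The $w_{\min}$ factor in the lemma is exactly this net count, not ``the volume of the ball'' (the volume never enters a union-bound argument as a multiplicative factor on its own). Your sentence crediting $\Pi_{d,\p}w_1 r^d$ as ``the leading factor'' from volume, and separately asserting the net is of constant size, double-counts in one place and drops the correct factor in another; once fixed, the $r^d$ powers from the net count cancel with the $1/\delta^d$, leaving precisely the $O(w_1)$ many grid points that the paper gets. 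Finally, your treatment of the $\omega_i r > 1/2$ regime is more cumbersome than necessary; the paper sidesteps the case split entirely by proving a clean volume lower bound $\min\{1, c\,r^d\}$ for the intersection of a ball with the hypercube (Lemma~\ref{lem:volume-intersection-ball-cube}) and feeding that directly into $1-x\le e^{-x}$.
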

\begin{proof}
  As before, we assume that $A = \{\pnt{s}_1, \dots, \pnt{s}_k\}$ and
  that $\pnt{s}_1$ has minimum weight among sites in $A$, i.e.,
  $\min_{\pnt{s}_i \in A}\{w_i\} = w_1$.  By definition, $A$ is
  relevant conditioned on $R_A$, if and only if there exists a radius
  $r \ge R_A$ and point $\pnt{p} \in \mathbb T^d$ such that
  $\dist{\pnt{s}_1}{\pnt{p}} \le \omega_1 r$ and
  $\dist{\pnt{s}_i}{\pnt{p}} > \omega_i r$ for $i > k$, i.e., formally
  we have
  \begin{equation}
    \label{eq:a-relevant}
    \exists r \ge R_A\; \exists \pnt{p} \in \mathbb T^d\; \forall i > k
    \colon \dist{\pnt{s}_1}{\pnt{p}} \le \omega_1 r \wedge
    \dist{\pnt{s}_i}{\pnt{p}} > \omega_i r.
  \end{equation}
  The core difficulties of bounding the probability for this event are
  the existential quantifiers that quantify over the continuous
  variables $r$ and $\pnt{p}$.  In both cases, we resolve this by
  using an appropriate discretization, for which we then apply the
  union bound.

  We get rid of the existential quantifier for $r$ by dividing the
  interval $[R_A, \infty)$, which covers the domain of $r$, into
  pieces of length at most $R_A$.  More formally, we split the event
  $\exists r \ge R_A$ with the desired property into the disjoint
  events $\exists r \in [j R_A, (j + 1) R_A)$ for $j \in \mathbb N^+$.
  For a fixed $j$, $r \ge j R_A$ and
  $\dist{\pnt{s}_i}{\pnt{p}} > \omega_i r$ implies
  $\dist{\pnt{s}_i}{\pnt{p}} > \omega_i j R_A$.  Moreover,
  $r \le (j + 1) R_A$ and $\dist{\pnt{s}_1}{\pnt{p}} \le \omega_1 r$
  implies $\dist{\pnt{s}_1}{\pnt{p}} \le \omega_1 (j + 1) R_A$.  Note
  that this completely eliminates the variable $r$ from the event,
  which lets us drop the existential quantifier for $r$.  Thus, the
  event in Equation~\eqref{eq:a-relevant} implies
  \begin{equation}
    \label{eq:a-relevant-eliminated-exists-r}
    \exists j \in \mathbb N^+ \;\exists \pnt{p} \in \mathbb T^d
    \;\forall i > k \colon \dist{\pnt{s}_1}{\pnt{p}} \le \omega_1 (j +
    1) R_A \wedge \dist{\pnt{s}_i}{\pnt{p}} > \omega_i j R_A.
  \end{equation}
  Note that the new existential quantifier for $j$ is not an issue: as
  $j$ is discrete, we can simply use the union bound and sum over the
  probabilities we obtain for the different values of $j$.  We will
  later see that this sum is dominated by the first term corresponding
  to $j = 1$.

  To deal with the existential quantifier for $\pnt{p}$, assume
  $j \in \mathbb N^+$ to be a fixed number.  First note that
  $\dist{\pnt{s}_1}{\pnt{p}} \le \omega_1 (j + 1) R_A$ implies that
  $\pnt{p}$ lies somewhat close to $\pnt{s}_1$.  We discretize the
  space around $\pnt{s}_1$ using a grid such that the point $\pnt{p}$
  is guaranteed to lie inside a grid cell.  By choosing the distance
  between neighboring grid vertices sufficiently small, we guarantee
  that $\pnt{p}$ lies close to a grid vertex.  Then, instead of
  considering $\pnt{p}$ itself, we deal with its closest grid vertex.
  To define the grid formally, let
  $\omega_{\min} = \min_{i = k + 1}^n \{\omega_i\}$ be the minimum
  weight of sites not in $A$ and let
  $x = \omega_{\min} j R_A / \sqrt[\p]{d}$ ($x$ will be the width of
  our grid cells).  To simplify notation, assume that $\pnt{s}_1$ is
  the origin.  Otherwise, we can simply translate the grid defined in
  the following to be centered at $\pnt{s}_1$ to obtain the same
  result.  Let
  $\Gamma = \{\ell x \mid \ell \in \mathbb Z \wedge |(\ell - 1) x| \le
  \omega_1 (j + 1) R_A\}$ be the set of all multiples of $x$ that are
  not too large.  We use the grid defined by the Cartesian product
  $\Gamma^d$.  Then the following three properties of $\Gamma^d$ are
  easy to verify.
  \begin{enumerate}[(i)]
  \item \label{item:grid-p-lies-in-cell} A point $\pnt{p}$ with
    $\dist{\pnt{s}_1}{\pnt{p}} \le \omega_1 (j + 1) R_A$ lies in a
    grid cell.
  \item \label{item:grid-max-dist-from-vertex} The maximum distance
    between a point in a grid cell and its closest grid vertex is
    $\sqrt[\p]{d} x / 2 = \omega_{\min} j R_A / 2$.
  \item \label{item:grid-numer-of-vertices}$\Gamma^d$ has at most
    $c_1' (\omega_1 j R_A / x)^d = c_1 (\omega_1/\omega_{\min})^d \le
    c_1 \omega_1^d$ vertices for constants $c_1$ and $c_1'$ only
    depending on $d$ and $\p$.
  \end{enumerate}

  Going back to the event in
  Equation~\eqref{eq:a-relevant-eliminated-exists-r}, let $\pnt{p}$ be
  a point with $\dist{\pnt{s}_1}{\pnt{p}} \le \omega_1(j + 1)R_A$ and
  $\dist{\pnt{s}_i}{\pnt{p}} > \omega_i j R_A$ (for all $i > k$).  By
  the first inequality and Property~\ref{item:grid-p-lies-in-cell},
  $\pnt{p}$ lies in a grid cell of $\Gamma^d$.  Let
  $\pnt{p}' \in \Gamma^d$ be the grid vertex with minimum distance to
  $\pnt{p}$.  Then, by Property~\ref{item:grid-max-dist-from-vertex},
  $\dist{\pnt{p}}{\pnt{p}'} \le \omega_{\min} j R_A / 2$.  Thus, using
  the triangle inequality and
  $\dist{\pnt{s}_i}{\pnt{p}} > \omega_i j R_A$, we obtain
  \begin{equation*}
    \dist{\pnt{s}_i}{\pnt{p}'} \ge \dist{\pnt{s}_i}{\pnt{p}} -
    \dist{\pnt{p}}{\pnt{p}'} > \omega_i j R_A - \frac{\omega_{\min}j
      R_A}{2} \ge \frac{\omega_i j R_A}{2}.
  \end{equation*}
  It follows that the event in
  Equation~\eqref{eq:a-relevant-eliminated-exists-r} implies
  \begin{equation*}
    \exists j \in \mathbb N^+ \;\exists \pnt{p}' \in \Gamma^d \;\forall
    i > k \colon \dist{\pnt{s}_i}{\pnt{p}'} > \frac{\omega_i j
      R_A}{2}.
  \end{equation*}
  For this event, we can now bound the probability.  First note that
  $\dist{\pnt{s}_i}{\pnt{p}'} > \omega_i j R_A / 2$ implies that the
  ball of radius $\omega_i j R_A / 2$ around $\pnt{p}'$ does not
  contain $\pnt{s}_i$.  By
  Lemma~\ref{lem:volume-intersection-ball-cube}, the volume of this
  ball intersected with $[-0.5, 0.5]^d$ is
  $\min\{1, c_2 (\omega_i j R_A)^d\}$ for a constant $c_2$ depending
  only on $d$ and $\p$.  As the $\pnt{s}_i$ are chosen independently
  and using that $1 - x \le \exp(-x)$ for $0 \le x \le 1$, we obtain
  \begin{align*}
    \Pro{\forall i > k \colon \dist{\pnt{s}_i}{\pnt{p}'} >
    \frac{\omega_i j R_A}{2}}
    &= \prod_{i = k + 1}^n \max\left\{0, 1 - c_2 \left(\omega_i j
      R_A\right)^d\right\}\\
    &\le\prod_{i = k + 1}^n \exp\left(-c_2 \left(\omega_i j
      R_A\right)^d\right)\\
    &=\exp\left(-c_2 j^d R_A^d\sum_{i = k + 1}^n \omega_i^d\right).
  \end{align*}
  We resolve the two existential quantifiers for $j$ and $\pnt{p}'$
  using the union bound.  Recall from
  Property~\ref{item:grid-numer-of-vertices} that the grid $\Gamma^d$
  contains only $c_1\omega_1^d$ vertices.  Using that
  $\omega_i = w_i^{1/d}$, we obtain
  \begin{align*}
    \Pro{A \text{ is relevant} \mid R_A}
    &\le \Pro{\exists j \in \mathbb N^+ \;\exists \pnt{p}' \in \Gamma^d
      \;\forall i > k \colon \dist{\pnt{s}_i}{\pnt{p}'} >
      \frac{\omega_i j R_A}{2}}\\
    &\le \sum_{j = 1}^{\infty} c_1 w_1 \exp\left(-c_2 j^d
      R_A^d\sum_{i = k + 1}^n w_i\right).
  \end{align*}
  To conclude the proof, it remains to show that the sum over $j$ is
  dominated by the first term corresponding to $j = 1$.  For this,
  note that
  \begin{align*}
    \sum_{j = 1}^{\infty} \exp\left(-xj^d\right)
    &= \exp(-x) \cdot \sum_{j = 1}^{\infty}
      \frac{\exp\left(-xj^d\right)}{\exp(-x)}\\
    &= \exp(-x) \cdot \sum_{j = 1}^{\infty} \exp\left(-x\left(j^d -
      1\right)\right)\\
    &\le \exp(-x) \cdot \sum_{j = 1}^{\infty}
      \left(\exp(-x)\right)^{j - 1}
  \end{align*}
  As $x$ is positive in our case, the sum is bounded by a constant due
  to the convergence of the geometric series.  This concludes the proof.
\end{proof}

Now that we know the probability that $A \subseteq S$ is relevant
conditioned on $R_A$, we want to understand how $R_A$ is distributed.
The following lemma gives an upper bound on its density function.

\begin{lemma}
  \label{lem:R-A-density-function}
  There exists a constant $c$ depending only on $k$, $d$, and $\p$,
  such that the density function $f_{R_A}(x)$ of the random variable
  $R_A$ satisfies
  \begin{equation*}
    f_{R_A}(x)
    \le c x^{dk - d - 1} \frac{1}{\min\limits_{\pnt{s}_i \in A}\{w_i\}}
    \prod_{\pnt{s}_i \in A} w_i.
  \end{equation*}
\end{lemma}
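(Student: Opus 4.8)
The goal is to bound the density of $R_A = \max_{i\in[k]} \dist{\pnt{s}_1}{\pnt{s}_i}/(\omega_1+\omega_i)$, where $\pnt{s}_1,\dots,\pnt{s}_k$ are the sites of $A$ with $\pnt{s}_1$ of minimum weight. First I would condition on the position of $\pnt{s}_1$; by symmetry of the torus this does not affect anything. For each $i\in\{2,\dots,k\}$, define the random variable $Y_i = \dist{\pnt{s}_1}{\pnt{s}_i}/(\omega_1+\omega_i)$. Then $R_A = \max_i Y_i$ (the $i=1$ term is $0$). The $Y_i$ are independent given $\pnt{s}_1$, since the $\pnt{s}_i$ are drawn independently. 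So the CDF of $R_A$ factorizes: $F_{R_A}(x) = \prod_{i=2}^k F_{Y_i}(x)$, and the density is $f_{R_A}(x) = \sum_{i=2}^k f_{Y_i}(x)\prod_{j\ne i} F_{Y_j}(x) \le \sum_{i=2}^k f_{Y_i}(x)$.

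**Key steps.** The next step is to bound $f_{Y_i}(x)$ and $F_{Y_i}(x)$. Since $Y_i \le x$ iff $\dist{\pnt{s}_1}{\pnt{s}_i} \le (\omega_1+\omega_i)x$, and for a random point $\pnt{s}_i$ the distance to the fixed point $\pnt{s}_1$ has CDF $F_{\mathrm{dist}}$ given by Equation~\eqref{eq:cdf-distance}, we get $F_{Y_i}(x) = F_{\mathrm{dist}}((\omega_1+\omega_i)x) \le \Pi_{d,\p}(\omega_1+\omega_i)^d x^d$ (using the bound $F_{\mathrm{dist}}\le 1$ beyond radius $0.5$, which only helps). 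Differentiating, $f_{Y_i}(x) \le d\,\Pi_{d,\p}(\omega_1+\omega_i)^d x^{d-1}$ for $x$ in the valid range (and the density is supported where the argument is at most the torus diameter, so the inequality is safe globally after adjusting the constant). Plugging into the product bound and using $F_{Y_j}(x) \le \Pi_{d,\p}(\omega_1+\omega_j)^d x^d$ for the remaining $k-2$ factors $j \ne i$, $j \ge 2$:
\begin{equation*}
f_{R_A}(x) \le \sum_{i=2}^k d\,\Pi_{d,\p}(\omega_1+\omega_i)^d x^{d-1}\prod_{\substack{j=2\\ j\ne i}}^k \Pi_{d,\p}(\omega_1+\omega_j)^d x^d.
\end{equation*}
This is $\le c' x^{d-1 + d(k-2)} \prod_{i=2}^k (\omega_1+\omega_i)^d = c' x^{dk-d-1}\prod_{i=2}^k (\omega_1+\omega_i)^d$ for a constant $c' = c'(k,d,\p)$ (there are $k-1$ summands, each bounded by the same product over all $j\in\{2,\dots,k\}$ after noting $(\omega_1+\omega_i)^d$ reappears).

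**Finishing.** It remains to convert $\prod_{i=2}^k (\omega_1+\omega_i)^d$ into the claimed form $\frac{1}{\min_i w_i}\prod_i w_i$. Recall $\omega_i = w_i^{1/d}$ and $w_1 = \min_{\pnt{s}_i\in A}\{w_i\}$. Since $\pnt{s}_1$ has minimum weight, $\omega_1 \le \omega_i$, so $\omega_1+\omega_i \le 2\omega_i$, giving $(\omega_1+\omega_i)^d \le 2^d \omega_i^d = 2^d w_i$. Hence $\prod_{i=2}^k (\omega_1+\omega_i)^d \le 2^{d(k-1)}\prod_{i=2}^k w_i = 2^{d(k-1)} \frac{1}{w_1}\prod_{i=1}^k w_i$, which is exactly the desired shape, absorbing $2^{d(k-1)}$ and $c'$ into the single constant $c = c(k,d,\p)$.

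**Main obstacle.** The only subtle point is handling distances larger than $0.5$ on the torus, where $F_{\mathrm{dist}}$ no longer equals $\Pi_{d,\p}x^d$; but since we only ever use $F_{\mathrm{dist}}(y) \le \min\{1,\Pi_{d,\p}y^d\} \le \Pi_{d,\p}y^d$ and the density $f_{\mathrm{dist}}$ is supported on $[0, \sqrt[\p]{d}/2]$ (bounded), the polynomial upper bounds hold uniformly, possibly after enlarging the constant to cover the regime $x \ge$ some threshold where the density vanishes anyway. A second minor point is that differentiating an upper bound on a CDF does not in general upper-bound the density; one should instead directly compute $f_{Y_i}$ from the known density of $\dist{\pnt{s}_1}{\pnt{s}_i}$ (which is $d\,\Pi_{d,\p}y^{d-1}$ for $y\le 0.5$ and bounded beyond that) via the change of variables $y = (\omega_1+\omega_i)x$, yielding $f_{Y_i}(x) = (\omega_1+\omega_i)f_{\mathrm{dist}}((\omega_1+\omega_i)x) \le d\,\Pi_{d,\p}(\omega_1+\omega_i)^d x^{d-1}$ directly, so this is easily circumvented.
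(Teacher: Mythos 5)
Your proof is correct and follows essentially the same route as the paper: you condition on $\pnt{s}_1$, factorize $F_{R_A}(x) = \prod_{i=2}^k \Pro{\dist{\pnt{s}_1}{\pnt{s}_i} \le (\omega_1+\omega_i)x}$, differentiate via the product rule, bound each $F_{Y_j}(x)$ by $\Pi_{d,\p}(\omega_1+\omega_j)^d x^d$ and each density $f_{Y_i}(x)$ by $d\,\Pi_{d,\p}(\omega_1+\omega_i)^d x^{d-1}$, and finally replace $(\omega_1+\omega_i)^d \le 2^d w_i$ using the minimality of $w_1$. You are somewhat more explicit than the paper about the product-rule expansion, and your observation in the last paragraph is a real improvement in rigor: the paper writes $\frac{\dif}{\dif x}\prod_i \vol(B(x_i)) \le \frac{\dif}{\dif x}\prod_i(c_2 w_i x^d)$ as though one may differentiate an upper bound on a CDF, which is not valid in general; the step does go through (exactly by the term-by-term argument you give), but the paper leaves this implicit. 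One small blemish in your write-up: in the Plan paragraph you state $f_{R_A}(x) = \sum_i f_{Y_i}(x)\prod_{j\ne i} F_{Y_j}(x) \le \sum_i f_{Y_i}(x)$; that last inequality, if actually used, would give only $O(x^{d-1})$ rather than the needed $O(x^{dk-d-1})$, which is too weak for the downstream integral in Lemma~\ref{lem:probability-for-relevance}. Fortunately you do not use it — your Key Steps retain the full product with the $F_{Y_j}$ factors — so the final derivation is correct; you should simply delete that throwaway inequality.
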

\begin{proof}
  The density function $f_{R_A}(x)$ is the derivative of the
  distribution function $F_{R_A}(x) = \Pro{R_A \le x}$.  Thus, we have
  to upper bound the slope of $\Pro{R_A \le x}$.  As before, we assume
  that $A = \{\pnt{s}_1, \dots, \pnt{s}_k\}$ and that $\pnt{s}_1$ has
  minimum weight among sites in $A$, i.e.,
  $\min_{\pnt{s}_i \in A}\{w_i\} = w_1$.  Recall the definition of
  $R_A$ in Equation~\eqref{eq:def-R-A}.  It follows directly that
  $R_A \le x$ if and only if
  $\dist{\pnt{s}_1}{\pnt{s}_i} / (\omega_1 + \omega_i) \le x$ for all
  $i \in [k]$.  Note that this clearly holds for $i = 1$.  For greater
  $i$, this is the case if and only if $\pnt{s}_i$ lies in the ball
  $B_{\pnt{s}_1}((\omega_1 + \omega_i) x)$ of radius
  $(\omega_1 + \omega_i) x$ around $\pnt{s}_1$.  To simplify notation,
  we denote this ball with $B(x_i)$ in the following.  Note that the
  volume $\vol(B(x_i))$ is exactly the probability for $\pnt{s}_i$ to
  lie sufficiently close to $\pnt{s}_1$.  As the positions of the
  different sites $\pnt{s}_i$ are independent, we obtain
  \begin{equation*}
    F_{R_A}(x) = \Pro{R_A \le x} = \prod_{i = 2}^k \vol\left(B(x_i)\right).
  \end{equation*}
  To upper bound the derivative of this, we have to upper bound the
  growth of $\vol(B(x_i))$ depending on $x_i$.  For sufficiently small
  $x_i$, this volume is given by the volume of a ball in
  $\mathbb R^d$.  For larger $x_i$, due to the fact that our ground
  space\footnote{Again, this is true for the torus as well as for the
    Hypercube.} is bounded, the growth of this volume slows down.
  Thus, to get an upper bound on the derivative, we can simply use the
  volume of a ball in $\mathbb R^d$.  Thus, for appropriate constants
  $c_1$ and $c_2$ only depending on $d$ and $\p$, we obtain
  \begin{equation*}
    \frac{\dif}{\dif x} \vol\left(B(x_i)\right)
    \le \frac{\dif}{\dif x} c_1 \left((\omega_1 + \omega_i) x\right)^d
    \le \frac{\dif}{\dif x} c_2 \left(\omega_i x\right)^d
    = \frac{\dif}{\dif x} c_2 w_i x^d.
  \end{equation*}
  With this, it follows that
  \begin{equation*}
    f_{R_A}(x) = \frac{\dif}{\dif x}F_{R_A}(x) \le \frac{\dif}{\dif x}
    \prod_{i = 2}^k \left(c_2 w_i x^d\right), 
  \end{equation*}
  which immediately yields the claimed bound.
\end{proof}

By Lemma~\ref{lem:relevance-probability-cond-R}, we know the
probability for a set $A$ to be relevant conditioned on $R_A$ and by
Lemma~\ref{lem:R-A-density-function} we know how $R_A$ is distributed.
Based on this, we can bound the unconditional probability that $A$ is
relevant.

\begin{lemma}
  \label{lem:probability-for-relevance}
  Let $A \subseteq S$.  For a constant $c$ only depending on $k$, $d$,
  and $\p$, the probability that $A$ is relevant satisfies
  \begin{equation*}
    \Pro{A\text{ is relevant}} \le c\frac{\prod_{\pnt{s}_i \in A}
      w_i}{\left(\sum_{\pnt{s}_i \notin A} w_i\right)^{k-1}}.
  \end{equation*}
\end{lemma}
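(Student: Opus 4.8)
The plan is to obtain the unconditional probability by integrating the conditional bound from Lemma~\ref{lem:relevance-probability-cond-R} against the density of $R_A$ from Lemma~\ref{lem:R-A-density-function}. Concretely, I would write
\[
\Pro{A\text{ is relevant}} = \int_0^\infty \Pro{A\text{ is relevant}\mid R_A = x}\, f_{R_A}(x)\,\dif x,
\]
and upper-bound the integrand by the product of the two lemma bounds. Writing $w_{\min} = \min_{\pnt{s}_i \in A}\{w_i\}$ and $W' = \sum_{\pnt{s}_i \notin A} w_i$, this product is
\[
c_1 w_{\min}\exp\!\left(-c_2 x^d W'\right)\cdot c'\, x^{dk-d-1}\frac{1}{w_{\min}}\prod_{\pnt{s}_i \in A} w_i .
\]
The key point is that the factor $w_{\min}$ (the number of grid vertices in Lemma~\ref{lem:relevance-probability-cond-R}) cancels with the $1/w_{\min}$ in the density bound, leaving $c''\left(\prod_{\pnt{s}_i\in A} w_i\right) x^{dk-d-1}\exp(-c_2 x^d W')$ for a constant $c''$ depending only on $k$, $d$, and $\p$.

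It then remains to evaluate $\int_0^\infty x^{dk-d-1}\exp(-c_2 x^d W')\,\dif x$. I would substitute $u = x^d$, under which $x^{dk-d-1}\,\dif x = \tfrac1d u^{k-2}\,\dif u$, turning the integral into $\tfrac1d\int_0^\infty u^{k-2}\exp(-c_2 W' u)\,\dif u = \Gamma(k-1)/(d\,(c_2 W')^{k-1})$, a standard Gamma integral. Since $k$, $d$, $\p$ are fixed, $\Gamma(k-1)/(d\,c_2^{k-1})$ is a constant; absorbing it together with $c''$ into one constant $c$ gives exactly $\Pro{A\text{ is relevant}} \le c\,\prod_{\pnt{s}_i\in A} w_i \big/ \big(\sum_{\pnt{s}_i\notin A} w_i\big)^{k-1}$, as desired.

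There is no real obstacle here — the argument is essentially a one-line computation once the two preceding lemmas are in place. The only things I would be careful about are that the integrand is integrable at $x = 0$ (after the substitution this amounts to $u^{k-2}$ being integrable near $0$, which holds for $k \ge 2$, hence certainly in the relevant regime $k \ge 3$), and that every constant picked up along the way depends only on $k$, $d$, and $\p$ and not on the weights or on $n$, which is immediate from the statements of Lemmas~\ref{lem:relevance-probability-cond-R} and~\ref{lem:R-A-density-function}.
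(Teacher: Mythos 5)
Your proposal is correct and follows essentially the same route as the paper: apply the law of total probability with respect to $R_A$, multiply the bounds from Lemma~\ref{lem:relevance-probability-cond-R} and Lemma~\ref{lem:R-A-density-function} (noting the cancellation of $\min_{\pnt{s}_i\in A} w_i$), and reduce the resulting integral to $\Gamma(k-1)/\bigl(d(c_2W')^{k-1}\bigr)$. The only cosmetic difference is that the paper integrates over $[0,\sqrt[\p]{d}]$ (the support of $R_A$) and invokes Lemma~\ref{lem:integral-gamma-function}, whereas you extend to $[0,\infty)$ and compute the gamma integral directly via the substitution $u=x^d$; both yield the same constant.
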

\begin{proof}
  Let $A \subseteq S$ and let $R_A$ be the random variable as defined
  before; see Equation~\eqref{eq:def-R-A}.  Note that
  $0 \le R_A \le \sqrt[\p]{d}$.  By the law of total probability, we
  have
  \begin{equation*}
    \Pro{A\text{ is relevant}} = \int_{0}^{\sqrt[\p]{d}} \Pro{A\text{ is
        relevant} \mid R_A = x}\cdot f_{R_A} (x) \dif x. 
  \end{equation*}
  Using Lemma~\ref{lem:relevance-probability-cond-R} and
  Lemma~\ref{lem:R-A-density-function}, we obtain
  \begin{equation*}
    \Pro{A\text{ is relevant} \mid R_A = x} \cdot f_{R_A} (x) \le c_1
    \prod_{\pnt{s}_i \in A} w_i \exp\left(-c_2 x^d\sum_{\pnt{s}_i
        \notin A} w_i\right) x^{dk - d - 1},
  \end{equation*}
  for constants $c_1$ and $c_2$ only depending on $k$, $d$, and $\p$.
  Ignoring the factors independent of $x$ for now, this expression has
  the form
  \begin{equation*}
    x^{\alpha d - 1} \exp\left(-\beta x^d\right) \text{ with } \alpha
    = k - 1 \text{ and } \beta = c_2 \sum_{\pnt{s}_i \notin A} w_i,
  \end{equation*}
  which lets us apply Lemma~\ref{lem:integral-gamma-function} to bound
  the integral.  We obtain
  \begin{align*}
    \Pro{A\text{ is relevant}}
    &=c_1 \prod_{\pnt{s}_i \in A} w_i \cdot \int_{0}^{\sqrt[\p]{d}}
      x^{\alpha d - 1} \exp\left(-\beta x^d\right) \dif x\\
    &\le c_1 \prod_{\pnt{s}_i \in A} w_i \cdot
      \frac{\Gamma\left(\alpha\right)}{\beta^\alpha d}.
  \end{align*}
  As $k$ is an integer, $\Gamma(\alpha) = \Gamma(k - 1) = (k - 2)!$,
  which is constant.  Thus, substituting $\alpha$ and $\beta$ by its
  corresponding values and aggregating all constant factors into $c$
  yields
  \begin{equation*}
    \Pro{A\text{ is relevant}} \le c\frac{\prod_{\pnt{s}_i \in A}
      w_i}{\left(\sum_{\pnt{s}_i \notin A} w_i\right)^{k-1}},
  \end{equation*}
  which is exactly the bound we wanted to prove.
\end{proof}

Having bound the probability that a specific subset of sites
$A \subseteq S$ of size $k$ is relevant, we can now bound the expected
total number of relevant subsets.  By
Lemma~\ref{lem:non-empty-region-implies-relevance}, this also bounds
the number of non-empty Voronoi regions.

\ComplexityRandomVoronoiDiagram
\begin{proof}
  For every subset $A \subseteq S$ with $|A| = k$, let $X_A$ be the
  indicator random variable that has value $1$ if and only if $A$ has
  non-empty order-$k$ Voronoi region.  Moreover, let $X$ be the sum of
  these random variables.  Note that $\EX{X}$ is exactly the quantity,
  we are interested in.  Using linearity of expectation, we obtain
  \begin{equation*}
    \EX{\text{number of regions}} = \EX{X} = \sum_{\substack{A
        \subseteq S\\ |A| = k}} \EX{X_A}
  \end{equation*}
  Due to Lemma~\ref{lem:non-empty-region-implies-relevance}, a subset
  $A$ with non-empty Voronoi region is also relevant.  Thus,
  $\EX{X_A} \le \Pro{A \text{ is relevant}}$ and
  Lemma~\ref{lem:probability-for-relevance} yields
  \begin{equation}
    \label{eq:expected-nr-of-non-empty-regions}
    \sum_{\substack{A
        \subseteq S\\ |A| = k}} \EX{X_A}
    \le \sum_{\substack{A \subseteq S\\ |A| = k}}
    c\frac{\prod_{\pnt{s}_i \in A} w_i}{\left(\sum_{\pnt{s}_i \notin
          A} w_i\right)^{k-1}}.
  \end{equation}
  For technical reasons, we assume $c$ to be the maximum of $1$ and
  the constant from Lemma~\ref{lem:probability-for-relevance}.  We
  continue by proving the following claim:
  \begin{equation}
    \label{eq:claim-expected-nr-of-non-empty-regions}
    \sum_{\substack{A \subseteq S\\ |A| = k}} \EX{X_A} \le
    4^{k^2} c W
  \end{equation}
  In addition to implying the theorem, this claim specifies a constant
  that comes on top of $c$, which is crucial for the rest of the
  proof.

  We first prove the claim for the situation, in which $W$ is not
  dominated by the highest $k$ weights.  Afterwards, we deal with the
  other somewhat special case.  More formally, let the weights
  $w_1, \dots, w_n$ be sorted increasingly and consider the case that
  $\sum_{i = 1}^{n - k} w_i \ge 4^{-k}W$, i.e., if we leave out the
  $k$ largest weights, we still have a significant portion of the
  total weight.  We can use this to estimate the denominator in
  Equation~\eqref{eq:expected-nr-of-non-empty-regions}:
  \begin{align*}
    \sum_{\substack{A \subseteq S\\ |A| = k}} c\frac{\prod_{\pnt{s}_i
    \in A} w_i}{\left(\sum_{\pnt{s}_i \notin A} w_i\right)^{k-1}}
    \le&
         \sum_{\substack{A \subseteq S\\ |A| = k}} c\frac{\prod_{\pnt{s}_i
    \in A} w_i}{\left(4^{-k}W\right)^{k-1}}\\
    =& 4^{k\,(k-1)} c \cdot \frac{\sum_{A \subseteq S, |A| =
       k}\prod_{\pnt{s}_i \in A} w_i}{W^{k-1}}.
  \end{align*}
  %
  %
  To bound the fraction by $W$, observe that the binomial theorem
  yields
  \begin{equation*}
    W^k = \left(\sum_{i = 1}^n w_i\right)^k \ge \sum_{\substack{A
        \subseteq S\\ |A| = k}} \prod_{\pnt{s}_i \in A} w_i,
  \end{equation*}
  as each summand on the on the right-hand side also appears on the
  left-hand side.  This proves the claim in
  Equation~\eqref{eq:claim-expected-nr-of-non-empty-regions} for the
  case $\sum_{i = 1}^{n - k} w_i \ge 4^{-k}W$.

  For $\sum_{i = 1}^{n - k} w_i < 4^{-k}W$, assume for contradiction
  that the claim in
  Equation~\eqref{eq:claim-expected-nr-of-non-empty-regions} does not
  hold for every set of $n$ weights.  Then there exists a minimum
  counterexample, i.e., a smallest number of $n$ weights such that
  the expected number of non-empty regions exceeds $4^{k^2} c W$.  We
  show that, based on this assumption, we can construct an even
  smaller counterexample; a contradiction.  First note that $n > 2k$
  for every counterexample, as there are fewer than $4^{k^2} c W$
  subsets otherwise (recall that $c \ge 1$).

  Now let $w_1, \dots, w_n$ be the minimum counterexample and again
  assume that the weights are ordered increasingly.  Moreover, fix the
  coordinates of the sites $s_1, \dots, s_n$ and consider two
  order-$k$ Voronoi diagrams: one on the set of all sites
  $S = \{s_1, \dots, s_n\}$, and the one on all but the $k$ heaviest
  sites $S' = \{s_1, \dots, s_{n - k}\}$ (note that this is well
  defined as $n > 2k$).  In the following, we call the former Voronoi
  diagram $\mathcal V$ and the latter $\mathcal V'$.  We define a
  mapping from the non-empty regions of $\mathcal V$ to non-empty
  regions of $\mathcal V'$.  Let $A \subseteq \{s_1, \dots, s_n\}$ be
  a subset of size $k$ with non-empty region in $\mathcal V$ and let
  $\pnt{p}$ be an arbitrary point in this region.  Moreover, let $A'$
  be the set of sites corresponding to the region of $\mathcal V'$
  containing $\pnt{p}$.  Then we map the region of $A$ to the region
  of $A'$.  Note that $A$ and $A'$ share all sites that have not been
  deleted: $A \cap A' = A \cap S'$. Thus, any site $A$ that is mapped 
	to $A'$ must satisfy $A\subseteq A'\cup (S\setminus S')$.
	This limits
  the number of different regions in $\mathcal V$ that are mapped to
  the same region of $\mathcal V'$ to at most $4^k$.  Thus, the number
  of regions in $\mathcal V'$ is at least $4^{-k}$ times the number of
  regions in $\mathcal V$.  As this holds for arbitrary coordinates,
  this also holds for the expected number of non-empty regions when
  choosing random coordinates.

  As we assumed $w_1, \dots, w_n$ to be a counterexample for
  Equation~\eqref{eq:claim-expected-nr-of-non-empty-regions}, the
  expected number of regions with these weights is more than
  $4^{k^2} c W$.  Thus, by the above argument, the expected number of
  regions for the weights $w_1, \dots, w_{n - k}$ is at least
  $4^{-k} \cdot 4^{k^2} c W$.  As we consider the case
  $\sum_{i = 1}^{n - k} w_i < 4^{-k}W$, we can substitute $W$ to
  obtain that the weights $w_1, \dots, w_{n - k}$ lead to at least
  $4^{k^2} c \sum_{i = 1}^{n - k} w_i$ non-empty regions in
  expectation.  Thus, the weights $w_1, \dots, w_{n - k}$ also form a
  counterexample for the claim in
  Equation~\eqref{eq:claim-expected-nr-of-non-empty-regions}, which is
  a contradiction to the assumption that $w_1, \dots, w_n$ is the
  minimum counterexample and thus to the assumption that there is a
  counterexample at all.
\end{proof}


  

\section{Geometric SAT with Non-Zero Temperature}
\label{sec:geometric-model-with-temperature}

In the case with temperature~$T = 0$, we used the fact that every
clause contains the $k$ variables with smallest weighted distance;
recall Section~\ref{sec:core-arguments-geometric-sat}.  This is no
longer true for higher temperatures: for $T > 0$, a clause can, in
principle, contain any variable.  However, the probability to contain
a variable that is far away is rather small.  In the remainder of
this section, we show that a constant fraction of clauses actually
behave just like in the $T = 0$ case, i.e., they contain the $k$
closest variables.  With this, we can then apply the argument outlined
in Section~\ref{sec:core-arguments-geometric-sat}.

\subsection{Expected Number of Nice Clauses}
\label{sec:expected-number-nice-clause}

Recall that a clause $c$ is generated by drawing $k$ variables without
repetition with probabilities proportional to the connection weights.
We call $c$ \emph{nice} if the $i$th variable drawn for $c$ has the
$i$th highest connection weight with $c$, i.e., $c$ does not only
contain the $k$ variables with highest connection weight but they are
drawn in descending order.  This is a slightly stronger property than
just requiring $c$ to contain the $k$ variables with lowest weighted
distance.

Let $\bar x$ be the connection weight of a variable $v$ that has
rather high connection weight with~$c$.  To show that the probability
for $v \in c$ is reasonably high, we prove that $\bar x$ is large
compared to the sum of connection weights over all variables with
smaller weight.  The following lemma bounds this sum for a given
$\bar x$.  We use the Iverson bracket to exclude the variables with
weight larger than $\bar x$ from the sum, i.e.,
$\left[X(c, v) \le \bar x\right]$ evaluates to $1$ if
$X(c, v) \le \bar x$ and to $0$ otherwise.

\begin{lemma}
  \label{lem:sum-of-weights-is-small}
  Let $c$ be a clause at any position and let $V$ be a set of $n$
  weighted variables with random positions in $\mathbb T^d$.  For
  $T < 1$ and $\bar x \in \Omega(W^{1/T})$, the expected sum of
  connection weights smaller than $\bar x$ is in $O(\bar x)$, i.e.,
  \begin{equation*}
    \EX{\sum_{v \in V} X(c, v) \cdot \left[X(c, v) < \bar
        x\right]} \in O(\bar x).
  \end{equation*}
\end{lemma}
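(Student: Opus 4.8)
The plan is to bound the contribution of each variable in isolation and then add up over $V$. Fix $v \in V$ and set $Y_v = X(c,v)\cdot[X(c,v) < \bar x]$. Since $Y_v \ge 0$ and $Y_v < \bar x$ always, the layer-cake (tail-integral) formula gives $\EX{Y_v} = \int_0^{\bar x}\Pro{t < X(c,v) < \bar x}\dif t \le \int_0^{\bar x}\Pro{X(c,v) > t}\dif t$. The next step is a uniform tail bound: combining \eq{cdf-probability-weight} (which applies once the ball of influence has radius at most $1/2$) with the trivial estimate $\Pro{X(c,v) > t} \le 1$ for small $t$, and using $\Pi_{d,\p} \le 2^d = \Pi_{d,\infty}$ so that the two regimes overlap, one obtains $\Pro{X(c,v) > t} \le \min\{1,\ \Pi_{d,\p} w_v t^{-T}\}$ for every $t > 0$. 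Equivalently this follows from \eq{cdf-distance} together with Lemma~\ref{lem:volume-intersection-ball-cube}, since a ball on the torus never has larger volume than the same ball in $\mathbb R^d$.

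Then I would split the integral at the crossover point $t_v := (\Pi_{d,\p} w_v)^{1/T}$ of the two branches of the minimum. On $[0, \min\{\bar x, t_v\}]$ the integrand is at most $1$, contributing at most $\min\{\bar x, t_v\}$; on $[t_v, \bar x]$ (empty if $\bar x \le t_v$) the integrand is $\Pi_{d,\p} w_v t^{-T}$, and here the hypothesis $T < 1$ is used so that the antiderivative $t^{1-T}/(1-T)$ converges as $t \to \bar x$, contributing at most $\frac{\Pi_{d,\p} w_v}{1-T}\,\bar x^{1-T}$. Hence $\EX{Y_v} \le \min\{\bar x, t_v\} + \frac{\Pi_{d,\p} w_v}{1-T}\,\bar x^{1-T}$.

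The heart of the argument is controlling $\sum_{v\in V}\min\{\bar x, t_v\}$: the naive estimate $\min\{\bar x,t_v\}\le t_v$ would be useless, because a few heavy variables make $\sum_v t_v = \sum_v (\Pi_{d,\p} w_v)^{1/T}$ superlinear in $W$ (recall $1/T > 1$). Instead I would use the interpolation inequality $\min\{a,b\} \le a^{1-T} b^{T}$, valid for $0 < T < 1$, with $a = \bar x$ and $b = t_v$; since $t_v^{T} = \Pi_{d,\p} w_v$, this gives $\min\{\bar x, t_v\} \le \Pi_{d,\p} w_v\, \bar x^{1-T}$. Summing the per-variable bound over all $v$ and using $\sum_v w_v = W$ yields $\EX{\sum_{v\in V} Y_v} \le \bigl(1 + \tfrac{1}{1-T}\bigr)\Pi_{d,\p}\, \bar x^{1-T} W$. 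Finally, the hypothesis $\bar x \in \Omega(W^{1/T})$ is exactly $W \in \Oh(\bar x^{T})$, so the right-hand side lies in $\Oh(\bar x^{1-T}\cdot \bar x^{T}) = \Oh(\bar x)$, which is the claim.

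I expect two points to need care, both indicated above. First, the uniform tail bound $\Pro{X(c,v) > t} \le \min\{1, \Pi_{d,\p} w_v t^{-T}\}$ must be justified also in the regime where the relevant ball has radius more than $1/2$, since \eq{cdf-probability-weight} does not apply literally there; this is where the "a wrapped ball has at most the volume of the unwrapped ball" observation (\ie Lemma~\ref{lem:volume-intersection-ball-cube}) enters. Second, the threshold term $\sum_v\min\{\bar x,t_v\}$ is the only place where skewed weights could conspire against us, and the $\min\{a,b\}\le a^{1-T}b^{T}$ trick — which crucially needs $0 < T < 1$, the same hypothesis that makes the tail integrable — is what neutralizes it. The remaining computations are elementary.
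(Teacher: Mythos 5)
Your proof is correct, and it takes a genuinely different route from the paper's. The paper decomposes $\EX{X(c,v)\cdot[X(c,v)<\bar x]}$ by conditioning on the three regimes $X\le(2^dw_v)^{1/T}$, $(2^dw_v)^{1/T}<X<\bar x$, and $X\ge\bar x$, then integrates the PDF $f_X$ in the middle regime; this produces a per-variable bound of shape $O\!\left(w_v^{1/T}\right)+O\!\left(w_v\,\bar x^{1-T}\right)$, and the first summand is tamed only after summing, via the power-subadditivity $\sum_v w_v^{1/T}\le\left(\sum_v w_v\right)^{1/T}=W^{1/T}$ (valid because $1/T>1$), after which both pieces are $O(\bar x)$. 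You instead use the tail-integral form of the expectation, a uniform bound $\Pro{X(c,v)>t}\le\min\{1,\ \Pi_{d,\p}w_v t^{-T}\}$, and the interpolation inequality $\min\{a,b\}\le a^{1-T}b^{T}$ to absorb the threshold term directly into a per-variable bound $O\!\left(w_v\,\bar x^{1-T}\right)$; this collapses the two summands into one and avoids the power-subadditivity step entirely, so the summation is a one-liner. The interpolation trick is where your argument gains over the paper's; both proofs rely on $T<1$ twice (once for integrability of $t^{-T}$ near $\bar x$, and once more — for the paper in $\sum w_v^{1/T}\le W^{1/T}$, for you in $\min\{a,b\}\le a^{1-T}b^T$). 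One small citation slip: Lemma~\ref{lem:volume-intersection-ball-cube} is a \emph{lower} bound on the volume of the intersection of a ball with the hypercube, so it does not justify the inequality you need; the correct (and immediate) justification is that $F_{\mathrm{dist}}(x)=\vol(B_{\pnt p}(x)\cap\mathbb T^d)\le\vol(B_{\pnt p}(x))=\Pi_{d,\p}x^d$ because a torus ball, lifted to $\mathbb R^d$, is contained in the Euclidean ball of the same radius. With that citation fixed, the argument is complete.
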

\begin{proof}
  Using linearity of expectation, the term in the lemma's statement
  equals to the sum over the expectations
  $\EX{X(c, v) \cdot \left[X(c, v) \le \bar x\right]}$.  To bound this
  expectation, we consider the three events
  $X(c, v) \le (2^dw_v)^{1/T}$, $(2^dw_v)^{1/T} < X(c, v) < \bar x$,
  and $\bar x \le X(c, v)$.  Note that $\left[X(c, v) < \bar x\right]$
  is $0$ in the last event and $1$ in the former two.  Thus, we obtain
  \begin{align}
    \notag
    &\EX{X(c, v) \cdot \left[X(c, v) < \bar x\right]}\\
    \label{eq:expected-weight-small}
    &\quad= \Pro{X(c, v) \le (2^dw_v)^{1/T}} \cdot \EX{X(c, v) \mid
      X(c, v) \le (2^dw_v)^{1/T}}\\ 
    \label{eq:expected-weight-larger}
    &\quad+\Pro{(2^dw_v)^{1/T} < X(c, v) < \bar x} \cdot \EX{X(c, v)
      \mid (2^dw_v)^{1/T} < X(c, v) < \bar x} 
  \end{align}

  We bound the first term from above by assuming
  $X(c, v) = (2^dw_v)^{1/T}$ whenever $X(c, v) \le (2^dw_v)^{1/T}$.
  Moreover, using the CDF for $X(c, v)$
  \eqref{eq:cdf-probability-weight} yields
  \begin{align*}
    \eqref{eq:expected-weight-small}
    &\le \Pro{X(c, v) \le (2^dw_v)^{1/T}} \cdot (2^dw_v)^{1/T}\\
    &= \left(1 - \Pi_{d, \p} w_v (2^dw_v)^{-1}\right) \cdot (2^dw_v)^{1/T}\\ 
    &= \left(1 - \Pi_{d, \p} 2^{-d}\right) \cdot (2^dw_v)^{1/T} \in
      \Theta(w_v^{1/T}).
  \end{align*}

  For the second term, we have to integrate over the probability
  density function (PDF) $f_X(x)$ of the connection weights $X(c, v)$,
  which is the derivative of
  $F_X(x)$~\eqref{eq:cdf-probability-weight}.  Thus,
  $f_X(x) = T\Pi_{d, \p} w_v x^{-T - 1}$ for $x \ge (2^dw_v)^{1/T}$, and we
  obtain
  \begin{align*}
    \eqref{eq:expected-weight-larger}
    &= \Pro{(2^dw_v)^{1/T} < X(c, v) < \bar x} \cdot 
      \int_{(2^dw_v)^{1/T}}^{\bar x} \frac{x \cdot
      f_X(x)}{\Pro{(2^dw_v)^{1/T} < X(c, v) < \bar x}}\dif x\\ 
    &= T \Pi_{d, \p} w_v \cdot \int_{(2^dw_v)^{1/T}}^{\bar x} x^{-T} \dif x.\\
    \intertext{For $T < 1$, this evaluates to}
    &= T \Pi_{d, \p} w_v \cdot \left[\frac{x^{1 - T}}{1 -
      T}\right]_{(2^dw_v)^{1/T}}^{\bar x}\\
    &= \frac{T \Pi_{d, \p} w_v}{1 - T} \cdot \left[\bar x^{1 - T} -
      (2^dw_v)^{1/T - 1} \right]\\
    &\le \frac{T \Pi_{d, \p} w_v}{1 - T} \cdot \bar x^{1 - T}\\
    &\in \Theta\left(w_v \bar x^{1 - T}\right).
  \end{align*}
  
  Putting these bounds together yields
  \begin{align*}
    \EX{\sum_v X(c, v) \cdot \left[X(c, v) \le \bar x\right]}
    &= \sum_v \left(\eqref{eq:expected-weight-small} +
      \eqref{eq:expected-weight-larger}\right)\\
    &\in \Oh\left(\sum_v w_v^{1/T} + \sum_v w_v \bar x^{1-T}
      \right)\\
    &\subseteq \Oh\left(\left(\sum_v w_v\right)^{1/T} + \bar x^{1-T}
      \cdot\sum_v w_v \right)\\
    &\in\Oh\left(W^{1/T} + \bar x^{1-T} W \right).\\
  \end{align*}
  As, $\bar x \in \Omega(W^{1/T})$, we have $W^{1/T} \in O(\bar x)$,
  which handles the first term.  The second term is also in
  $O(\bar x)$, as $\bar x \in \Omega(W^{1/T})$ implies
  $W \in O(\bar x^T)$.  Thus, this yields the claimed bound of
  $O(\bar x)$.
\end{proof}

This lets us show that each clause is nice with constant probability.
The only assumption we need for this is the fact that no single weight
is too large, i.e., every weight $w_i$ has to be asymptotically
smaller than the total weight $W$.

\begin{theorem}
  \label{thm:prob-nice-clause}
  Let $\Phi$ be a random formula drawn from the weighted geometric
  model with ground space $\mathbb T^d$ equipped with a $\p$-norm,
  with temperature $T < 1$, and with $w_v/W \in o(1)$ for $v \in V$.
  Let $c$ be a clause of $\Phi$.  Then $c$ is nice with probability
  $\Omega(1)$.
\end{theorem}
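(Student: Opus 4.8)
The plan is to first reduce the statement to a deterministic inequality about connection weights and then estimate that inequality using Lemma~\ref{lem:sum-of-weights-is-small} together with an elementary concentration argument. We may condition on the clause position $\pnt{c}$; by translation invariance of the torus the connection weights $X(c,v)$, $v\in V$, are then independent, each with the CDF $F_X$ of \eqref{eq:cdf-probability-weight}, and for $T<1$ each $X(c,v)$ is a continuous random variable, so the values $\{X(c,v)\}_{v\in V}$ are \aas{}\ distinct. Sort them in decreasing order, $x_{(1)}>x_{(2)}>\dots>x_{(n)}$. Since the $k$ variables of $c$ are drawn without repetition with probabilities proportional to their connection weights, $c$ is nice exactly when the variables realizing $x_{(1)},\dots,x_{(k)}$ are drawn in this order, which happens with conditional probability $\prod_{i=1}^{k} x_{(i)}/\sum_{j\ge i} x_{(j)}$. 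Hence it suffices to prove $\EX{\prod_{i=1}^k x_{(i)}/\sum_{j\ge i}x_{(j)}}=\Omega(1)$.

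Fix the threshold $\bar x=c_1 W^{1/T}$ for a constant $c_1>0$ to be chosen, and set $N=|\{v: X(c,v)\ge \bar x\}|$ and $\bar S=\sum_{v} X(c,v)\,[X(c,v)<\bar x]$. On the event $\{N\ge k\}$ we have $x_{(i)}\ge x_{(k)}\ge\bar x$ for all $i\le k$. Writing $\sum_{j\ge i}x_{(j)}=x_{(i)}+\bigl(\text{sum of connection weights in }[\bar x,x_{(i)})\bigr)+\bar S$ and bounding the middle sum by $(N-i)x_{(i)}$ (there are at most $N-i$ such terms, each below $x_{(i)}$) yields $\sum_{j\ge i}x_{(j)}\le (N-i+1)x_{(i)}+\bar S$. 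Dividing by $x_{(i)}$ and using $x_{(i)}\ge\bar x$,
\[
\prod_{i=1}^k \frac{x_{(i)}}{\sum_{j\ge i}x_{(j)}}\ \ge\ \prod_{i=1}^k\frac{1}{(N-i+1)+\bar S/\bar x}\ \ge\ \frac{1}{\bigl(N+\bar S/\bar x\bigr)^{k}}.
\]
So on the event $\{k\le N\le N_0\}\cap\{\bar S\le C\bar x\}$ this product is at least $(N_0+C)^{-k}$, a constant, and it remains to show that this event occurs with probability $\Omega(1)$.

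Because $w_v/W\in o(1)$, for $n$ large we have $\bar x\ge(2^dw_v)^{1/T}$ for every $v$, so \eqref{eq:cdf-probability-weight} gives $\Pro{X(c,v)\ge\bar x}=\Pi_{d,\p}w_v\bar x^{-T}=\Pi_{d,\p}c_1^{-T}(w_v/W)\le 1$. Thus $N$ is a sum of independent Bernoulli variables with $\EX{N}=\Pi_{d,\p}c_1^{-T}=:\mu$, a constant that can be made as large as desired by shrinking $c_1$. Choosing $\mu$ a large constant, Chebyshev's inequality (using $\mathrm{Var}(N)\le\EX{N}=\mu$) gives $\Pro{N<k}<1/4$, and Markov gives $\Pro{N>N_0}<1/4$ with $N_0=4\mu$. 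For $\bar S$, we have $\bar x\in\Omega(W^{1/T})$ and $T<1$, so Lemma~\ref{lem:sum-of-weights-is-small} gives $\EX{\bar S}\in O(\bar x)$, and Markov yields $\Pro{\bar S>C\bar x}<1/4$ for a suitable constant $C$. A union bound over the three complements gives $\Pro{\{k\le N\le N_0\}\cap\{\bar S\le C\bar x\}}\ge 1/4$, and therefore $\Pro{c\text{ nice}}\ge \tfrac14(N_0+C)^{-k}=\Omega(1)$.

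The main obstacle is balancing the constants: a large $N$ makes the product small, so one must simultaneously ensure $N\ge k$ (which wants $\EX{N}$ large) and $N\le N_0$ (which wants $N_0$ not too small relative to $\EX{N}$); this is consistent precisely because $\EX{N}=\Pi_{d,\p}c_1^{-T}$ is a freely tunable constant and $N$ concentrates around it. The second point requiring care is that the hypothesis $w_v/W\in o(1)$ is exactly what makes $\bar x=c_1 W^{1/T}$ large enough, for all sufficiently large $n$, to lie in the regime where the closed form \eqref{eq:cdf-probability-weight} applies to every variable — needed both for computing $\EX{N}$ and for invoking Lemma~\ref{lem:sum-of-weights-is-small}.
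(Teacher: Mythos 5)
Your proof is correct and uses the paper's overall strategy: threshold the connection weights at $\Theta(W^{1/T})$, count the large ones to guarantee that the top $k$ weights dominate, and control the remaining sum via Lemma~\ref{lem:sum-of-weights-is-small}. The execution differs in a way worth noting. The paper takes $\bar x$ to be the \emph{random} $k$th largest connection weight and then invokes Lemma~\ref{lem:sum-of-weights-is-small} with that $\bar x$; since the lemma is stated for a fixed threshold and yields only an expectation bound, closing that step rigorously (turning the in-expectation bound into a constant-probability bound, and justifying the use of a random threshold) is left implicit. You instead fix $\bar x = c_1 W^{1/T}$ deterministically, derive the clean algebraic bound $\prod_{i=1}^k x_{(i)}/\sum_{j\ge i}x_{(j)} \ge (N + \bar S/\bar x)^{-k}$, and then convert the expectation bounds on $N$ and $\bar S$ into a constant-probability event via Chebyshev and Markov. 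The cost is that you must control $N$ from above as well as from below (the paper sidesteps this because its random $\bar x$ is the $k$th order statistic, so there are always exactly $k$ weights above it); the gain is a self-contained argument with no conditioning subtleties, and it is tunable since $\EX{N}$ is a free constant through $c_1$. Both routes are sound, but yours is the more referee-proof version of the same idea.
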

\begin{proof}
  We prove two things.  First, we show that, with probability
  $\Omega(1)$, there are at least $k$ variables sufficiently close to
  $c$ that they have connection weight $\Omega(W^{1/T})$.  Second, we
  use Lemma~\ref{lem:sum-of-weights-is-small} to show that the $k$
  variables with highest connection weight are chosen for $c$ with
  constant probability (in descending order).

  For the first part, we show that there is a constant $a$ such that,
  with constant probability, at least $k$ variables have connection
  weight at least $a W^{1/T}$.  For a fixed variable $v$, we can use
  the CDF of $X(c, v)$ (Equation~\eqref{eq:cdf-probability-weight}) to
  obtain
  \begin{align*}
    \Pro{X(c, v) \ge a W^{1/T}}
    &= \Pi_{d, \p} w_v \left(a W^{1/T}\right)^{-T}\\
    &= \frac{\Pi_{d, \p}}{a^T} \frac{w_v}{W}\\
    &= 2k\frac{w_v}{W} \text{, for } a = \left(\frac{\Pi_{d, \p}}{2k}\right)^{1/T}.
  \end{align*}
  Note that this is a valid probability, as $w_v/W \in o(1)$ implies
  that it is below $1$.  For the above choice of $a$, we obtain that
  the expected number of variables with connection weight at least
  $a W^{1/T}$ is $2k$.  As the connection weights for the different
  variables are independent, we can apply the Chernoff-Hoeffding bound
  in Theorem~\ref{thm:chernoff-hoeffding} to obtain that at least $k$
  variables have connection weight $a W^{1/T}$ with constant
  probability.

  For the second part of the proof, let $\bar x$ be the connection
  weight of the $k$th closest variable.  With the argument above, we
  can assume $\bar x \in \Omega(W^{1/T})$ with constant probability,
  which lets us apply Lemma~\ref{lem:sum-of-weights-is-small}.  To do
  so, consider the experiment of drawing the first variable for our
  clause $c$.  Let $v$ be the variable that maximizes the connection
  weight $X(c, v)$.  The probability of drawing $v$ equals $X(c, v)$
  divided by the sum of all connection weights.  By
  Lemma~\ref{lem:sum-of-weights-is-small}, the sum of all connection
  weights smaller than $\bar x$ is in $O(\bar x)$.  Thus, the sum of
  all connection weights is in $O(X(c, v))$, which implies that $v$ is
  chosen with constant probability.  As we draw variables without
  repetition, the exact same argument applies for the second closest
  variable and so on.  Thus, the probability that $c$ contains the $k$
  closest variables drawn in order of descending connection weights is
  at least a constant, if there are $k$ sufficiently close variables.
  As the latter holds with constant probability, $c$ is nice with
  constant probability.
\end{proof}

By the linearity of expectation, this immediately yields the following
bound on the expected number of nice clauses.

\begin{corollary}
  \label{cor:expected-number-nice}
  Let $\Phi$ be a random formula with $m$ clauses drawn from the
  weighted geometric model with ground space $\mathbb T^d$ equipped
  with a $\p$-norm, with temperature $T < 1$, and with
  $w_v/W \in o(1)$ for $v \in V$.  The expected number of nice clauses
  in $\Phi$ is $\Theta(m)$.
\end{corollary}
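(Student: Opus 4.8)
The plan is to derive the bound directly from Theorem~\ref{thm:prob-nice-clause} via linearity of expectation, which requires essentially no further work. First I would fix the random formula $\Phi$ with clause set $C$, $|C| = m$, and for each clause $c \in C$ introduce the indicator random variable $Y_c$ that equals $1$ precisely when $c$ is nice. Writing $N = \sum_{c \in C} Y_c$ for the number of nice clauses, linearity of expectation gives $\EX{N} = \sum_{c \in C} \Pro{c \text{ is nice}}$. Note that this identity holds regardless of any dependence between clauses, so the (true but here irrelevant) fact that the clauses of the geometric model are drawn independently is not even needed.

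For the lower bound I would invoke Theorem~\ref{thm:prob-nice-clause}: under the hypotheses of the corollary --- ground space $\mathbb T^d$ with a $\p$-norm, temperature $T < 1$, and $w_v / W \in o(1)$ for every $v \in V$ --- every single clause is nice with probability $\Omega(1)$. The constant implicit in that $\Omega(1)$ depends only on $k$, $d$, $\p$, and $T$, and not on the particular clause under consideration. In fact, since the clause positions are i.i.d.\ uniform on $\mathbb T^d$ while the variable configuration is shared, the niceness probability is literally identical for all clauses, so I would record $\Pro{c \text{ is nice}} \ge c_0$ for a universal constant $c_0 > 0$. Hence $\EX{N} \ge c_0 m \in \Omega(m)$. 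The matching upper bound is trivial, as $N \le m$ holds deterministically, so $\EX{N} \le m \in O(m)$. Combining the two bounds yields $\EX{N} \in \Theta(m)$, as claimed.

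There is no real obstacle in this step: all of the difficulty has already been absorbed into Theorem~\ref{thm:prob-nice-clause} (which itself rests on Lemma~\ref{lem:sum-of-weights-is-small} and the Chernoff--Hoeffding bound of Theorem~\ref{thm:chernoff-hoeffding}). The only point I would be careful to state explicitly is the uniformity of the niceness probability over clauses --- hence the brief symmetry remark above --- so that the passage from ``each clause is nice with probability $\Omega(1)$'' to ``$\Omega(m)$ clauses are nice in expectation'' is fully justified rather than left implicit.
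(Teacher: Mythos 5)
Your proof is correct and follows exactly the route the paper intends: the paper explicitly states that the corollary follows from Theorem~\ref{thm:prob-nice-clause} by linearity of expectation, and you spell out precisely that argument (indicators, linearity, per-clause lower bound from the theorem, trivial upper bound $N \le m$). The symmetry remark is a nice touch but not strictly needed, since $\Omega(1)$ per clause plus linearity already gives $\Omega(m)$ in expectation.
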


\subsection{Concentration of Nice Clauses}
\label{sec:conc-nice-claus}

We show that the number of nice clauses is concentrated around its
expectation, i.e., with high probability, a constant fraction of
clauses is nice.  Our main tool for this will be the method of
typical bounded differences~\cite{w-mtbd-16}; see
Section~\ref{sec:meth-typic-bound}.  To this end, we consider several
random variables, e.g., the coordinates of clauses and variables, that
together determine the whole process of generating a random formula.
The number of nice clauses is then a function $f$ of these random
variables and its expectation is $\Theta(m)$, due to
Corollary~\ref{cor:expected-number-nice}.  Roughly speaking, the
method of bounded differences then states that the probability that
$f$ deviates too much from its expectation is low if changing a single
random variable only slightly changes~$f$.

\subsubsection{The Random Variables}
\label{sec:random-variables}

So far, we viewed the generation of a random formula as a two-step
process: first, sample coordinates for the variables and clauses;
second, sample the variables contained in each clause based on their
distances. The first step can be easily expressed via random
variables.  Let $V_1, \dots, V_n$ and $C_1, \dots C_m$ be the
coordinates\footnote{Technically, these are multivariate random
  variables, as they represent $d$-dimensional points in
  $\mathbb T^d$.} of the $n$ variables and $m$ clauses, respectively.
Though the second step heavily depends on the distances determined by
the first, we can determine all random choices in advance.  For all
$i \in [m]$ and $j \in [k]$, let $X_i^j$ be a random variable
uniformly distributed in $[0, 1)$.  The variable $X_i^j$ determines
the $j$th variable of the $i$th clause $c_i$ in the following way.  We
partition the interval $[0, 1)$ such that each variable $v$ not
already chosen for $c_i$ corresponds to a subinterval of length
proportional to the connection weight $X(c_i, v)$.  We order these
subintervals by length such that the largest interval comes first.  The
$j$th variable of $c_i$ is then the variable whose interval contains
$X_i^j$.  Note that this samples $k$ different variables for each
clause, with probabilities proportional to the connection weights
$X(c_i, v)$.  Note further that the whole generation process of a
random formula is determined by evaluating the independent random
variables $V_1, \dots, V_n, C_1, \dots, C_m, X_1^1, \dots, X_m^k$.

To formalize the concept of nice clauses in this context, we require
some more notation.  For $i \in [m]$, let $\mathcal V_i$ be the
sequence of all variables ordered decreasingly by connection weight
with the clause $c_i$.  Moreover, let $\mathcal V_i[a, b]$ denote the
subsequence from the $a$th to the $b$th variable in this sequence,
including the boundaries.  To simplify notation, we abbreviate the
unique element in $\mathcal V_i[a, a]$ with $\mathcal V_i[a]$.  Recall
that clause $c_i$ is nice if, for each of $k$ steps, we choose the
variable with highest connection weight that has not been chosen
before.  With respect to the random variables, this happens if, for
each $j \in [k]$, $X_i^j$ is smaller than the connection weight of
$\mathcal V_i[j]$ divided by the sum of all connection weights of the
remaining variables $\mathcal V_i[j, n]$.  We thus define the
indicator variable
\begin{align}
  \label{eq:indicator-var-niceness}
  N_i =
  \begin{cases}
    1, & \text{if }\forall j \in [k]\colon X_i^j < \dfrac{X(c_i, \mathcal V_i[j])}{\sum_{v\in \mathcal V_i[j, n]} X(c_i, v)},\\
    0 & \text{otherwise,}
  \end{cases}
\end{align}
which is $1$ if and only if the $i$th clause is nice.  With this, we
can define the number of nice clauses as
$f(V_1, \dots, V_n, C_1, \dots, C_m, X_1^1, \dots, X_m^k) = \sum_{i
  \in [m]} N_i$.

\subsubsection{Bounding the Effect on the Number of Nice Clauses}
\label{sec:bounding-effect}

To apply the method of bounded differences
(Theorem~\ref{thm:typical-bounded-differences} or the more specific
Corollary~\ref{cor:typical-bounded-differences}), we have to bound the
effect of changing the value of only one of these random variables on
$f$.  For the variables $C_1, \dots, C_m$, this is easy: Changing
$C_i$ moves the position of the clause $c_i$, which only makes a
difference for $c_i$.  Thus, the number of nice clauses changes by at
most~$1$.  Similarly, changing $X_i^j$ only impacts the clause $c_i$,
which implies that it changes the number of nice clauses by at
most~$1$.

For the variables $V_1, \dots, V_n$, one can actually construct
situations in which changing only a single position drops $f$ from $m$
to $0$.  There are basically two situations in which this can happen.
First, if a single variable is close to many clauses, changing its
position potentially impacts many clauses.  Second, if many
inequalities in Equation~\eqref{eq:indicator-var-niceness} are rather
tight, then moving a single variable slightly closer to many clauses
can increase the denominator on the right hand side by enough to
change $N_i$ for many clauses.  We exclude both situations by defining
unlikely bad events.  By assuming these bad events do not happen, we
can bound the effect of moving a single variable $v$ by
\begin{equation}
  \label{eq:effect-of-variable-movement}
  \delta_v = w_v^{\frac{1}{1 + T}} n^{\frac{T}{1 + T}}\log^{\frac{2}{1+T}}n.
\end{equation}
The following bound gives a simpler estimate for $\delta_v$ that will
be useful later.

\begin{lemma}
  \label{lem:nice-bound-for-delta-v}
  Let $0 < T < 1$ and $w_v \in O(n^{1-\eps})$ for an arbitrary
  $\eps > 0$.  Then
  $\delta_v \in O\left(\frac{\sqrt{w_vn}}{\log n}\right)$.
\end{lemma}
\begin{proof}
  We ignore logarithmic factors and show that
  $\delta_v / \sqrt{w_v n}$ converges polynomially to $0$ for
  $n \to \infty$.  As logarithmic factors grow slower than any
  polynomial, this proves the claim.  We get
  \begin{equation*}
    \frac{\delta_v}{\sqrt{w_vn}} =
    w_v^{\frac{1}{1 + T} - \frac{1}{2}} n^{\frac{T}{1 + T} -
      \frac{1}{2}}.
  \end{equation*}
  Rearranging the exponents yields
  \begin{equation*}
    \frac{1}{1 + T} - \frac{1}{2} =
    \frac{2 - (1 + T)}{2(1 + T)} =
    \frac{1 - T}{2(1 + T)},
    \quad \text{ and } \quad
    \frac{T}{1 + T} - \frac{1}{2} =
    \frac{2T - (1 + T)}{2(1 + T)} =
    - \frac{1 - T}{2(1 + T)}.
  \end{equation*}
  Thus $\delta_v / \sqrt{w_v n} = (w_v / n)^c$ for a positive constant
  $c$.  As $w_v \in O(n^{1 - \eps})$, this yields the claim.
\end{proof}

The following lemma states that, with overwhelming probability, no
point (and therefore no variable) is too close to too many clauses.
This eliminates the first problematic situation (and will also help
with the second).  Note that this statement only assumes random clause
positions and holds for arbitrary variable positions, i.e., when
moving a variable, we can assume that it holds before and after the
movement.

\begin{lemma}
  \label{lem:no-point-close-to-many-clauses}
  Let $m \in O(n)$, $0 < T < 1$, and $w_v \in O(n^{1-\eps})$ for every
  $v \in V$ and arbitrary constant $\eps > 0$.  Let
  $r = ( {w_v \log^2 (n)}/{n} )^{\frac{1}{d(1 + T)}}$.
  With overwhelming probability, for every point $\pnt{p}$, the ball
  $B_{\pnt{p}}(r)$ around $\pnt{p}$ with radius $r$ contains only
  $O(\delta_v)$ clauses.
\end{lemma}
\begin{proof}
  As there are uncountably many points $\pnt{p}$, it is hard to argue
  about them directly.  Thus, we first reduce the statement to one
  about finitely many positions, namely the positions of the clauses.
  Then it remains to show the statement for these positions.

  Consider a fixed point $\pnt{p}$.  As $B_{\pnt{p}}(r)$ has diameter
  $2r$, the pair-wise distance between clauses in $B_{\pnt{p}}(r)$ is
  at most $2r$.  Thus, if there exists a point $\pnt{p}$ such that
  $B_{\pnt{p}}(r)$ contains too many clauses, then there exists a
  clause that has too many other clauses at distance at most $2r$.
  Thus, it suffices to show that for every clause $c \in C$, the
  number of clauses of distance at most $2r$ to $c$ is in
  $O(\delta_v)$.

  Let $c_0$ be a fixed clause (we later apply the union bound over all
  clauses).  We want to bound the probability for another clause $c$
  to be closer than $2r$ to $c_0$.  For this, we use the CDF of the
  distance in Equation~\eqref{eq:cdf-distance}.  Note that the
  restriction of Equation~\eqref{eq:cdf-distance} to the interval
  $[0, 0.5]$ is not an issue here, as $w_v \in O(n^{1 - \eps})$
  implies $r \in o(1)$ and thus $2r \le 0.5$.  Thus, we obtain
  \begin{align*}
    \Pro{\dist{\pnt{c_0}}{\pnt{c}} \le 2r}
    &= \Pi_{d, \p} 2^d r^d \\
    &= \Pi_{d, \p} 2^d \left( \frac{w_v \log^2 n}{n} \right)^{\frac{1}{1 + T}}\\
    &= \Pi_{d, \p} 2^d \left(\frac{w_v}{n}\right)^{\frac{1}{1 + T}}\log^{\frac{2}{1+T}}n.
  \end{align*}
  As there are $m \in O(n)$ clauses, the expected number of clauses
  with distance at most $2r$ to $c_0$ is
  \begin{equation*}
    m \Pi_{d, \p} 2^d \left(\frac{w_v}{n}\right)^{\frac{1}{1 + T}} \log^{\frac{2}{1+T}}n
    \in O\left(w_v^{\frac{1}{1 + T}} n^{1 - \frac{1}{1 + T}}\log^{\frac{2}{1+T}}n\right)
    = O\left(w_v^{\frac{1}{1 + T}} n^{\frac{T}{1 + T}}\log^{\frac{2}{1+T}}n\right),
  \end{equation*}
  which is already the claimed bound of $O(\delta_v)$.  As
  $0 < T < 1$, this upper bound grows polynomially in $n$.  Thus, by
  the Chernoff-Hoeffding bound in Corollary~\ref{cor:chernoff-hoeffding-asymptotic},
  it holds asymptotically with overwhelming probability.  Applying the
  union bound over all $O(n)$ clauses yields the claim.
\end{proof}

The above lemma is stated in terms of the distances.  In the following
it will be useful to think of it in terms of connection weights
instead.  The following lemma translates the radius $r$ in
Lemma~\ref{lem:no-point-close-to-many-clauses} to the corresponding
connection weight between a clause and a variable at distance $r$.

\begin{lemma}
  \label{lem:dist-to-conn-weight-x0}
  Let $v \in V$ be a variable and let $c \in C$ be a clause with
  distance
  $\dist{\pnt{c}}{\pnt{v}} = ( {w_v \log^2 (n)}/{n} )^{\frac{1}{d(1 +
      T)}}$.  They have connection weight
  $X(c, v) = w_v^{\frac{1}{1 + T}} n^{\frac{1}{T(1 +
      T)}}\log^{-\frac{2}{T(1+T)}}n$.
\end{lemma}
\begin{proof}
  Using the definition of the connection weight and inserting the
  above distance, we obtain
  \begin{align*}
    X(c, v) &= \left( \frac{w_v}{\dist{\pnt{c}}{\pnt{v}}^d} \right)^{\frac{1}{T}}\\
            &= \left( w_v \cdot \left( \frac{n}{w_v \log^2 n}
              \right)^{\frac{1}{1 + T}} \right)^{\frac{1}{T}}\\
            &= w_v^{\frac{1}{T} - \frac{1}{T(1 + T)}} n^{\frac{1}{T(1
              + T)}} \log^{-\frac{2}{T(1 + T)}} n\\
            &= w_v^{\frac{1}{1 + T}} n^{\frac{1}{T(1 + T)}} \log^{-\frac{2}{T(1 + T)}} n.
  \end{align*}
\end{proof}

Combining Lemma~\ref{lem:no-point-close-to-many-clauses} and
Lemma~\ref{lem:dist-to-conn-weight-x0}, we obtain that, for arbitrary
variable positions (and random clause positions), no variable has a
high connection weight to too many clauses, as summarized by the
following corollary.

\begin{corollary}
  \label{cor:variables-impact-few-clauses}
  Let $m \in O(n)$, $0 < T < 1$, and $w_v \in O(n^{1-\eps})$ for every
  $v \in V$ and arbitrary constant $\eps > 0$.  With overwhelming
  probability, for every variable $v$ and every possible position of
  $v$, the number of clauses with connection weight at least
  $w_v^{\frac{1}{1 + T}} n^{\frac{1}{T(1 +
      T)}}\log^{-\frac{2}{T(1+T)}}n$ is in $O(\delta_v)$.
\end{corollary}

For the second problematic situation mentioned above, consider for a
clause $c_i$ the $k$ inequalities in
Equation~\eqref{eq:indicator-var-niceness}.  We call $c_i$
\emph{$v$-critical} if for one of these inequalities the difference
between the left and right hand side is at most $\delta_v/n$.  In the
following lemma, we first bound the number of critical clauses.
Afterwards, we show that the concept of critical clauses works as
intended in the sense that moving the variable $v$ does only change
the niceness status of $v$-critical clauses.

\begin{lemma}
  \label{lem:few-critical-clauses}
  Let $m \in O(n)$, $0 < T < 1$ and let $v$ be a variable.  With
  overwhelming probability, there are only $O(\delta_v)$ $v$-critical
  clauses.
\end{lemma}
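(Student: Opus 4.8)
The plan is to condition on the positions of all $n$ variables and all $m$ clauses and then to argue using only the independent uniform variables $X_i^j$ introduced in Section~\ref{sec:random-variables}. Once the positions are fixed, every connection weight $X(c_i, v)$ is determined, hence so is the ordering $\mathcal V_i$ and the right-hand side
\[
r_{i,j} \;=\; \frac{X(c_i, \mathcal V_i[j])}{\sum_{v \in \mathcal V_i[j, n]} X(c_i, v)}
\]
of each of the $k$ inequalities in~\eqref{eq:indicator-var-niceness}. By definition, $N_i = 1$ means $X_i^j < r_{i,j}$ for all $j \in [k]$, so a $v$-critical clause $c_i$ satisfies $X_i^j \in [\,r_{i,j} - \delta_v/n,\, r_{i,j})$ for at least one $j$. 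Since each $X_i^j$ is uniform on $[0,1)$ and independent of the positions, conditioned on the positions each of these $k$ events has probability at most $\delta_v/n$ (if $\delta_v/n \ge 1$ the lemma is immediate, since then $\delta_v \ge n$ and there are only $m \in O(n) = O(\delta_v)$ clauses in total). A union bound over $j \in [k]$ gives $\Pro{c_i \text{ is } v\text{-critical} \mid \text{positions}} \le k\delta_v/n$, and by linearity of expectation the conditional expected number of $v$-critical clauses is at most $mk\delta_v/n \in O(\delta_v)$, using $m \in O(n)$ and that $k$ is constant.

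For concentration, observe that, conditioned on the positions, the indicator that $c_i$ is $v$-critical is a function of the tuple $(X_i^1, \dots, X_i^k)$ alone, and these tuples are independent across $i \in [m]$. Thus the number of $v$-critical clauses is, conditionally, a sum of $m$ independent $\{0,1\}$ variables of mean $O(\delta_v)$. Because $w_v \ge 1$, we have $\delta_v \ge n^{T/(1+T)} \log^{2/(1+T)} n$, which grows polynomially in $n$ since $T > 0$; hence the Chernoff--Hoeffding bound (Theorem~\ref{thm:chernoff-hoeffding} and Corollary~\ref{cor:chernoff-hoeffding-asymptotic}, exactly as applied in the proof of Lemma~\ref{lem:variables-impact-few-clauses}) shows that the number of $v$-critical clauses is $O(\delta_v)$ with probability at least $1 - n^{-2}$. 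Since this conditional statement holds for every realization of the positions, it holds unconditionally, which proves the lemma.

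The only genuinely delicate point I anticipate is bookkeeping: spelling out the implication "$c_i$ is $v$-critical $\Rightarrow$ $X_i^j$ lies in an interval of length $\delta_v/n$ around $r_{i,j}$ for some $j$", which hinges on the characterization $N_i = 1 \iff X_i^j < r_{i,j}$ for all $j$ and on $r_{i,j}$ being a function of the positions only. Everything after that --- the union bound, the expectation estimate, and the appeal to a Chernoff--Hoeffding bound with a polynomially large mean --- is routine and mirrors the proof of Lemma~\ref{lem:variables-impact-few-clauses}. In particular, unlike the other statements in this subsection, no property of the connection-weight distribution beyond the uniformity of the $X_i^j$ is needed here.
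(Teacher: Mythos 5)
Your proposal is correct and takes essentially the same route as the paper: bound the probability that $c_i$ is $v$-critical by $O(\delta_v/n)$ via the uniformity of the $X_i^j$, sum to get an $O(\delta_v)$ expectation, and conclude with the Chernoff--Hoeffding bound using that $\delta_v$ grows polynomially. Your explicit conditioning on the positions is a small but welcome sharpening, since the paper's assertion that "the event of being $v$-critical is independent for the different clauses" is only literally true after fixing the positions, which is exactly what you make precise.
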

\begin{proof}
  A clause $c_i$ can only be $v$-critical if one of the random
  variables $X_i^j$ for $j \in [k]$ differs by at most $\delta_v/n$ to
  the right hand side of the inequality in
  Equation~\eqref{eq:indicator-var-niceness}.  The probability for
  this to happen for a single $X_i^j$ is $2\delta_v/n$.  As $k$ is
  constant, $c_i$ is $v$-critical with probability $O(\delta_v/n)$.
  Thus, as $m \in O(n)$, the expected number of $v$-critical clauses
  is in $O(\delta_v)$.  As the event of being $v$-critical is
  independent for the different clauses, and as this bound is
  polynomial in $n$ for $T > 0$ (see
  Equation~\eqref{eq:effect-of-variable-movement}), the
  Chernoff-Hoeffding bound in
  Corollary~\ref{cor:chernoff-hoeffding-asymptotic} yields the claim.
\end{proof}

To prove that the movement of a single variable does not change the
niceness status of too many clauses, we argue along the following
lines.  Let $v$ be the variable we move and consider a clause $c$.
If, before or after the movement, $v$ is so close to $c$ that we get a
very high connection weight $X(c, v)$, we basically give up on $c$ and
assume that $c$ changes its status (from being nice to not being nice
or the other way round).  By
Corollary~\ref{cor:variables-impact-few-clauses} this only happens for
at most $O(\delta_v)$ clauses.  Similarly, if $c$ is $v$-critical, we
also give up on $c$, which happens for at most $O(\delta_v)$ clauses
by Lemma~\ref{lem:few-critical-clauses}.  Then it remains to show that
in all other cases (i.e., when $X(c, v)$ is low before and after the
movement and $c$ is not $v$-critical), the status of $c$ remains
unchanged.

This is done as follows.  As $c$ is not $v$-critical, the difference
between the right and left hand side of the inequality in
Equation~\eqref{eq:indicator-var-niceness} is somewhat high.  Thus, if
moving $v$ does not change the right hand side by too much, then $c$
keeps its niceness status.  To show this, we can use the fact that
$X(c, v)$ is low before and after the movement and thus it cannot
change by too much.  This change of $X(c, v)$ has to be considered
relative to the other connection weights, i.e., changing $X(c, v)$ has
less impact if there are other variables with higher connection
weight.  The following lemma establishes that these other variables
with higher connection weight indeed exist.

\begin{lemma}
  \label{lem:each-clause-has-k-heavy-variables}
  Let $w_v \in O(n^{1-\eps})$ for every $v \in V$ and arbitrary
  constant $\eps > 0$.  With overwhelming probability every clause has
  $k$ variables with connection weight at least
  $W^{\frac{1}{T}} \log^{-\frac{2}{T}} n$.
\end{lemma}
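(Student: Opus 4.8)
Here ``weight'' should be read as \emph{connection weight} $X(c,v)$: the threshold $x_1 := W^{1/T}\log^{-2/T} n$ is far larger than any variable weight $w_v \in O(n^{1-\eps})$, so reading ``weight'' literally would make the statement vacuous; the relevant quantity is the connection weight, matching the numerator in Equation~\eqref{eq:indicator-var-niceness}. The plan is a first-moment computation for a single clause, followed by a Chernoff bound and a union bound over all clauses.

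First I would fix a clause $c$ (at an arbitrary position) and a variable $v$, and evaluate the CDF of $X(c,v)$ from Equation~\eqref{eq:cdf-probability-weight} at $x = x_1$. This requires checking the range condition $x_1 \ge (2^d w_v)^{1/T}$, equivalently $2^d w_v \log^2 n \le W$; since the weights are normalized to have minimum $1$ we have $W \ge n$, while $2^d w_v \log^2 n \in O(n^{1-\eps}\log^2 n) = o(n)$, so the condition holds for $n$ large enough. Then
\begin{equation*}
  \Pro{X(c,v) \ge x_1} = \Pi_{d, \p}\, w_v\, x_1^{-T} = \Pi_{d, \p}\,\frac{w_v}{W}\,\log^2 n ,
\end{equation*}
which is a valid probability because $w_v/W \in O(n^{-\eps})$. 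Summing over $v \in V$ by linearity of expectation, the expected number of variables with connection weight at least $x_1$ equals $\Pi_{d, \p}\log^2 n \in \Theta(\log^2 n)$.

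Next, since the variable positions are mutually independent and the clause position is fixed, the events $\{X(c,v) \ge x_1\}$ are independent across $v$, so I would apply the Chernoff--Hoeffding bound (Theorem~\ref{thm:chernoff-hoeffding}) to the count $Y_c$ of such variables. With $\mu := \EX{Y_c} = \Pi_{d,\p}\log^2 n \to \infty$ we have $k \le \mu/2$ for large $n$, hence $\Pro{Y_c < k} \le \Pro{Y_c \le \mu/2} \le \exp(-\mu/8) = n^{-\omega(1)}$, i.e.\ smaller than any fixed polynomial in $1/n$. A union bound over the (polynomially many) clauses then bounds the probability that some clause has fewer than $k$ variables of connection weight at least $x_1$ by $o(1/n)$, which gives the claim.

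The only genuinely delicate step is verifying that $x = x_1$ already lies in the regime $x \ge (2^d w_v)^{1/T}$ in which Equation~\eqref{eq:cdf-probability-weight} is valid; this is exactly where the hypothesis $w_v \in O(n^{1-\eps})$ (together with $W \ge n$ from the normalization) is used, and it is also why the $\log^{-2/T} n$ slack in the threshold is needed for this argument. Everything else is a routine first-moment estimate plus a standard multiplicative concentration bound, whose conclusion is unaffected by the expectation being only polylogarithmic rather than polynomial in $n$.
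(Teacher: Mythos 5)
Your proof is correct and takes essentially the same route as the paper: compute the per-variable probability from \eq{cdf-probability-weight} (after checking the validity range using $w_v \in O(n^{1-\eps})$ and $W \ge n$), observe the expectation is $\Pi_{d,\p}\log^2 n$, apply Chernoff--Hoeffding, and union-bound over clauses. Your explicit remark that ``weight'' must be read as connection weight $X(c,v)$ is a fair clarification of the paper's terse statement and matches what the paper's own proof does.
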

\begin{proof}
  Let $x_0 = W^{\frac{1}{T}} \log^{-\frac{2}{T}} n$ be the above
  connection weight and let $c$ be a clause with fixed position.  For
  every variable $v$, the probability for $X(c, v) \ge x_0$ is
  $\Pi_{d, \p} w_v x_0^{-T} = \Pi_{d, \p} w_v W^{-1} \log^2 n$ by
  Equation~\eqref{eq:cdf-probability-weight}.  Note that we can apply
  Equation~\eqref{eq:cdf-probability-weight} as
  $x_0 \ge (2^d w_v)^{1/T}$ due to the condition
  $w_v \in O(n^{1-\eps})$ and the fact that $W \ge n$.  Summing this
  over all variables yields that the expected number of variables with
  connection weight at least $x_0$ is $\Pi_{d, \p} \log^2 n$.  By
  Corollary~\ref{cor:chernoff-hoeffding-asymptotic}, $c$ has
  $\Omega(\log^2 n)$ variables with connection weight at least $x_0$
  with overwhelming probability.  Applying the union bound over all
  clauses and the fact that $k$ is constant while $\log^2 n$ grows
  with $n$ yields the claim.
\end{proof}

Now we are ready to bound the effect of moving just a single variable
on the number of nice clauses.

\begin{lemma}
  \label{lem:moving-one-variable-small-change}
  Let $m \in O(n)$, $0 < T < 1$, and $w_v \in O(n^{1-\eps})$ for every
  $v \in V$ and arbitrary constant $\eps > 0$.  With overwhelming
  probability, moving a single variable to an arbitrary position
  changes the number of nice clauses by only $O(\delta_v)$.
\end{lemma}
\begin{proof}
  We show the result for a fixed variable $v$.  It then follows for
  all variables using the union bound.

  Consider how the niceness status of clauses changes when moving $v$.
  Due to Corollary~\ref{cor:variables-impact-few-clauses}, there are
  only $O(\delta_v)$ clauses with connection weight at least
  $w_v^{\frac{1}{1 + T}} n^{\frac{1}{T(1 +
      T)}}\log^{-\frac{2}{T(1+T)}}n$ before or after the movement.
  Moreover, due to Lemma~\ref{lem:few-critical-clauses} there are only
  $O(\delta_v)$ $v$-critical clauses.  Thus, even if all these clauses
  change the status from being nice to not being nice or vice versa,
  the number of nice clauses changes by only $O(\delta_v)$.

  Every remaining clause $c$ is not $v$-critical and we have
  $X(c, v) \le w_v^{\frac{1}{1 + T}} n^{\frac{1}{T(1 +
      T)}}\log^{-\frac{2}{T(1+T)}}n$ before and after the movement.
  In the following we show that a clause $c$ with these two properties
  is nice after the movement if and only if it is nice before the
  movement.

  We first observe that $v$ does not belong to the $k$ variables
  closest to $c$ due to
  Lemma~\ref{lem:each-clause-has-k-heavy-variables}: With overwhelming
  probability, there are $k$ variables with connection weight at least
  $W^{\frac{1}{T}} \log^{-\frac{2}{T}} n$, which is asymptotically
  bigger than $X(c, v)$ as $w_v \in O(n^{1 - \eps})$.

  Thus, in the right hand side of the inequality in
  Equation~\eqref{eq:indicator-var-niceness}, the connection weight
  $X(c, v)$ only appears in the denominator.  To show that the right
  hand side does not change by too much, let $x$ be the numerator, let
  $y$ be the denominator before the movement, and let $y'$ be the
  denominator after the movement.  Note that $|y' - y|$ is exactly the
  change in $X(c, v)$ caused by the movement of $v$.  With this, the
  right hand side of the inequality in
  Equation~\eqref{eq:indicator-var-niceness} changes by
  \begin{align*}
    \left| \frac{x}{y} - \frac{x}{y'} \right|
    = \left| \frac{xy' - xy}{yy'} \right|
    = \frac{x}{y'}\cdot \frac{|y' - y|}{y}.
  \end{align*}
  Note that $x$ (the numerator) is the connection weight of one
  variable whose connection weight also appears in the sum of the
  denominator (after and before the movement).  Thus,
  $\frac{x}{y'} \le 1$ and the above change is upper bounded by
  $\frac{|y' - y|}{y}$.  Note that the upper bound on $X(c, v)$ holds
  before and after the movement and thus $X(c, v)$ can only change by
  less than this upper bound, i.e.,
  $|y' - y| < w_v^{\frac{1}{1 + T}} n^{\frac{1}{T(1 +
      T)}}\log^{-\frac{2}{T(1+T)}}n$.  Moreover, $y$ is the sum of
  multiple connection weights including the weight of one of the $k$
  closest variables.  Thus, by
  Lemma~\ref{lem:each-clause-has-k-heavy-variables} and the fact that
  $W \ge n$ we can assume that
  $y \ge n^{\frac{1}{T}} \log^{-\frac{2}{T}} n$.  Putting this
  together yields
  \begin{align*}
    \left| \frac{x}{y} - \frac{x}{y'} \right|
    \le \frac{|y' - y|}{y}
    < \frac{w_v^{\frac{1}{1 + T}} n^{\frac{1}{T(1 + T)}}\log^{-\frac{2}{T(1+T)}}n}{n^{\frac{1}{T}} \log^{-\frac{2}{T}} n}
    = \left(\frac{w_v}{n}\right)^{\frac{1}{1 + T}} \log^{\frac{2}{1 +
    T}} n = \frac{\delta_v}{n}.
  \end{align*}
  As $c$ is not $v$-critical, the difference between the left and
  right side of the inequality in
  Equation~\eqref{eq:indicator-var-niceness} is at least
  $\frac{\delta_v}{n}$ before the movement.  Thus, as the movement can
  change the right hand side by only less than $\frac{\delta_v}{n}$,
  the clause $c$ is nice after the movement if and only if it was nice
  before.
\end{proof}

With this we are ready to prove concentration using the method of
typical bounded differences.

\begin{theorem}
  \label{thm:nr-nice-clauses-whp}
  Let $\Phi$ be a random formula with $n$ variables and
  $m \in \Theta(n)$ clauses drawn from the weighted geometric model
  with ground space $\mathbb T^d$ equipped with a $\p$-norm, with
  temperature $0 < T < 1$, with $W \in O(n)$, and with
  $w_v \in O(n^{1-\eps})$ for every $v \in V$ and arbitrary constant
  $\eps > 0$.  With high probability, $\Theta(m)$ clauses are nice.
\end{theorem}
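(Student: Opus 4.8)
The plan is to prove concentration of the number of nice clauses via the robust (bad-event) version of the method of bounded differences, Theorem~\ref{thm:avg-bounded-differences}. Recall from Section~\ref{sec:random-variables} that this count is the function $f = \sum_{i \in [m]} N_i$ of the mutually independent random variables $V_1, \dots, V_n$, $C_1, \dots, C_m$, $X_1^1, \dots, X_m^k$, and that $\EX{f} \in \Theta(m) = \Theta(n)$ by Corollary~\ref{cor:expected-number-nice}. First I would handle the clause-side coordinates: resampling $C_i$ only moves clause $c_i$ and resampling $X_i^j$ only alters one of its draws, so in both cases only $N_i$ can change, giving a per-coordinate Lipschitz constant of $1$ for these $m(k+1)$ variables.

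The delicate part is the effect of moving a single variable $v$, which in the worst case could send $f$ from $m$ to $0$; this is exactly what Lemmas~\ref{lem:variables-impact-few-clauses}--\ref{lem:non-critical-nice-clauses-stay-nice} are designed to preclude. Let $\mathcal B$ be the union over all $v \in V$ of the low-probability bad events of those lemmas, so $\Pro{\mathcal B} \in O(1/n)$ by a union bound. On $\neg\mathcal B$ and for fixed $v$: every clause whose niceness status changes when $v$ is moved is either $v$-critical (only $O(\delta_v)$ such by Lemma~\ref{lem:few-critical-clauses}) or one of the $O(\delta_v)$ clauses excluded in Lemmas~\ref{lem:variables-impact-few-clauses} and \ref{lem:non-critical-nice-clauses-stay-nice}; all remaining clauses keep their status by Lemma~\ref{lem:non-critical-nice-clauses-stay-nice}, applied to the configuration before the move for the "stays nice" direction and, by symmetry, to the configuration after the move for the "becomes nice" direction. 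Hence moving any single $v$ changes $f$ by at most $O(\delta_v)$, with $\delta_v$ as in Equation~\eqref{eq:effect-of-variable-movement}.

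It then remains to check that $\sum_{v \in V} \delta_v^2 + m(k+1)$ is $o(n^2) = o(\EX{f}^2)$, which is where the hypotheses $T < 1$, $W \in O(n)$, and $w_v \in O(n^{1-\eps})$ enter. Since $\delta_v^2 = w_v^{2/(1+T)} n^{2T/(1+T)} \log^{4/(1+T)} n$ and $2/(1+T) \in (1,2)$ for $0 < T < 1$, I would estimate $\sum_v w_v^{2/(1+T)} \le (\max_v w_v)^{2/(1+T)-1} \sum_v w_v \in O(n^{1 + (1-\eps)(1-T)/(1+T)})$, so that $\sum_v \delta_v^2 \in O(n^{2 - \eps(1-T)/(1+T)} \log^{O(1)} n) = o(n^2)$, while $m(k+1) \in O(n)$ is negligible. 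Feeding this into Theorem~\ref{thm:avg-bounded-differences} with deviation $t = \EX{f}/2 \in \Theta(n)$ bounds $\Pro{|f - \EX{f}| > t}$ by $\Pro{\mathcal B} + \exp(-\Omega(n^2 / \sum_v \delta_v^2))$, which is $O(1/n) + \exp(-n^{\Omega(1)}) = O(1/n)$; hence $\Theta(m)$ clauses are nice with high probability.

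I expect the main obstacle to be the bound on the effect of moving a single variable: there is no useful worst-case Lipschitz bound here, so the argument genuinely needs the structural Lemmas~\ref{lem:variables-impact-few-clauses}--\ref{lem:non-critical-nice-clauses-stay-nice} and the bad-event form of the concentration inequality, and some care is required to make those lemmas apply uniformly over the destination of the moved variable (not just its random position) and in both directions of the status change. The final subquadratic estimate of $\sum_v \delta_v^2$ is routine by comparison.
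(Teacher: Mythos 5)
Your proposal follows essentially the same route as the paper's proof: it invokes the bad-event method of bounded differences (Theorem~\ref{thm:avg-bounded-differences}) on the same decomposition into random variables $V_1,\dots,V_n,C_1,\dots,C_m,X_1^1,\dots,X_m^k$, assigns Lipschitz constant $1$ to the clause-side coordinates, defines the bad event $\mathcal B$ as the union of the failure events of Lemmas~\ref{lem:variables-impact-few-clauses}--\ref{lem:non-critical-nice-clauses-stay-nice}, and uses those lemmas (together with the reverse-move symmetry) to bound the effect of moving one variable by $O(\delta_v)$. The only differences are cosmetic: the paper bounds $\delta_v \in O(\sqrt{w_v n}/\log n)$ and feeds $\Delta \in O(n^2/\log^2 n)$ into Corollary~\ref{cor:avg-bounded-differences}, whereas you estimate $\sum_v \delta_v^2 \in O(n^{2-\eps(1-T)/(1+T)}\polylog n)$ and plug directly into the theorem with $t=\EX{f}/2$; both give the same $\whp$ conclusion. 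Your closing caveat about making the lemmas apply uniformly over the destination of the moved variable is a real subtlety, but it is the same one the paper implicitly handles by formulating the lemmas for a fixed position of $v$ and folding the residual randomness of the other coordinates into $\mathcal B$, so it does not constitute a gap relative to the paper's argument.
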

\begin{proof}
  We want to apply Corollary~\ref{cor:typical-bounded-differences}.
  As defined in Section~\ref{sec:random-variables}, the random
  variables are the variable positions $V_1, \dots, V_n$, the clause
  positions $C_1, \dots, C_m$, and the coin flips
  $X_1^1, \dots, X_m^k$, and the function $f$ is the number of nice
  clauses.  For $N = n + m + km$ note that $|f(X)| \le m \le N$.  For
  the nice event $\Gamma$ we assume that the statement from
  Lemma~\ref{lem:moving-one-variable-small-change} holds.  Due to
  Lemma~\ref{lem:moving-one-variable-small-change}, the probability
  for this is $\Pro{\Gamma} \ge 1 - N^{-c}$ for any constant $c$ and
  sufficiently large $N$.  Thus, when choosing $c \ge 3$, we satisfy
  the condition $|f(X)| \le N^{c - 2}$ of
  Corollary~\ref{cor:typical-bounded-differences}.

  Now we have to bound the change of $f$ when changing only one of the
  random variables, assuming we start with an event in $\Gamma$, i.e.,
  we have to determine the $\Delta_i$ from
  Corollary~\ref{cor:typical-bounded-differences}.  As mentioned
  before, changing a clause position $C_i$ or one of the $X_i^j$
  impacts only one clause and thus changes $f$ by at most $1$.
  Moreover, as we start with a configuration satisfying
  Lemma~\ref{lem:moving-one-variable-small-change}, $f$ changes by
  only $O(\delta_v)$ for variable $v$.  Thus, for the sum in
  Corollary~\ref{cor:typical-bounded-differences} we obtain
  \begin{equation*}
    \sum_{i \in [N]} \Delta_i^2 \in O\left( m + km + \sum_{v \in V} \delta_v^2 \right).
  \end{equation*}
  Due to Lemma~\ref{lem:nice-bound-for-delta-v}, we have
  $\delta_v \in O(\sqrt{w_vn} / \log n)$.  Thus, the above sum can be
  bounded by
  \begin{equation*}
    \sum_{v \in V} \delta_v^2 \in O\left(\sum_{v \in V} \left(\frac{\sqrt{w_v n
    }}{\log n}\right)^2\right) = O\left(\frac{n}{\log^2 n}\sum_{v
        \in V} w_v\right) = O\left(\frac{n^2}{\log^2 n}\right).
  \end{equation*}
  As $\EX{f} \in \Theta(m) = \Theta(n)$, this is exactly the bound
  required by Corollary~\ref{cor:typical-bounded-differences} and thus
  the number of nice clauses is in $\Theta(m)$ with high probability.
\end{proof}

\subsubsection{Putting Things Together}
\label{sec:putt-things-togeth}

Now we are ready to prove our main theorem for the geometric model.

\GeometricSat
\begin{proof}
  Let $m'$ be the number of clauses in $\Phi$ that consist of the $k$
  variables with minimum weighted distance.  By
  Theorem~\ref{thm:nr-nice-clauses-whp} we have
  $m' \in \Theta(m) = \Theta(n)$.  In the following, we consider only
  these clauses.

  Consider the weighted order-$k$ Voronoi diagram of the $n$ variables
  and let $n'$ be the number of non-empty regions.  By
  Theorem~\ref{thm:complexity-random-voronoi-diagram} and due to
  $W \in O(n)$, we have $\EX{n'} \in O(n)$.  Moreover, it follows from
  Markov's inequality that $n' \le n\log n$ holds asymptotically
  almost surely:
  \begin{equation*}
    \Pro{n' \ge n \log n} \le \frac{\EX{n'}}{n \log n} \in
    O\left(\frac{1}{\log n}\right).
  \end{equation*}
  
  Now, determining the $k$ variables of a clause $c$ is equivalent to
  observing which region of the order-$k$ Voronoi diagram contains
  $c$, or more precisely, which $k$ variables define this region.
  Thus, choosing random positions for the clauses is like throwing
  $m'$ balls into $n'$ (non-uniform) bins.  Thus, if
  $m' \in \Omega(n'/\polylog n')$, we can apply
  Corollary~\ref{cor:balls-into-unif-bins}.  With the above bounds,
  which hold asymptotically almost surely, it is not hard to see that
  this condition in fact holds: If $n' \le n$, it clearly holds as
  $m' \in \Omega(n)$.  Otherwise, we have
  $n' \le n \log n \le n \log n'$, which implies $n \ge n'/\log n'$,
  and thus $m \in \Omega(n'/\log n')$.

  Applying Corollary~\ref{cor:balls-into-unif-bins} tells us that,
  asymptotically almost surely, there is a bin with a superconstant
  number of balls.  In other words, there is a superconstant number of
  clauses that share the same set of $k$ variables.  For sufficiently
  large $n$, this is bigger than $2^k$, which implies an unsatisfiable
  subformula consisting of only $2^k$ clauses.  Clearly, it can be
  found in $O(n \log n)$ time by sorting the clauses lexicographically
  with respect to the contained variables.
\end{proof}

} 

\subsection*{Acknowledgments}
\label{sec:acknowledgments}
The authors would like to thank Thomas Sauerwald for the fruitful
discussions on random SAT models and bipartite expansion.

\bibliographystyle{plainnat}
\bibliography{references}

\shortOrLong{
}{

\appendix

\section{Basic Technical Tools}
\label{sec:basic-techn-tools}

This section is a collection of tools we use throughout the paper that
were either known before or are straight-forward to prove but distract
from the core arguments we make in the paper.

\subsection{Discrete Power-Law Weights}
\label{sec:discrete-power-law}

The following lemma summarizes some properties of the probability
distribution given by the discrete power-law weights.

\begin{lemma} \label{lem:powerlaw}
Let $\beta>2$ and \[p_i=\frac{i^{-1/(\beta-1)}}{\sum_{j=1}^n j^{-1/(\beta-1)}}\] for $i\in[n]$.
It holds that
\[\sum_{j=1}^n j^{-1/(\beta-1)}=\left(1+o(1)\right)\frac{\beta-1}{\beta-2}\cdot n^{(\beta-2)/(\beta-1)},\]
\[F(i):= \sum_{j=1}^i p_j \in \Oh{\left(\left(\frac{i}{n}\right)^{(\beta-2)/(\beta-1)}\right)},\]
and
\begin{equation}\sum_{j=1}^n p_j^2\in
\begin{cases}
\Theta\left(n^{-2\frac{\beta-2}{\beta-1}}\right),& \beta<3;\\
\Theta\left(\ln n/n\right),& \beta=3;\\
\Theta\left(n^{-1}\right),& \beta>3.
\end{cases}
\end{equation}
\end{lemma}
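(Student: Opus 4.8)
The plan is to derive all three statements from the elementary estimate for sums of powers obtained by integral comparison. Write $\gamma = 1/(\beta - 1)$, so that $0 < \gamma < 1$ since $\beta > 2$. First I would establish, using the monotonicity of $x \mapsto x^{-\gamma}$ (which gives $\int_1^{n+1} x^{-\gamma}\,dx \le \sum_{j=1}^n j^{-\gamma} \le 1 + \int_1^n x^{-\gamma}\,dx$ and hence, via $\int_1^n x^{-\gamma}\,dx = (n^{1-\gamma} - 1)/(1 - \gamma)$), the asymptotics $\sum_{j=1}^n j^{-\gamma} = (1 + o(1)) \frac{1}{1-\gamma} n^{1-\gamma}$; the $o(1)$ absorbs the additive constants because $n^{1-\gamma} \to \infty$. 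Substituting $1 - \gamma = (\beta-2)/(\beta-1)$ and $\frac{1}{1-\gamma} = \frac{\beta-1}{\beta-2}$ yields the first displayed identity.

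For $F(i) = \sum_{j=1}^i p_j = \big(\sum_{j=1}^i j^{-\gamma}\big) / \big(\sum_{j=1}^n j^{-\gamma}\big)$, I would upper-bound the numerator by $O(i^{1-\gamma})$ and lower-bound the denominator by $\Omega(n^{1-\gamma})$ using the same integral comparison, which gives $F(i) = O((i/n)^{1-\gamma}) = O((i/n)^{(\beta-2)/(\beta-1)})$.

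For the last identity, write $\sum_{j=1}^n p_j^2 = \big(\sum_{j=1}^n j^{-2\gamma}\big) / \big(\sum_{j=1}^n j^{-\gamma}\big)^2$, so the denominator is $\Theta(n^{2(1-\gamma)})$ by the first step. For the numerator I would split according to the sign of $1 - 2\gamma$, which corresponds exactly to the three regimes $\beta > 3$, $\beta = 3$, and $2 < \beta < 3$: if $\beta > 3$ then $2\gamma < 1$ and $\sum_{j=1}^n j^{-2\gamma} = \Theta(n^{1-2\gamma})$; if $\beta = 3$ then $2\gamma = 1$ and $\sum_{j=1}^n j^{-1} = \Theta(\log n)$; and if $2 < \beta < 3$ then $2\gamma > 1$, the series $\sum_j j^{-2\gamma}$ converges, so the partial sum is $\Theta(1)$. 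Dividing by $\Theta(n^{2(1-\gamma)})$ and simplifying the exponents (note $2(1-\gamma) = 1$ when $\beta = 3$, and $2(1-\gamma) = 2(\beta-2)/(\beta-1)$ in general) reproduces the three cases of the claimed formula.

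The argument is entirely routine integral estimation; the only point requiring mild care is to keep track of both upper and lower bounds so that the $\Theta$-statements (rather than mere $O$-bounds) hold — in particular, in the case $2 < \beta < 3$ the lower bound on $\sum_{j=1}^n j^{-2\gamma}$ comes from its first term and the upper bound from the convergent tail $\int_1^\infty x^{-2\gamma}\,dx = 1/(2\gamma - 1)$.
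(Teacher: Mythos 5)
Your proposal is correct and follows essentially the same route as the paper's proof: integral comparison for the three sums $\sum j^{-1/(\beta-1)}$, $\sum_{j\le i} j^{-1/(\beta-1)}$, and $\sum j^{-2/(\beta-1)}$, with the same three-way case split on the sign of $1 - 2/(\beta-1)$ (i.e.\ on whether $\beta$ is above, at, or below $3$). The notational substitution $\gamma = 1/(\beta-1)$ is a minor cosmetic improvement but does not change the argument.
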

\begin{proof}
Since $j^{-1/(\beta-1)}$ is monotonically decreasing, it holds that
\begin{align*}
\sum_{j=1}^n j^{-1/(\beta-1)} 
	& \le 1 + \int_{j=1}^{n} j^{-1/(\beta-1)} \mathrm{d}j\\
	& = 1 + \frac{\beta-1}{\beta-2}\left(n^{(\beta-2)/(\beta-1)}-1\right) = \frac{\beta-1}{\beta-2}\cdot n^{(\beta-2)/(\beta-1)}-\frac{1}{\beta-2}
\end{align*}
and
\begin{align*}
\sum_{j=1}^n j^{-1/(\beta-1)} 
	& \ge n^{-1/(\beta-1)}+\int_{j=1}^{n} j^{-1/(\beta-1)} \mathrm{d}j\\
	& = \frac{\beta-1}{\beta-2}\cdot n^{(\beta-2)/(\beta-1)}-\frac{\beta-1}{\beta-2}+n^{-1/(\beta-1)}.
\end{align*}
Equivalently, we get
\begin{align*}
F(i) = \sum_{j=1}^i p_j 
	& = \frac{\sum_{j=1}^i j^{-1/(\beta-1)}}{\sum_{j=1}^n j^{-1/(\beta-1)}}\\
	& \le\frac{1}{\sum_{j=1}^n j^{-1/(\beta-1)}}\left(1 + \int_{j=1}^{i} j^{-1/(\beta-1)} \mathrm{d}j\right)\\
	& \le\frac{1}{\sum_{j=1}^n j^{-1/(\beta-1)}}\left(\frac{\beta-1}{\beta-2}\cdot i^{(\beta-2)/(\beta-1)}-\frac{1}{\beta-2}\right)\\
	& \in \Oh\left(\left(\frac{i}{n}\right)^{(\beta-2)/(\beta-1)}\right).
\end{align*}
Finally, we want to bound 
\[\sum_{j=1}^n p_j^2=\frac{\sum_{j=1}^n j^{-2/(\beta-1)}}{\left(\sum_{j=1}^n j^{-1/(\beta-1)}\right)^2}.\]
First, note that for $\beta=3$ this equation yields
\[\sum_{j=1}^n p_j^2=\frac{H_n}{\left(\sum_{j=1}^n j^{-1/(\beta-1)}\right)^2} \in \Theta(\ln n/n),\]
where $H_n$ denotes the n-th harmonic number.
For $\beta\neq3$ we can achieve
\begin{align*}
\sum_{j=1}^n p_j^2
	& \le \frac{1}{\left(\sum_{j=1}^n j^{-1/(\beta-1)}\right)^2}\left(1 + \int_{j=1}^{n} j^{-2/(\beta-1)} \mathrm{d}j\right)\\
	& = \frac{1}{\left(\sum_{j=1}^n j^{-1/(\beta-1)}\right)^2}\left(1 + \frac{\beta-1}{\beta-3}\cdot\left(n^{(\beta-3)/(\beta-1)}-1\right)\right)\\
\end{align*}
and
\begin{align*}
\sum_{j=1}^n p_j^2
	& \ge \frac{1}{\left(\sum_{j=1}^n j^{-1/(\beta-1)}\right)^2}\left(n^{-2/(\beta-1)}+\int_{j=1}^{n} j^{-2/(\beta-1)} \mathrm{d}j\right)\\
	& = \frac{1}{\left(\sum_{j=1}^n j^{-1/(\beta-1)}\right)^2}\left(n^{-2/(\beta-1)} + \frac{\beta-1}{\beta-3}\cdot\left(n^{(\beta-3)/(\beta-1)}-1\right)\right)\\
\end{align*}
If $\beta<3$, the expressions above yield
\[\sum_{j=1}^n p_j^2\in\Theta\left(\frac{1}{\left(\sum_{j=1}^n j^{-1/(\beta-1)}\right)^2}\right)\subseteq\Theta\left(n^{-2(\beta-2)/(\beta-1)}\right).\]
For $\beta>3$, they yield
\[\sum_{j=1}^n p_j^2\in\Theta\left(\frac{n^{(\beta-3)/(\beta-1)}}{n^{2(\beta-2)/(\beta-1)}}\right)\subseteq\Theta\left(n^{-1}\right).\]
This proves all statements of the lemma.
\end{proof}

\subsection{CDF of Connection Weights in the Geometric Model}

The CDF $F_X(x))$ of the connection weights $X(c, v)$ in the geometric
SAT model satisfies the following lemma.

\begin{lemma}
  \label{lem:cdf-connection-weights}
  $F_X(x) = 1 - \Pi_{d, \p} w_v x^{-T}$ for
  $x \ge \big(2^d w_v\big)^{1/T}$.
\end{lemma}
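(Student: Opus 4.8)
The plan is to unwind the definition of the connection weight and reduce everything to the known CDF of the distance between two random points, Equation~\eqref{eq:cdf-distance}. Recall $X(c, v) = \big(w_v / \dist{\pnt{c}}{\pnt{v}}^d\big)^{1/T}$, where $\dist{\pnt{c}}{\pnt{v}}$ is the $\p$-distance between two independent uniform points in $\mathbb T^d$. Since $T > 0$, the map $t \mapsto t^{1/T}$ is strictly increasing on $[0, \infty)$, and similarly $t \mapsto t^{1/d}$ is increasing. Hence the chain of equivalences
\begin{equation*}
  X(c, v) \le x \iff \frac{w_v}{\dist{\pnt{c}}{\pnt{v}}^d} \le x^T
  \iff \dist{\pnt{c}}{\pnt{v}} \ge \big(w_v x^{-T}\big)^{1/d}
  = w_v^{1/d} x^{-T/d}
\end{equation*}
holds (for $x > 0$). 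Therefore $F_X(x) = \Pro{X(c, v) \le x} = 1 - F_{\mathrm{dist}}\big(w_v^{1/d} x^{-T/d}\big)$, using that the distance has a continuous distribution so the boundary case contributes nothing.

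First I would check that the argument $r := w_v^{1/d} x^{-T/d}$ is at most $0.5$ exactly in the claimed regime, so that the closed form of $F_{\mathrm{dist}}$ applies. Indeed, $r \le 0.5 \iff x^{T/d} \ge 2 w_v^{1/d} \iff x^T \ge 2^d w_v \iff x \ge \big(2^d w_v\big)^{1/T}$, which is precisely the hypothesis. Then, plugging into Equation~\eqref{eq:cdf-distance}, $F_{\mathrm{dist}}(r) = \Pi_{d, \p} r^d = \Pi_{d, \p} \big(w_v^{1/d} x^{-T/d}\big)^d = \Pi_{d, \p} w_v x^{-T}$, and substituting back gives $F_X(x) = 1 - \Pi_{d, \p} w_v x^{-T}$ for all $x \ge \big(2^d w_v\big)^{1/T}$, as desired.

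There is essentially no hard part here; the proof is a one-line substitution into a known formula. The only points requiring a modicum of care are getting the monotonicity directions right (larger connection weight corresponds to smaller distance) and verifying that the range condition on $x$ is exactly what guarantees the radius $r$ stays within $[0, 0.5]$ where the simple volume formula $F_{\mathrm{dist}}(r) = \Pi_{d,\p} r^d$ is valid. For $x$ below the threshold one would need the more complicated boundary-corrected volume, but the lemma only asserts the formula in the clean regime, so no further work is needed.
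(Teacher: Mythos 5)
Your proof is correct and follows essentially the same route as the paper's: unwind the definition of $X(c,v)$, invert the monotone transformation to restate the event in terms of the distance $\dist{\pnt{c}}{\pnt{v}}$, and substitute the closed-form CDF from Equation~\eqref{eq:cdf-distance}. You actually supply a bit more detail than the paper by explicitly verifying that the hypothesis $x \ge (2^d w_v)^{1/T}$ is exactly the condition under which the radius stays in $[0, 0.5]$, which the paper states without spelling out.
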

\begin{proof}
  Inserting the definition of the connection weight and rearranging
  slightly yields
  \begin{align*}
    F_X(x) &= \Pro{X({c}, {v}) \le x}\\
           &= \Pro{\left(\frac{w_v}{\dist{\pnt{c}}{\pnt{v}}^d}\right)^{1/T} \le x}\\
           &= \Pro{\dist{\pnt{c}}{\pnt{v}} \ge w_v^{1/d} x^{-T/d}}\\
           &= 1 - \Pro{\dist{\pnt{c}}{\pnt{v}} < w_v^{1/d} x^{-T/d}}.
  \end{align*}
  As $\pnt{c}$ and $\pnt{v}$ are two random points, we can use the CDF
  for the distances between random points in
  Equation~\eqref{eq:cdf-distance} to obtain
  \begin{equation*}
    F_X(x) = 1 - \Pi_{d, \p} w_v x^{-T} \quad \text{for } x \ge
    \left(2^d w_v\right)^{1/T},
  \end{equation*}
  which concludes the proof.
\end{proof}

\subsection{Volume of Balls in a Hypercube}

We are regularly concerned with the asymptotic behavior of a ball's
volume depending on its radius.  The following lemma helps us to deal
with the edge case, where the ball stretches beyond the boundary of
our ground space.

\begin{lemma}
  \label{lem:volume-intersection-ball-cube}
  Let $H$ be a $d$-dimensional unit-hypercube in $\mathbb R^d$
  equipped with a $\p$-norm.  There exists a constant $c > 0$ such
  that, for every $\pnt{p} \in H$ and $r > 0$, the intersection of $H$
  with the ball $B_{\pnt{p}}(r)$ of radius $r$ around $\pnt{p}$ has
  volume at least $\min\{1, cr^d\}$.
\end{lemma}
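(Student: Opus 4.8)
The plan is to separate the small-radius and large-radius regimes, and in the small-radius case to exploit the reflection symmetry of $\p$-norm balls. Fix $\pnt{p}\in H$ and assume without loss of generality that $H=[0,1]^d$. For each coordinate $i\in[d]$, at least one of the two rays emanating from $p_i$ stays inside $[0,1]$ for a length of at least $1/2$: if $p_i\ge 1/2$ pick the decreasing ray, otherwise the increasing one. Let $Q\subseteq\mathbb R^d$ be the closed orthant with apex $\pnt{p}$ obtained by choosing, in every coordinate, this ``good'' half-line.

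First I would treat $r\le 1/2$. Every point $\pnt{q}\in B_{\pnt{p}}(r)\cap Q$ satisfies $|q_i-p_i|\le\dist{\pnt{p}}{\pnt{q}}\le r\le 1/2$ in each coordinate (a $\p$-norm dominates each coordinate's absolute value, also for $\p=\infty$), and $q_i$ lies on the good side of $p_i$; hence $q_i\in[0,1]$ for all $i$, so $B_{\pnt{p}}(r)\cap Q\subseteq H\cap B_{\pnt{p}}(r)$. Since $\dist{\cdot}{\cdot}$ depends only on the absolute values of coordinate differences, $B_{\pnt{p}}(r)$ is invariant under reflecting any coordinate about $p_i$, so each of the $2^d$ orthants with apex $\pnt{p}$ carries exactly a $2^{-d}$ fraction of its volume. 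Using $\vol(B_{\pnt{p}}(r))=\Pi_{d,\p}r^d$ from Equation~\eqref{eq:cdf-distance}, this gives
\[
\vol\big(H\cap B_{\pnt{p}}(r)\big)\ \ge\ \vol\big(Q\cap B_{\pnt{p}}(r)\big)\ =\ 2^{-d}\,\vol\big(B_{\pnt{p}}(r)\big)\ =\ 2^{-d}\Pi_{d,\p}\,r^d,
\]
which is the claim for $r\le 1/2$ as long as $c\le 2^{-d}\Pi_{d,\p}$.

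Next I would handle $r>1/2$. Let $D$ be the diameter of $H$ in the $\p$-norm, i.e.\ $D=\sqrt[\p]{d}$ for $\p\neq\infty$ and $D=1$ for $\p=\infty$. If $r\ge D$ then $H\subseteq B_{\pnt{p}}(r)$, so the intersection has volume $1\ge\min\{1,cr^d\}$. If $1/2<r<D$, monotonicity and the previous case give $\vol(H\cap B_{\pnt{p}}(r))\ge\vol(H\cap B_{\pnt{p}}(1/2))\ge 2^{-2d}\Pi_{d,\p}$, while $r^d<D^d$; hence any $c\le 2^{-2d}\Pi_{d,\p}/D^d$ works here. Since $D\ge 1/2$, this choice also satisfies $c\le 2^{-d}\Pi_{d,\p}$, so setting $c=2^{-2d}\Pi_{d,\p}/D^d$ discharges all three regimes and proves the lemma.

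As for the main obstacle: there is essentially none of substance. The only points needing a bit of care are verifying that the ``good orthant'' $Q$ is genuinely contained in $H$ throughout $B_{\pnt{p}}(r)$ when $r\le 1/2$, and that the reflection-symmetry argument applies uniformly to every $\p\in\mathbb N^+\cup\{\infty\}$ (including $\p=\infty$, where the ``ball'' is an axis-aligned cube); both follow immediately from the definition of the $\p$-norm. The remaining work is the routine bookkeeping of the constant $c$ across the three radius ranges.
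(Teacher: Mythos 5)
Your proof is correct, and it takes a genuinely different route from the paper's. Both proofs start from the same intuition---pick, for each coordinate of $\pnt{p}$, the direction in which there is at least $1/2$ of room, giving a ``good orthant'' $Q$ with apex $\pnt{p}$---but they differ in which region inside $B_{\pnt{p}}(r)\cap Q$ they analyze. The paper constructs an explicit axis-aligned box with corner $\pnt{p}$, shows its far corner lies within distance $r$ (so by convexity the whole box is in the ball), and bounds its volume directly; this is elementary and never invokes the ball-volume formula. You instead observe that the $\p$-norm depends only on $|q_i-p_i|$, so reflecting any coordinate about $p_i$ preserves $B_{\pnt{p}}(r)$, hence each of the $2^d$ orthants at $\pnt{p}$ captures exactly $2^{-d}$ of the ball's volume, and then combine this with $\vol(B_{\pnt{p}}(r))=\Pi_{d,\p}r^d$ from Equation~\eqref{eq:cdf-distance}. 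Your reflection argument yields a tighter constant ($2^{-d}\Pi_{d,\p}$ rather than $(2\sqrt[\p]{d})^{-d}$) and avoids the box-diameter bookkeeping, at the cost of one extra case split (you handle $r\le 1/2$, $1/2<r<D$, and $r\ge D$, whereas the paper covers $r\le\sqrt[\p]{d}$ in a single stroke). One tiny nit: for all $\p\in\mathbb N^+\cup\{\infty\}$ and $d\ge 1$ we in fact have $D\ge 1$, so your remark ``since $D\ge 1/2$'' is a harmless underestimate; nothing downstream depends on it.
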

\begin{proof}
  In the following, we assume $H = [-0.5, 0.5]^d$ (rather than
  $[0, 1]^d$), as it makes the proof more convenient.  If $r$ is
  sufficiently small, then $B_{\pnt{p}}(r)$ is completely contained in
  $H$.  Thus, in this case, the claim follows from the fact that the
  volume of a ball with radius $r$ in $d$-dimensional space is
  proportional to $r^d$.  Thus, we have to prove that the parts of
  $B_{\pnt{p}}(r)$ outside of $H$ are asymptotically not relevant.

  Let $p_1, \dots, p_d$ be the coordinates of $\pnt{p}$ and assume
  without loss of generality that $\pnt{p}$ lies in the all-negative
  orthant, i.e., $p_i \le 0$ for $i \in [d]$.  We proof the claim by
  defining a box $B$ with the following three properties.  First, the
  box $B$ has volume proportional to $r^d$.  Second, $B$ is a subset
  of the ball $B_{\pnt{p}}(r)$.  Third, $B$ is a subset of the hypercube
  $H$ or $H$ is a subset of $B_{\pnt{p}}(r)$.  Note that the lemma's
  statement clearly holds if $H$ is a subset of $B_{\pnt{p}}(r)$ as the
  intersection has volume $1$ in this case.  If $H$ is not a subset of
  $B_{\pnt{p}}(r)$, the second and third property imply that $B$ is a
  subset of the intersection of $B_{\pnt{p}}(r)$ and $H$.  Thus, the
  volume of $B$ given by the first property is a lower bound for the
  volume of the intersection, which proves the claim.

  It remains to define $B$ and prove the three properties.  The box
  $B$ has $\pnt{p}$ as corner and extends from there in the direction
  of the all-positive orthant.  The side lengths are chosen
  proportional to the distance from the edge of $H$ in this direction.
  Formally, the corners of $B$ are
  $\{p_1, p_1 + r(0.5 - p_1)/\sqrt[\p]{d}\} \times \dots \times \{p_d,
  p_d + r(0.5 - p_d)/\sqrt[\p]{d}\}$.

  To prove the first property, note that the side length of $B$ in
  dimension $i$ is $r(0.5 - p_i)/\sqrt[\p]{d}$.  As $p_i \le 0$, this
  is at least $0.5r/\sqrt[\p]{d}$, which implies that the volume of
  $B$ is at least $(0.5r/\sqrt[\p]{d})^d$.  For the second property,
  note that the point of $B$ with maximum distance from $\pnt{p}$ is
  the opposite corner, i.e., the point with coordinates
  $(p_i + r(0.5 - p_i)/\sqrt[\p]{d})$.  The distance from $\pnt{p}$ is
  given by
  \begin{equation*}
    \sqrt[\p]{\sum_{i = 1}^d \left(\frac{r(0.5 - p_i)}{\sqrt[\p]{d}}\right)^\p}
    \le \sqrt[\p]{\sum_{i = 1}^d \frac{r^\p}{d}} = r.
  \end{equation*}
  Finally, for the third property, assume $r = \sqrt[\p]{d}$.  Then
  the coordinates $p_i + r(0.5 - p_i)/\sqrt[\p]{d}$ of the corners of
  $B$ simplify to $0.5$.  Thus, all corners of $B$ are still in the
  hypercube $H$ if $r \le \sqrt[\p]{d}$.  On the other hand, if
  $r \ge \sqrt[\p]{d}$, then $H$ is completely contained in
  $B_{\pnt{p}}(r)$, which concludes the proof of the last property.
  We note that it is easy to verify that all above arguments also hold
  for the limit $\p = \infty$.
\end{proof}

\subsection{Derivative of the Incomplete Gamma Function}
\label{sec:deriv-incompl-gamma}

We need the following somewhat technical bound that is easy to verify.

\begin{lemma}
  \label{lem:integral-gamma-function}
  Let $\Gamma$ be the gamma function.  For any
  $\alpha, \beta, \gamma, d \in \mathbb R$ with
  $\beta, \gamma, d > 0$,
  \begin{equation*}
    \int_0^\gamma x^{\alpha d - 1} \exp\left(-\beta x^d\right) \dif x
    \le \frac{\Gamma\left(\alpha\right)}{\beta^\alpha d}.
  \end{equation*}
\end{lemma}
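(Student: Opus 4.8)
The plan is to bound the definite integral by the corresponding improper integral over $[0,\infty)$ and then reduce the latter to the Gamma integral by a single change of variables. Since the integrand $x^{\alpha d-1}\exp(-\beta x^d)$ is nonnegative for all $x>0$ (using $\beta,d>0$), enlarging the domain of integration only increases the value, so
$$\int_0^\gamma x^{\alpha d-1}\exp\!\left(-\beta x^d\right)\dif x \;\le\; \int_0^\infty x^{\alpha d-1}\exp\!\left(-\beta x^d\right)\dif x,$$
and it suffices to evaluate the right-hand side. One remark is in order: the statement is only meaningful when $\alpha>0$ — for $\alpha\le 0$ the integrand fails to be integrable near $0$ and $\Gamma(\alpha)$ is undefined or nonpositive — but in every application of this lemma one has $\alpha=k-1\ge 2$, so this restriction is harmless and I would simply note it.

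Next I would substitute $u=\beta x^d$, equivalently $x=(u/\beta)^{1/d}$, which gives $\dif u=\beta d\,x^{d-1}\dif x$ and maps the limits $x=0,\infty$ to $u=0,\infty$. Collecting powers, $x^{\alpha d-1}\dif x=\frac{1}{\beta d}\,x^{\alpha d-d}\dif u=\frac{1}{\beta d}\,(u/\beta)^{\alpha-1}\dif u$, while $\exp(-\beta x^d)=\exp(-u)$. Hence
$$\int_0^\infty x^{\alpha d-1}\exp\!\left(-\beta x^d\right)\dif x \;=\; \frac{1}{\beta^{\alpha}d}\int_0^\infty u^{\alpha-1}e^{-u}\dif u \;=\; \frac{\Gamma(\alpha)}{\beta^{\alpha}d},$$
where the last equality is the defining integral representation of the Gamma function, valid for $\alpha>0$. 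Combining the two displays yields the claimed inequality.

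There is essentially no obstacle in this proof; the only thing requiring a word of care is the implicit hypothesis $\alpha>0$ needed both for convergence of the integral near $0$ and for the Gamma integral identity, and everything else is the routine substitution $u=\beta x^d$ together with monotonicity of the integral under domain enlargement.
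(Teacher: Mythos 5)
Your proof is correct and essentially matches the paper's argument: both reduce the integral to the Gamma function via the substitution $u = \beta x^d$ and then use nonnegativity to drop a term (you drop the tail $\int_\gamma^\infty$, the paper drops the nonnegative incomplete-gamma term $\Gamma(\alpha,\beta\gamma^d)$; these are the same observation). The paper phrases it via an explicit antiderivative $-\Gamma(\alpha,\beta x^d)/(\beta^\alpha d)$, you phrase it via domain enlargement plus substitution, but the underlying computation is identical. Your remark that $\alpha>0$ is implicitly required for convergence and for $\Gamma(\alpha)$ to be positive is a good catch; the paper states the lemma for all $\alpha\in\mathbb R$ but only ever invokes it with $\alpha=k-1\ge 2$, so the missing hypothesis is harmless in context.
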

\begin{proof}
  Let $\Gamma(\alpha, x)$ be the incomplete gamma function.  Its
  derivative is
  \begin{equation*}
    \frac{\partial \Gamma(\alpha, x)}{\partial x} = -x^{\alpha-1}
    \exp(-x).
  \end{equation*}
  Thus, it follows that
  \begin{equation*}
    \frac{\partial}{\partial x} \left(-\frac{\Gamma\left(\alpha, \beta
        x^d\right)}{\beta^\alpha d}\right) = \beta dx^{d-1} \frac{\left(\beta
        x^d\right)^{\alpha-1} \exp\left(-\beta
        x^d\right)}{\beta^\alpha d} = x^{\alpha d - 1} \exp\left(-\beta
      x^d\right).
  \end{equation*}
  Using this, the given integral evaluates to
  \begin{align*}
    \int_0^\gamma x^{\alpha d - 1} \exp\left(-\beta x^d\right) \dif x
    &= \left[-\frac{\Gamma\left(\alpha, \beta x^d\right)}{\beta^\alpha
      d}\right]_0^\gamma\\
    &= \frac{1}{\beta^\alpha d} \left(\Gamma\left(\alpha, 0\right)-
      \Gamma\left(\alpha, \beta \gamma^d\right)\right)\\
    &\le \frac{\Gamma\left(\alpha, 0\right)}{\beta^\alpha d}.
  \end{align*}
\end{proof}

\subsection{Balls Into Heterogeneous Bins}
\label{sec:balls-into-heter}

Consider throwing $m$ balls into $n$ uniform bins, i.e., for each ball
we draw one of the $n$ bins uniformly at random and place the ball
into the drawn bin.  The \emph{maximum load} $L$ is the random
variable that describes the maximum number of balls that are together
in the same bin.  From the analysis by \citet[Theorem~1]{rs-bb-98}, we
immediately get the following corollary.

\begin{corollary}[\cite{rs-bb-98}, Theorem~1]
  \label{cor:balls-into-unif-bins}
  Throw $m$ balls into $n$ uniform bins and let $L$ be the maximum
  load.  If $m \in \Omega(\frac{n}{\polylog n})$, then
  $L \in \Omega(\frac{\log n}{\log \log n})$ asymptotically almost
  surely.
\end{corollary}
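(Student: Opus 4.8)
The plan is to read the statement off from the analysis of Raab and Steger~\cite{rs-bb-98} with only a little bookkeeping. First I would record the elementary monotonicity fact that the maximum load $L$ is stochastically nondecreasing in the number of balls $m$: coupling two processes so that the second one throws all balls of the first plus some extra balls, every bin ends up at least as full in the second process, so $L$ only increases. Hence it is enough to establish the lower bound for the smallest admissible number of balls, i.e.\ we may assume $m = \Theta(n/\polylog n)$.

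With this normalization, $m$ falls squarely into the regime treated by Raab and Steger's Theorem~1, whose conclusion there is that $L = \Theta(\log n/\log\log n)$ asymptotically almost surely — the very same order as in the classical balanced case $m = \Theta(n)$. The point is that replacing $n$ by $n/\polylog n$ in $\log n/\log\log n$ changes only lower-order terms, since $\log(n/\polylog n) = \Theta(\log n)$ and $\log\log(n/\polylog n) = \Theta(\log\log n)$. For the remaining, larger values of $m$ not covered by the reduction — anything from $n\polylog n$ up to an arbitrary polynomial in $n$ — the corresponding cases of Raab and Steger's Theorem~1 give $L = (1+o(1))\,m/n$ \aas, which is already $\Omega(\log n/\log\log n)$ once $m \in \Omega(n\log n/\log\log n)$. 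Extracting $L \in \Omega(\log n/\log\log n)$ from these cases finishes the proof.

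I expect no genuine obstacle here; the only thing to be careful about is that the polylogarithmic slack in $m$ does not displace $m$ into a qualitatively different case of the Raab--Steger dichotomy, which is exactly the elementary estimate above. If one prefers a self-contained lower bound not relying on the full strength of~\cite{rs-bb-98}, the same conclusion follows from a standard second-moment argument: fixing $t = \lfloor \eps\,\log n/\log\log n\rfloor$ for a small constant $\eps > 0$ and letting $Y$ be the number of bins receiving at least $t$ balls, one checks $\EX{Y} \to \infty$ while $\mathrm{Var}[Y] = o(\EX{Y}^2)$, so $Y > 0$ \aas\ by Chebyshev's inequality, giving $L \ge t$.
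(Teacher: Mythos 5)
Your proposal is correct and matches the paper's approach exactly: the paper states this corollary with no proof, just the citation to Raab--Steger's Theorem~1, and the intended reading is precisely what you do---observe that case~(a) of that theorem gives maximum load $\sim \frac{\log n}{\log(n\log n/m)}$, which is $\Theta(\log n/\log\log n)$ once $m$ is only polylogarithmically below $n$, and your monotone coupling in $m$ is a clean way to normalize to that boundary. The only blemish is the aside about ``remaining, larger values of $m$'': it is redundant once the coupling is in place, and the asymptotic $L=(1+o(1))m/n$ does not actually kick in until $m\gg n\log^3 n$ (for $m$ between $n$ and $n\log n$ one stays in the $\Theta(\log n/\log\log n)$ regime), but since the coupling already disposes of every $m\ge n/\polylog n$, this slip creates no gap.
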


Now assume we have non-uniform bins, i.e., the probability for each
ball to end up in the $i$th bin is $p_i$ with $\sum_i p_i = 1$.
Intuitively, Corollary~\ref{cor:balls-into-unif-bins} should still
hold in this setting, as increasing the probability of some bins only
makes it more likely that a bin gets many balls.  Making this argument
formal yields the following theorem.

\begin{theorem}
  Corollary~\ref{cor:balls-into-unif-bins} also holds for the
  non-uniform bins.
\end{theorem}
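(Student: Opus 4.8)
The plan is to reduce the non-uniform case to the uniform case via a coupling argument. Concretely, I would show that the maximum load with non-uniform bin probabilities $p_1,\dots,p_n$ stochastically dominates the maximum load with uniform probabilities $1/n,\dots,1/n$. Since Corollary~\ref{cor:balls-into-unif-bins} gives the desired bound for uniform bins, stochastic domination immediately transfers it.

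First I would set up the coupling. The standard trick is to refine the non-uniform experiment: partition the unit interval $[0,1)$ into $n$ pieces of lengths $p_1,\dots,p_n$ (one per bin), and independently also partition it into $n$ equal pieces of length $1/n$ (the uniform bins). For each ball, draw a single uniform point $U\in[0,1)$ and let it land in the non-uniform bin and the uniform bin containing $U$. This is not quite enough, because a non-uniform bin can be coarser or finer than a uniform bin, so a single $U$ does not cleanly map uniform loads to non-uniform loads. The cleaner route is a majorization argument: among all probability vectors $(p_1,\dots,p_n)$, the uniform vector is majorized by every other vector, and the distribution of the maximum load is Schur-convex in the bin-probability vector. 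So the key step is to prove this Schur-convexity (equivalently, monotonicity under the operation of making two probabilities more unequal while fixing their sum), which by a standard argument reduces to a two-bin statement: if we take bins $i$ and $j$ with $p_i+p_j$ fixed and move mass from $j$ to $i$, the maximum load cannot decrease in the stochastic order. This two-bin claim can be verified by an explicit coupling on the balls destined for $\{i,j\}$: conditioned on the number $t$ of such balls, the more unequal split produces a partition of $t$ that dominates the more equal split in the sense that $\max$ of the two parts is larger, hence the per-experiment maximum load over all bins is at least as large.

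Alternatively—and this is probably the cleanest writeup—I would avoid Schur-convexity machinery entirely and argue directly by a "merge and re-split" coupling tuned to the specific statement we need. We only need that if $m\in\Omega(n/\polylog n)$ balls are thrown into non-uniform bins then $L\in\Omega(\log n/\log\log n)$ a.a.s. Group the $n$ non-uniform bins into $\Theta(n)$ groups each of total probability $\Theta(1/n)$ (this is possible up to constants as long as $\max_i p_i \in O(1/n)$; if some $p_i$ is much larger than $1/n$, that single bin already gets $\Omega(m/n) = \omega(\log n/\log\log n)$ balls in expectation and concentrates, so we may assume $\max_i p_i = O(1/n)$). Now the group-level experiment is a balls-into-bins process with $\Theta(n)$ bins of roughly equal probability, so Corollary~\ref{cor:balls-into-unif-bins} (whose proof only uses near-uniformity up to constant factors, or which can be applied after a further symmetrization) gives that some group receives $\Omega(\log n/\log\log n)$ balls a.a.s.; by pigeonhole over the $O(1)$ bins in that group, some single bin receives $\Omega(\log n /\log\log n)$ balls.

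The main obstacle is making the "re-split into near-uniform groups'' step rigorous while preserving the lower bound: one has to check that Rägde–Sauerwald's Theorem~1 really does tolerate bins whose probabilities differ by a constant factor (it does, since the heavy-load lower bound is driven by a second-moment / Poissonization argument that is robust to constant-factor perturbations of the bin probabilities), or, if one prefers a black-box statement, to dominate the near-uniform group process by a genuinely uniform one with slightly fewer bins. Handling the degenerate case where a few bins carry almost all the mass is the other thing to be careful about, but as sketched above that case is actually easier, not harder. I expect the whole argument to be short once the near-uniform reduction is pinned down.
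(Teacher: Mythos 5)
Your proposal gives two routes, both genuinely different from the paper's, and both have a gap. The paper restricts attention to the heavy bins $B' = \{i : p_i \ge 1/(2n)\}$, shows a constant fraction of balls land there, and then couples against a genuinely uniform process by \emph{thinning}: since each heavy bin receives a given ball with probability at least $1/(2n)$, one may discard balls so that each heavy bin receives with probability exactly $1/(2n)$; this yields a uniform process on $|B'|$ bins with a constant fraction of the original $m$ balls, to which Corollary~\ref{cor:balls-into-unif-bins} applies as a black box. The light bins are simply dropped, never grouped. Your merge-and-resplit sketch, by contrast, tries to account for all the mass by grouping bins, and this is where it breaks: the step ``by pigeonhole over the $O(1)$ bins in that group'' is unjustified. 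Under the assumption $\max_i p_i \in O(1/n)$, a group of total mass $\Theta(1/n)$ may contain arbitrarily many bins of tiny probability (potentially $\Theta(n)$ of them), so a heavily loaded group of that sort gives you no heavily loaded single bin --- you are back to the original problem inside the group. Your preliminary case (``if some $p_i$ is much larger than $1/n$, that bin gets $\Omega(m/n) = \omega(\log n/\log\log n)$ balls'') is also not right as stated: under $m \in \Omega(n/\polylog n)$ the ratio $m/n$ can be $o(1)$, so you would need $p_i \gg \polylog(n)/n$, leaving a middle regime of probabilities uncovered. Note that the obstacle you flag --- whether \citet{rs-bb-98} tolerates constant-factor perturbations --- is not the real issue; the pigeonhole is, and once you fix it by discarding rather than grouping the light bins you essentially recover the paper's argument.

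The Schur-convexity route would, if completed, give a clean and general reduction, but the two-bin lemma it rests on is the entire load-bearing step and is asserted rather than proved. Conditioning on the number $t$ of balls in $\{i,j\}$ and on the loads of all other bins reduces it to showing that $\Pr[\max(X, t-X) \ge k]$ is nondecreasing in $q = p_i/(p_i+p_j)$ for $q \ge 1/2$, where $X \sim \mathrm{Bin}(t,q)$; equivalently, that the mass in the symmetric window $\{t-k+1,\dots,k-1\}$ is nonincreasing as $q$ moves away from $1/2$. This is true, but it does \emph{not} follow from the monotone coupling $X_{q_1} \le X_{q_2}$, because $\max(x, t-x)$ is not monotone in $x$; it needs its own anticoncentration argument for binomials, which your sketch does not supply. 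The paper's thinning coupling sidesteps the need for any such lemma.
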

\begin{proof}
  Let $B = [n]$ be the set of all bins and let $B'$ be the subset of
  bins with probability at least $1/(2n)$.  These are the bins whose
  probability either increased, or decreased by a factor of at most
  $2$.  Without loss of generality, let $B' = [n']$.  Note that the
  probability for a ball to land in a bin of $B'$ is at least a
  constant, as every bin not in $B'$ has probability at most $1/(2n)$.
  Thus, by the Chernoff-Hoeffding bound in
  Corollary~\ref{cor:chernoff-hoeffding-asymptotic}, a constant
  fraction of the balls end up in a bin of $B'$ with high probability.
  We make a case distinction on how large $n'$ is.

  First, assume $n' \le m/\log n$.  Thus, with high probability, we
  end up with $\Theta(m)$ balls in at most $m/\log n$ bins, which
  means that at least one bin contains $\Omega(\log n)$ balls.  Thus,
  clearly $L \in \Omega(\log n/\log \log n)$.

  Second, assume $n' > m / \log n$.  Recall that each bin in $B'$ has
  probability at least $1/(2n)$.  We consider the alternative
  experiment where, for every ball, each bin in $B'$ has probability
  exactly $1/(2n)$ to get the ball.  Balls not landing in $B'$ are
  discarded.  Let $L'$ denote the maximum number of balls that share a
  bin in $B'$.  Clearly, we can couple the two experiments such that
  $L \ge L'$ holds in every outcome.  It remains to show that
  $L' \in \Omega(\log n / \log \log n)$.  For this, let $m'$ be the
  number of balls ending up in $B'$.  Note that $m'$ is a random
  variable.  However, if we condition on $m'$, then we are back to the
  normal homogeneous balls into bins, except that we throw $m'$ balls
  into $n'$ bins.  If we show that $m' \in \Omega(n'/\polylog n')$,
  then Corollary~\ref{cor:balls-into-unif-bins} tells us that
  $L' \in \Omega(\log n' / \log\log n')$.  First note that this is
  sufficient for our purpose: as $n' > m/\log n$ and
  $m \in \Omega(n / \polylog n)$, we get
  \begin{align*}
    \frac{\log n'}{\log\log n'}
    &> \frac{\log m - \log\log n}{\log(\log m - \log\log n)}\\
    &\in \Omega\left(\frac{\log n - \log\polylog n - \log\log
      n}{\log(\log n - \log\polylog n - \log\log n)}\right)\\
    &\subseteq \Omega\left(\frac{\log n}{\log\log n}\right).
  \end{align*}

  It remains to show that $m' \in \Omega(n'/\polylog n')$ so that we
  can actually apply Corollary~\ref{cor:balls-into-unif-bins}.  To do
  so, recall that $B'$ has $n'$ bins, each with probability $1/(2n)$.
  Thus, the probability that a single ball lands in $B'$ is $n'/(2n)$,
  which shows that $m'$ is $m n' / (2n)$ in expectation.  As $n'$ is
  almost $m$ (up to logarithmic factors) and $m$ is almost $n$, this
  expectation is almost linear in $n$.  Thus, by the
  Chernoff-Hoeffding bound in
  Corollary~\ref{cor:chernoff-hoeffding-asymptotic}, we can assume
  that
  \begin{equation*}
    m' \in \Theta\left(\frac{mn'}{n}\right)
  \end{equation*}
  holds with high probability.  Using that $n' > m / \log n$ and
  $m \in \Omega(n / \polylog n)$, we obtain
  \begin{equation*}
    \frac{mn'}{n} > \frac{m^2}{n\log n}
    \in \Omega\left(\frac{n}{\polylog n}\right).
  \end{equation*}
  As $n' \le n$, it follows that $m \in \Omega(n' / \polylog n')$,
  which concludes the proof.
\end{proof}

\subsection{Concentration Bounds}
\label{sec:concentration-bounds}

For a random experiment, we say that an event happens \emph{with high
  probability (\whp)} if the probability is at least $1 - O(1/n)$.
The type of event we are usually interested in is that a random
variable assumes a value close to its expectation, i.e., that the
random variable is concentrated.  In the following, we state two well
known techniques to prove concentration, namely a Chernoff-Hoeffding
bound and the method of bounded differences.  In both cases we derive
asymptotic variants that suite our purpose better than the original
exact bounds.

\subsubsection{Chernoff-Hoeffding}
\label{sec:chernoff-hoeffding}

\begin{theorem}[Theorem 1.1 in~\cite{dp-cmara-12}]
  \label{thm:chernoff-hoeffding}
  Let $X_1, \dots, X_n$ be independent random variables with values in
  $\{0, 1\}$ and let $X = \sum_{i \in [n]} X_i$ be their sum. Then,
  for all $0 < \eps < 1$,
  \begin{align*}
    \Pro{X > (1 + \eps)\cdot\EX{X}}
    &\le \exp\left(-\frac{\eps^2}{3} \EX{X}\right), \text{and}\\
    \Pro{X < (1 - \eps)\cdot\EX{X}}
    &\le \exp\left(-\frac{\eps^2}{2} \EX{X}\right).
  \end{align*}
\end{theorem}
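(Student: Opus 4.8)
The plan is to establish both tail bounds by the standard exponential-moment (Chernoff) method, treating the two inequalities separately. Write $p_i = \Pro{X_i = 1}$ and $\mu = \EX{X} = \sum_{i \in [n]} p_i$; if $\mu = 0$ both bounds are trivial, so assume $\mu > 0$. The only external ingredients are Markov's inequality and the elementary estimate $1 + y \le e^{y}$, valid for every real $y$; the bulk of the work is a handful of one-variable calculus inequalities at the very end.

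For the upper tail, I would fix a parameter $t > 0$. Since $x \mapsto e^{tx}$ is increasing, Markov's inequality gives $\Pro{X > (1+\eps)\mu} \le e^{-t(1+\eps)\mu}\,\EX{e^{tX}}$, and by independence together with $1 + y \le e^{y}$,
\begin{equation*}
  \EX{e^{tX}} = \prod_{i \in [n]} \EX{e^{tX_i}} = \prod_{i \in [n]} \bigl(1 + p_i(e^{t} - 1)\bigr) \le \exp\bigl(\mu\,(e^{t} - 1)\bigr),
\end{equation*}
so that $\Pro{X > (1+\eps)\mu} \le \exp\bigl(\mu(e^{t} - 1 - t(1+\eps))\bigr)$. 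I would then choose $t = \ln(1+\eps) > 0$, which minimizes the exponent and yields the clean bound $\bigl(e^{\eps}/(1+\eps)^{1+\eps}\bigr)^{\mu}$. It then remains to verify the purely one-variable inequality $\eps - (1+\eps)\ln(1+\eps) \le -\eps^{2}/3$ for $0 < \eps < 1$: put $h(\eps) = \eps - (1+\eps)\ln(1+\eps) + \eps^{2}/3$, note $h(0) = 0$ and $h'(\eps) = -\ln(1+\eps) + 2\eps/3$ with $h'(0) = 0$, and observe that $h''(\eps) = (2\eps - 1)/(3(1+\eps))$ changes sign only once, so $h'$ is first decreasing then increasing and, since $h'(1) = 2/3 - \ln 2 < 0$, stays $\le 0$ on $[0,1]$; hence $h$ is non-increasing and $h \le h(0) = 0$.

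For the lower tail the argument mirrors this but uses a negative parameter. For $t < 0$ one has $\Pro{X < (1-\eps)\mu} = \Pro{e^{tX} > e^{t(1-\eps)\mu}} \le e^{-t(1-\eps)\mu}\,\EX{e^{tX}}$, and the same moment-generating-function bound (which holds for all real $t$) gives $\Pro{X < (1-\eps)\mu} \le \exp\bigl(\mu(e^{t} - 1 - t(1-\eps))\bigr)$; choosing $t = \ln(1-\eps) < 0$ produces $\bigl(e^{-\eps}/(1-\eps)^{1-\eps}\bigr)^{\mu}$. One then reduces to the elementary inequality $-\eps - (1-\eps)\ln(1-\eps) \le -\eps^{2}/2$ on $(0,1)$, which drops out of the power series $-\eps - (1-\eps)\ln(1-\eps) = -\sum_{k \ge 2} \eps^{k}/(k(k-1)) \le -\eps^{2}/2$ (all omitted terms being positive). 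I do not expect a genuine obstacle: everything past the Markov/independence step is mechanical, and the only places demanding care are the sign bookkeeping for the lower tail and these two elementary monotonicity/series estimates that convert the optimized exponents into $-\eps^{2}\mu/3$ and $-\eps^{2}\mu/2$.
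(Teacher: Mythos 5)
Your proof is correct. Note, however, that the paper does not actually prove this statement: it is imported verbatim as Theorem~1.1 of Dubhashi and Panconesi~\cite{dp-cmara-12}, so there is no in-paper argument to compare against. What you have supplied is the standard self-contained Chernoff--Hoeffding derivation via the moment generating function, with the correct optimizing choices $t = \ln(1+\eps)$ and $t = \ln(1-\eps)$, and the two calculus/series reductions $\eps - (1+\eps)\ln(1+\eps) \le -\eps^2/3$ and $-\eps - (1-\eps)\ln(1-\eps) = -\sum_{k\ge 2}\eps^k/(k(k-1)) \le -\eps^2/2$ both check out (in the first, $h''$ changes sign once and $h'(1) = 2/3 - \ln 2 < 0$ does give $h' \le 0$ on $[0,1]$). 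The only thing you lose relative to citing the textbook is brevity; the only thing you gain is self-containment.
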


We use this bound multiple times in a similar way, which is captured
by the following direct corollary.

\begin{corollary}
  \label{cor:chernoff-hoeffding-asymptotic}
  Let $X_1, \dots, X_n$, and $X$ be as in
  Theorem~\ref{thm:chernoff-hoeffding}.  Let $f(n) \in \omega(\log n)$
  be an upper or lower bound for $\EX{X}$. With overwhelming probability, $X \in O(f(n))$ and $X \in \Omega(f(n))$,
  respectively.
\end{corollary}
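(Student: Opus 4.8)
The plan is to obtain the corollary as a direct consequence of Theorem~\ref{thm:chernoff-hoeffding}: fix a constant relative deviation, say $\eps = 1/2$, apply the appropriate tail bound, and then use the hypothesis $f(n) \in \omega(\log n)$ to turn the resulting $\exp(-\Theta(\EX{X}))$ failure probability into $n^{-c}$.

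First I would handle the lower-bound reading. Suppose $f(n)$ is a lower bound for $\mu := \EX{X}$, so $\mu \ge f(n)$. The lower tail of Theorem~\ref{thm:chernoff-hoeffding} with $\eps = 1/2$ gives $\Pro{X < \mu/2} \le \exp(-\mu/8) \le \exp(-f(n)/8)$, and on the complementary event $X \ge \mu/2 \ge f(n)/2 \in \Omega(f(n))$. Since $f(n) \in \omega(\log n)$, for any constant $c$ and all large enough $n$ we have $f(n)/8 \ge c\ln n$, so the failure probability is at most $n^{-c}$.

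Next, the upper-bound reading: suppose $f(n)$ is an upper bound for $\mu$, so $\mu \le f(n)$, and we want $X \in O(f(n))$ with probability $1 - n^{-c}$. If $f(n) > n$ this is immediate since $X \le n$ always, so assume $f(n) \le n$. When $\mu = \Theta(f(n))$ — which is the only regime occurring in the paper's applications — the upper tail with $\eps = 1/2$ yields $\Pro{X > 3\mu/2} \le \exp(-\mu/12) \le \exp(-\Omega(f(n)))$, and on the complement $X \le 3\mu/2 = O(f(n))$; the failure probability is at most $n^{-c}$ for large $n$ exactly as above. In the remaining regime $\mu = o(f(n))$, the $0 < \eps < 1$ version of the bound no longer reaches scale $f(n)$, and here I would invoke the standard large-deviation tail $\Pro{X \ge t} \le 2^{-t}$ for $t \ge 2e\mu$ (which follows from the same moment-generating-function estimate that underlies Theorem~\ref{thm:chernoff-hoeffding}); taking $t = f(n) \ge 2e\mu$ for large $n$ gives $\Pro{X \ge f(n)} \le 2^{-f(n)} \le n^{-c}$, again because $f(n) \in \omega(\log n)$.

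The single delicate point is this last regime of the upper-bound direction, where the stated $0<\eps<1$ Chernoff bound is not by itself strong enough, so one has to fall back on the exponential-decay large-deviation tail; in every other case the statement is a one-line substitution of $\eps = 1/2$ into Theorem~\ref{thm:chernoff-hoeffding} followed by the observation that an $\omega(\log n)$ term in the exponent dominates $c\ln n$. I would present the proof with exactly this case split, keeping the bounds explicit only where needed.
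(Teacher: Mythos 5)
Your proof is correct, but for the upper-bound direction it takes a genuinely different route from the paper's. The paper handles $\EX{X}\le f(n)$ with a single coupling/domination argument: it constructs a sum $X'$ of independent $\{0,1\}$ variables with $\EX{X'}=f(n)$ exactly and $X\le X'$ pointwise (by boosting the Bernoulli success probabilities and, if necessary, adjoining extra Bernoullis), then applies Theorem~\ref{thm:chernoff-hoeffding} to $X'$ with a fixed $\eps$. That sidesteps the regime split entirely and stays strictly inside the stated $0<\eps<1$ bound. You instead split on the size of $\mu=\EX{X}$ relative to $f(n)$: when $\mu$ is comparable to $f(n)$ the $\eps=1/2$ upper tail suffices, and when $\mu$ is much smaller you import the large-deviation tail $\Pro{X\ge t}\le 2^{-t}$ for $t\ge 2e\mu$ (equivalently $\Pro{X\ge t}\le(e\mu/t)^t$), which is standard but not contained in Theorem~\ref{thm:chernoff-hoeffding} as quoted. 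Both are sound; the paper's approach buys a shorter, case-free argument using only the quoted theorem, while yours avoids constructing a coupling at the cost of appealing to an additional tail inequality.

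Two small points of polish. First, the parenthetical remark that $\mu=\Theta(f(n))$ is ``the only regime occurring in the paper's applications'' is irrelevant: the corollary must be proved as stated, and indeed your $\mu=o(f(n))$ case is exactly what covers the remaining possibilities, so the parenthetical is a distraction rather than a justification. Second, $\mu=\Theta(f(n))$ versus $\mu=o(f(n))$ is not literally a dichotomy (a sequence can oscillate between regimes); the clean way to phrase the split is per $n$ with a fixed constant threshold, e.g.\ whether $\mu\ge f(n)/(4e)$ or not, after which your two bounds apply verbatim. Your lower-bound argument matches the paper's essentially line for line.
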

\begin{proof}
  Assume $f(n)$ is a lower bound, i.e., $f(n) \le \EX{X}$.  We show
  $X \in \Omega(f(n))$ with the desired probability.  By the second
  inequality of Theorem~\ref{thm:chernoff-hoeffding}, we have
  \begin{align*}
    \Pro{X < (1 - \eps)\cdot f(n)}
    &\le \Pro{X < (1 - \eps)\cdot\EX{X}}\\
    &\le \exp\left(-\frac{\eps^2}{2} \EX{X}\right)\\
    &\le \exp\left(-\frac{\eps^2}{2} f(n)\right) = n^{-\omega(1)},
  \end{align*}
  where the last equality is due to the fact that
  $f(n) \in \omega(\log n)$.  Thus, for any constant $c$, this
  probability is below $n^{-c}$ for sufficiently large $n$.  Hence,
  for any constant $\eps \in (0, 1)$, $X \ge (1 - \eps)\cdot f(n)$ with
  probability $1 - n^{-c}$.

  Assume $f(n)$ is an upper bound, i.e., $\EX{X} \le f(n)$.  Let $X'$
  be a random variable with $f(n) = \EX{X'}$ such that $X'$ dominates
  $X$ in the sense that $X \le X'$ for every outcome.  We show that
  $X' \in O(f(n))$ with probability $1 - n^{-c}$, which implies
  $X \in O(f(n))$ with at least the same probability.  The first
  inequality of Theorem~\ref{thm:chernoff-hoeffding} yields
  \begin{align*}
    \Pro{X' > (1 + \eps)\cdot f(n)}
    &= \Pro{X' > (1 + \eps)\cdot \EX{X'}}\\
    &\le \exp\left(-\frac{\eps^2}{3} \EX{X'}\right)\\
    &= \exp\left(-\frac{\eps^2}{3} f(n)\right) = n^{-\omega(1)}.
  \end{align*}
  As before, the last inequality comes from the fact that
  $f(n) \in \omega(\log n)$.  All remaining arguments are as in the
  case where $f(n)$ was a lower bound.
\end{proof}

\subsubsection{Method of Typical Bounded Differences}
\label{sec:meth-typic-bound}

\begin{theorem}[Theorem 2 in \cite{w-mtbd-16}]
  \label{thm:typical-bounded-differences}
  Let $X = (X_1, \dots, X_N)$ be a family of independent random
  variables with $X_k$ taking values in $\Lambda_k$ and let
  $\Lambda = \prod_{j \in [N]} \Lambda_j$.  Let $\Gamma \subseteq \Lambda$
  be an event and assume that the function
  $f \colon \Lambda \to \mathbb R$ satisfies the following
  \emph{typical Lipschitz condition}.
  \begin{itemize}[(TL)]
  \item There are numbers $(c_k)_{k \in [N]}$ and $(d_k)_{k \in [N]}$
    with $c_k \le d_k$ such that whenever $x, \tilde x \in \Lambda$
    differ only in the $k$th coordinate, we have
    \begin{equation*}
      |f(x) - f(\tilde x)| \le
      \begin{cases}
        c_k & \text{ if } x \in \Gamma,\\
        d_k & \text{ otherwise.}
      \end{cases}
    \end{equation*}
  \end{itemize}
  For any numbers $(\gamma_k)_{k \in [N]}$ with $\gamma_k \in (0, 1]$,
  there is an event
  $\mathcal B = \mathcal B(\Gamma, (\gamma_k)_{k \in [N]})$ satisfying
  \begin{equation*}
    \Pro{\mathcal B} \le \sum_{k \in [N]} \gamma_k^{-1} \cdot \Pro{X
      \notin \Gamma} \; \text{ and }\; \neg\mathcal B \subseteq \Gamma,
  \end{equation*}
  such that for $\mu = \EX{f}$, $e_k = \gamma_k(d_k - c_k)$ and any
  $t \ge 0$, we have
  \begin{equation*}
    \Pro{f(X) \ge \mu + t \text{ and } \neg \mathcal B} \le \exp\left(
      - \frac{t^2}{2 \sum_{k \in [N]} (c_k + e_k)^2} \right).
  \end{equation*}
\end{theorem}

We derive the following corollary from this, which is more convenient
for our purpose and uses a notation more compatible with the rest of
the paper.

\begin{corollary}
  \label{cor:typical-bounded-differences}
  Let $X = (X_1, \dots, X_N) \in \Lambda$ be a family of independent
  random variables and let $\Gamma \subseteq \Lambda$ be an event with
  $\Pro{\Gamma} \ge 1 - N^{-c}$.  Moreover, let
  $f \colon \Lambda \to \mathbb R$ with $|f(X)| \le N^{c - 2}$ and let
  $(\Delta_i)_{i \in [N]} \in \Omega(1)$ be numbers such that for any two
  $x \in \Gamma$ and $\tilde x \in \Lambda$ that differ only in the
  $i$th coordinate, we have $|f(x) - f(\tilde x)| \le \Delta_i$.
  If $\sum_{i \in [N]} \Delta_i^2 \in O(\EX{f}^2 / \log^2 N)$ then
  $f(X) \in \Theta(\EX{f})$ holds with high probability.
\end{corollary}
\begin{proof}
  We want to apply Theorem~\ref{thm:typical-bounded-differences}.
  First note that $|f(X)| \le N^{c - 1}$ implies that $f$ satisfies
  the typical Lipschitz condition when setting $c_i = \Delta_i$ and
  $d_i = 2N^{c - 2}$ for every $i$.  We set $\gamma_i$ in
  Theorem~\ref{thm:typical-bounded-differences} to
  $\gamma_i = 1 / d_i$ yielding $e_i \le 1$.  Thus, we get the event
  $\mathcal B$ with
  \begin{align*}
    \Pro{\mathcal B}
    &\le \sum_{i \in [N]} \gamma_i^{-1} \cdot \Pro{\neg \Gamma}\\
    &= \sum_{i \in [N]} d_i \cdot \Pro{\neg \Gamma}\\
    &\le \sum_{i \in [N]} 2N^{c - 2} \cdot N^{-c}\\
    &= 2N^{-1} \in O(N^{-1}),
  \end{align*}
  such that
  \begin{equation*}
    \Pro{f(X) \ge \EX{f} + t \text{ and } \neg\mathcal B} \le \exp\left(
      - \frac{t^2}{2 \sum_{i \in [N]} (\Delta_i + e_i)^2} \right).
  \end{equation*}
  As $\Delta_i \in \Omega(1)$ and $e_i \le 1$, we get that the sum in the
  denominator is up to constants equal to
  $\sum_{i \in [N]} \Delta_i^2 \in O(\EX{f}^2 / \log^2 N)$, i.e., for
  sufficiently large $N$, there exists a positive constant $a$ such
  that
  \begin{equation*}
    \Pro{f(X) \ge \EX{f} + t \text{ and } \neg\mathcal B} \le \exp\left(
      - \frac{a \cdot t^2 \cdot \log^2 N}{\EX{f}^2 } \right).
  \end{equation*}
  Choosing $t = b\cdot \EX{f}$ for any positive constant $b$ yields an
  upper bound of $N^{-a b^2 \log N} \in O(N^{-1})$ for the
  probability.
  Thus, we obtain
  \begin{equation*}
    \Pro{f(x) \ge (1 + b)\EX{f}}
    \le \Pro{f(x) \ge (1 + b) \EX{f} \text{ and } \neg\mathcal
      B} + \Pro{\mathcal B} \in O(N^{-1}).
  \end{equation*}

  As already noted by Warnke~\cite{w-mtbd-16}, one can obtain the same
  bound as stated in Theorem~\ref{thm:typical-bounded-differences} for
  the opposite direction ($\Pro{f(x) \le \EX{f} - t}$) by using $-f$.
  The above argument works exactly the same for this case, yielding
  $\Pro{f(x) \le (1 - b) \EX{f}} \in O(N^{-1})$.  Note that for this
  direction it is crucial that we can choose $b$ to be an arbitrary
  positive constant.  This yields the claim that
  $f(x) \in \Theta(\EX{f})$ with high probability.
\end{proof}

} 

\end{document}